\newcommand{\ellone}{$\mathcal{L}_1$ }
\definecolor{lcolor}{rgb}{0.79,0.88,1.0}
\definecolor{bcolor}{RGB}{252,184,203}
\definecolor{ccolor}{RGB}{234,208,255}
\tikzstyle{sqblock} = [draw, fill=bcolor!20, rectangle, 
\tikzstyle{block} = [draw, fill=bcolor!20, rectangle, 
\tikzstyle{sum} = [draw, fill=bcolor!20, circle, node distance=1cm]
\tikzstyle{input} = [coordinate]
\tikzstyle{output} = [coordinate]
\tikzstyle{phantom} = [coordinate]
\newcommand{\dd}{\text{d}}
\newcommand{\diag}[1]{\text{diag}(#1)}
\newcommand{\norm}[1]{\left\| #1 \right\|}
\newcommand{\abs}[1]{\lvert #1 \rvert}
\newtheorem{assumption}{Assumption}
\newtheorem{proposition}{Proposition}
\newtheorem{remark}{Remark}
\newtheorem{theorem}{Theorem}
\DeclareMathOperator{\vect}{vec}
\newcommand*\titleheader[1]{\gdef\@titleheader{#1}}
  \let\st@red@title\@title
  \def\@title{%
  \vspace{-4em}\bgroup\normalfont\large\centering\@titleheader\par\egroup
    \vskip2.5em\st@red@title}
\title{\LARGE \bf
\bm{$\mathcal{L}_1$}Quad: 
\bm{$\mathcal{L}_1$} Adaptive Augmentation of Geometric Control for Agile Quadrotors with Performance Guarantees}
\author{Zhuohuan Wu$^\dagger$, Sheng Cheng$^\dagger$, Pan Zhao, Aditya Gahlawat, Kasey A. Ackerman,  \\ Arun Lakshmanan, Chengyu Yang, Jiahao Yu and Naira Hovakimyan
\thanks{*This work is supported by National Aeronautics and Space Administration (NASA) grant 80NSSC22M0070, National Science Foundation (NSF) under the RI grant \#2133656 and Air Force Office of Scientific Research.}
\thanks{$^\dagger$These authors contributed equally to this work.}
\thanks{Z.~Wu, S. Cheng, A. Gahlawat, K. A.~Ackerman, A.~Lakshmanan, C.~Yang, J.~Yu, and N. Hovakimyan are affiliated with
the Department of Mechanical Science and Engineering, University of Illinois Urbana-Champaign, IL 61801, USA.
        {\tt\small \{zw24, chengs, gahlawat, kaacker2, lakshma2, cy45, jiahaoy5, nhovakim\}@illinois.edu}}
\thanks{P. Zhao is affiliated with the
Department of Aerospace Engineering and Mechanics, University of Alabama, AL 35487, USA.   {\tt\small pan.zhao@ua.edu}}
}
\begin{document}
\bstctlcite{BSTcontrol}
\maketitle
\thispagestyle{plain}
\pagestyle{plain}

\begin{abstract}
Quadrotors that can operate predictably in the presence of imperfect model knowledge and external disturbances are crucial in safety-critical applications. 
We present \bm{$\mathcal{L}_1$}Quad, a control architecture that ensures uniformly bounded transient response of the quadrotor's uncertain dynamics on the special Euclidean group SE(3). 
By leveraging the geometric controller and the \bm{$\mathcal{L}_1$} adaptive controller, the \bm{$\mathcal{L}_1$}Quad architecture provides a theoretically justified framework for the design and analysis of quadrotor's tracking controller in the presence of nonlinear (time- and state-dependent) uncertainties on both the translational and rotational dynamics. 
In addition, we validate the performance of the \bm{$\mathcal{L}_1$}Quad architecture through extensive experiments\footnote{Video: \href{https://youtu.be/18-2OqTRJ50}{https://youtu.be/18-2OqTRJ50}; Code: \href{https://github.com/sigma-pi/L1Quad}{https://github.com/sigma-pi/L1Quad}} for eleven types of uncertainties across various trajectories. 
The results demonstrate that the \bm{$\mathcal{L}_1$}Quad can achieve consistently small tracking errors despite the uncertainties and disturbances and significantly outperforms existing state-of-the-art controllers.
\end{abstract}

\section{INTRODUCTION}\label{sec: introduction}
In recent years, unmanned aerial vehicles have seen an increased use across a wide range of applications. The quadrotor platform, in particular, has garnered interest due to its low cost, relatively simple mechanical structure, and dynamic capabilities. Controller synthesis for quadrotors is a challenging problem due to the unstable and underactuated nature of the dynamics. The challenges are exacerbated further due to uncertainties and disturbances (e.g., wind, payload sloshing, and system degradation), potentially leading to a loss of predictability and even stability~\cite{emran2018review}.

To design controllers for quadrotors' underactuated dynamics that evolve on the special Euclidean group, SE(3), early studies adopt the linearized quadrotor dynamics at the equilibrium point around hover state~\cite{castillo2004stabilization,bouabdallah2005towards}, which can lead to poor performance. Nonlinear controllers such as backstepping~\cite{madani2006backstepping,bouabdallah2005backstepping,labbadi2019robust} and feedback-linearization~\cite{lee2009feedback,voos2009nonlinear} have also been investigated. Since these methods employ Euler angles as the attitude representation that suffers from gimbal lock~\cite{hilkert2008inertially}, they cannot perform aggressive maneuvers. Quaternions-based attitude representation has also been investigated~\cite{fresk2013full}. However, this method may cause instability to quadrotor dynamics due to unwinding phenomenon~\cite{chaturvedi2011rigid,bhat2000topological}. To address this issue, geometric control theory~\cite{jurdjevic1997geometric} has been adopted to derive controllers for quadrotors in~\cite{mellinger2011minimum,lee2010geometric}. This type of controller uses rotation matrices to represent attitudes that intrinsically characterize the geometric properties of nonlinear manifolds, thereby completely avoiding singularities and reducing design complexities. Following this approach, the authors in~\cite{lee2010arxiv} established exponential stability for the quadrotor nominal dynamics on SE(3).

Unavoidable uncertainties and disturbances in real applications further complicate quadrotor controller design. Robust and adaptive control methods have been employed to safely operate quadrotors subject to uncertainties~\cite{antonelli2017adaptive,goodarzi2015geometric,dydek2012adaptive,cabecinhas2014nonlinear}. However, these methods assume uncertainties to be linearly dependent on known basis functions, which is restrictive. Disturbance-observer-based (DOB) methods can handle a broader class of uncertainties~\cite{besnard2012quadrotor,castillo2019disturbance,guo2022safety}, yet they often assume the disturbances are generated by an exogenous system with known parameters~\cite{chen2015disturbance}, which is also limited. Additionally, the state dependence of uncertainties is generally ignored in the theoretical analysis of DOB methods~\cite{chen2015disturbance}. The control architecture proposed by the authors in~\cite{ke2023uniform} estimates uncertainties that require numerical differentiation of noisy signals, and the estimation accuracy is not guaranteed. Moreover, an incremental nonlinear dynamic inversion control has been applied in~\cite{tal2020accurate} to compensate for aerodynamic drag in the quadrotor's high-speed flights. However, this sensor-based method requires the installation of extra sensors that are uncommon to typical quadrotor hardware.

Researchers have also investigated adopting machine learning (ML) tools to address uncertainties and disturbances in quadrotor control design. Early trials tried to employ DNNs to represent quadrotor uncertain dynamics and then use the learned models to synthesize linear quadratic regulator (LQR)~\cite{bansal2016learning} or model predictive controllers (MPC)~\cite{lambert2019low}. Recent studies use ML tools to model the uncertainties and then integrate them into conventional control methods~\cite{saviolo2023active,shi2019neural,o2022neural,chee2022knode,torrente2021data,huang2023datt}.
Despite the efforts, the data collection (e.g., it is challenging to acquire data in unknown environments) and the training process form a significant overhead before the ML tools can be deployed. Additionally, when ML-based control methodologies are used, it is hard to establish theoretical guarantees on the system's performance without placing conservative assumptions. For example, the authors in~\cite{o2022neural,shi2021neural,joshi2019deep} use the universal approximation theorem~\cite[Theorem 2.1]{hornik1989multilayer} to ensure the learned model can estimate a continuous function arbitrarily accurately. However, this theorem requires the states to remain in a compact set, which essentially assumes the quadrotor system is stable \textit{a priori}, and it is very restrictive. Furthermore, ML-based methods usually require extensive computational power and cannot be easily executed on standard quadrotor flight controllers.

In this paper, we present $\mathcal{L}_1$Quad, a quadrotor control architecture that uses the~\ellone adaptive control~\cite{hovakimyan2010L1} as an augmentation to compensate for the uncertainties. The use of~\ellone adaptive control is motivated by its architectural advantage of decoupling the estimation loop from the control loop, which allows the use of arbitrarily fast adaptation rates without sacrificing the robustness of the closed-loop system~\cite{hovakimyan2010L1}. The $\mathcal{L}_1$ adaptive control has been successfully validated on NASA's AirStar~\cite{gregory2010flight}, Calspan's Learjet~\cite{ackerman2016l1,ackerman2017evaluation}, and unmanned aerial vehicles~\cite{kaminer2010path,jafarnejadsani2017optimized}. The $\mathcal{L}_1$ adaptive controller design for quadrotors has been studied in~\cite{zuo2014augmented,huynh20141}, where Euler angles were adopted. To avoid the singularities in Euler-angles attitude representation, the authors of~\cite{kotaru2020geometric} have proposed an $\mathcal{L}_1$ adaptive control augmentation of the geometric controller that applies to the rotational dynamics only, which cannot compensate for any uncertainties in the translational dynamics.

$\mathcal{L}_1$Quad adopts the geometric controller~\cite{lee2010geometric} as the baseline controller and augments the \ellone adaptive controller~\cite{hovakimyan2010L1} to address the uncertainties and disturbances on both rotational and translational dynamics. $\mathcal{L}_1$Quad can handle uncertainties and disturbances that are nonlinearly dependent on both time and states, allowing the proposed architecture to compensate for a broader class of uncertainties. Additionally, in the proposed architecture, the estimation is decoupled from control, thereby allowing for arbitrarily fast adaptation without hurting the system's robustness. We show that with $\mathcal{L}_1$Quad, the estimation error is bounded and can be tuned arbitrarily small. Furthermore, the proposed architecture guarantees transient performance in terms of uniform bounds on the error between actual states and those of a nominal system. These uniform bounds characterize tubes centered around the desired states so that actual states are guaranteed to stay inside despite uncertainties. Moreover, we show that the size of these uniform bounds can be reduced by tuning the low-pass filter's bandwidth and the proposed architecture's sampling time.

We implement $\mathcal{L}_1$Quad on a quadrotor equipped with a Pixhawk flight controller running customized Ardupilot firmware. To demonstrate the performance of $\mathcal{L}_1$Quad, we test the proposed controller in experiments with numerous trajectories against various uncertainties, including injected uncertainty, slosh payload, chipped propeller, mixed propellers, ground effect, voltage drop, downwash, tunnel, and hanging off-center weights. 
$\mathcal{L}_1$Quad demonstrates consistently smaller tracking errors than several other  controllers in comparison~\cite{lee2010geometric,goodarzi2013geometric,goodarzi2015geometric}. We also conduct benchmark experiments with uncertainties of gradually changing magnitudes at different flying speeds to demonstrate the consistently superior performance of the $\mathcal{L}_1$Quad in dynamic environments. It is worth noting that \textit{only} one set of parameters of $\mathcal{L}_1$Quad is used for all the experiments above: {\em no retuning is needed.}

We summarize our contributions from two perspectives: i) from the perspective of $\mathcal{L}_1$ adaptive control of quadrotors~\cite{jafarnejadsani2017optimized,zuo2014augmented,huynh20141,kotaru2020geometric}, this is the first paper that considers nonlinear reference systems on the SE(3) manifold, thus allowing to capture -- in the problem formulation -- more challenging trajectories for tracking and hence the broader class of uncertainties. The earlier papers considered only linearized models or rotational dynamics on SO(3), and thus, they had limited operational envelopes and dealt with a limited class of uncertainties. 
The extension to nonlinear reference systems, from the perspective of $\mathcal{L}_1$ adaptive control theory, was analyzed in~\cite{wang2012l1,lakshmanan2020safe} with the projection operator as the adaptation law. However, since the projection operator-based adaptation law suffers from numerical instability~\cite{campbell2010implementation}, the current paper employs the piecewise-constant adaptation law to alleviate this issue. The modification was not trivial and required the development of Proposition~\ref{prop: estimation error bound} in the current paper.
ii) From the perspective of quadrotor control design under uncertainties, we observe that the existing papers~\cite{o2022neural,bisheban2020geometric,tal2020accurate,torrente2021data,shi2019neural,wang2023neural,saviolo2023active} have not demonstrated the capability of handling multiple classes of uncertainties \textbf{simultaneously by a single design};  neither they have developed stability and performance guarantees. These classes include external disturbances, actuation-induced uncertainties, and model mismatch, which cover the majority of uncertainties and disturbances experienced by a quadrotor.
Notice that in the current paper, \textbf{with a single set of control parameters}, $\mathcal{L}_1$Quad consistently demonstrates favorable flight performance under eleven different types of uncertainties across various agile trajectories without retuning and without modeling or enforcing parametric structure on the uncertainties. Moreover, we conduct theoretical analysis directly on the SE(3) manifold and provide tunable uniform performance bounds, which guarantee both the transient and steady-state performance.

This article significantly extends the work in~\cite{wu20221}, which presented preliminary results on a Parrot Mambo quadrotor. Key additions include the performance analysis in the second half of Section~\ref{sec: geometric controller} and the entirety of Section~\ref{sec: performance analysis}. Moreover, we provide a more elaborate description and implementation procedure of this architecture, as well as new experimental results. These experiments are conducted on a custom-built quadrotor platform with a Pixhawk flight controller, enabling validation of the proposed controller under a broader spectrum of uncertainties and disturbances when tracking more agile trajectories. This provides a more comprehensive evaluation of the proposed control method.
 
The remainder of the paper is organized as follows: Section~\ref{sec: geometric controller} introduces the background on the quadrotor dynamics, as well as the design and analysis of the geometric controller. Section~\ref{sec: uncertainty characterization} explains the modeling of uncertainties. Section~\ref{sec:geometric+L1} shows the $\mathcal{L}_1$ adaptive augmentation of the geometric controller. Section~\ref{sec: performance analysis} provides performance guarantees of the proposed control framework. Section~\ref{sec:experiments} shows the experimental results conducted on a custom-built quadrotor. Finally, Section~\ref{sec:conclusions} summarizes the paper and discusses the limitations of the proposed method and future work.  

The following notations are used throughout the paper:
We denote the vectorization of a matrix $A$ by $\vect(A)$, which is obtained by stacking the columns of the matrix $A$ on top of one another. We use $\mathbb{N}$ and $\mathbb{R}$ to represent natural and real numbers, respectively. The \textit{wedge} operator $\cdot^{\wedge}:\mathbb{R}^3 \rightarrow \mathfrak{so}(3)$ denotes the mapping to the space of skew-symmetric matrices. The \emph{vee} operator $\cdot^\vee$ is the inverse of the \textit{wedge} operator which maps $\mathfrak{so}(3)$ to $\mathbb{R}^3$. We use $\text{tr}(A)$ to denote the trace of a matrix $A$. We use $(v)_i$ to represent the $i^{th}$ element of a vector $v$. The largest and smallest eigenvalue for a square matrix $A$ are denoted by $\lambda_M(A)$ and $\lambda_m(A)$, respectively. The notation $A \succ 0$ means a square matrix $A$ is positive definite. We use $\diag{A,\dots,B}$ to denote the block diagonal matrix comprised of matrices $A,\dots,B$. 

\section{Quadrotor Dynamics and Geometric controller}\label{sec: geometric controller}
\begin{figure}
\centering
\begin{tikzpicture}
    \node (img) {\includegraphics[width=0.97\columnwidth]{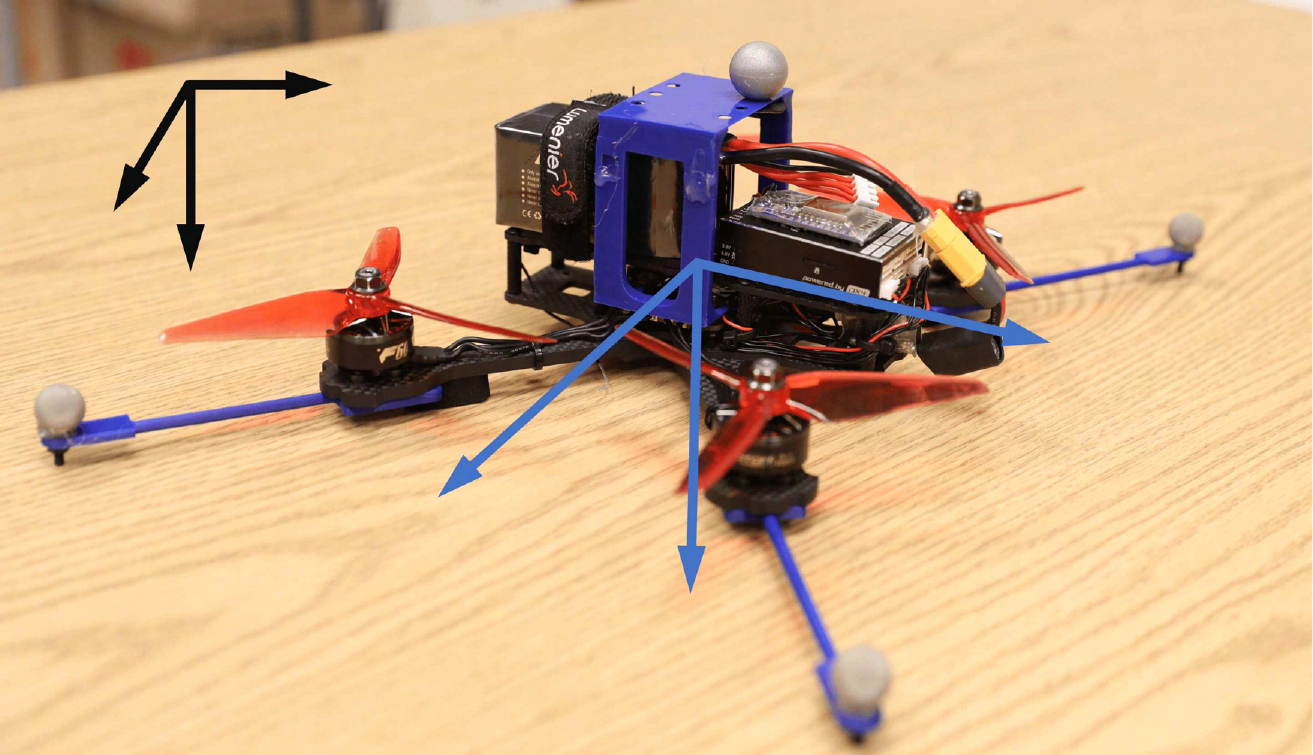}};
    \coordinate (b1) at (2.8,0.2);
    \coordinate (b2) at (-1.4,-1.1);
    \coordinate (b3) at (0.3,-1.7);
    \node[] at (b1) {$b_1$};
    \node[] at (b2) {$b_2$};
    \node[] at (b3) {$b_3$};
    \coordinate (i1) at (-1.9,1.9);
    \coordinate (i2) at (-3.5,0.9);
    \coordinate (i3) at (-3.0,0.5);
    \node[] at (i1) {$i_1$};
    \node[] at (i2) {$i_2$};
    \node[] at (i3) {$i_3$};
\end{tikzpicture}
    \caption{The quadrotor and reference frames.}
    \label{fig:quadrotor with reference frames}\vspace{-5mm}
\end{figure}

We choose an inertial frame and a body-fixed frame, which are spanned by unit vectors $\{ i_1,i_2,i_3 \}$ and $\{ b_1,b_2,b_3 \}$ in north-east-down directions, respectively, shown in Fig~\ref{fig:quadrotor with reference frames}. The origin of the body-fixed frame is located at the center of mass (COM) of the quadrotor, which is assumed to be its geometric center.
The configuration space of the quadrotor is defined by the location of its COM and the attitude with respect to the inertial frame, i.e., the configuration manifold is the special Euclidean group $SE(3)$.
By definition, $b_i = Re_i$, where $R \in SO(3)$ is a rotation matrix, and $b_i, \, e_i \in \mathbb{R}^3$, $i \in \{1,2,3\}$, are the $i^{th}$ body-fixed axis and the $i^{th}$ canonical basis vector, respectively. 
The dynamics of quadrotor in the absence of uncertainties are given by
\begin{subequations}\label{equ:quadrotor dynamics}
\begin{align}
    \dot{p}= & v,\label{eq:position integral} 
    \\
    \dot{v}= & ge_3-\frac{f_b}{m}Re_3,\label{equ:translational dynamics}
    \\
    \dot{R} = & R \Omega^{\wedge},\label{eq:angle integral}
    \\
     \dot{\Omega}= & J^{-1}(M_b - \Omega \times J\Omega), \label{equ:rotational dynamics}
\end{align}
\end{subequations}
where $p \in \mathbb{R}^3$ and $v \in \mathbb{R}^3$ are the position and velocity of the quadrotor's COM in the inertial frame, respectively, ${\Omega \in \mathbb{R}^3}$ is the angular velocity in the body-fixed frame, $g$ is the gravitational acceleration, $m$ is the vehicle mass, $J \in \mathbb{R}^{3 \times 3}$ is the moment of inertia matrix calculated in the body-fixed frame, $f_b$ is the collective thrust, and $M_b \in \mathbb{R}^3$ is the moment in the body-fixed frame. Note that we suppress temporal dependencies to maintain clarity of exposition unless required.

We do not consider the motor dynamics and the propellers' aerodynamic effects. Therefore, the thrust $f_b$ and moment $M_b$ are assumed to be linear in the squared motor speeds~\cite{mellinger2011minimum}. We choose $f_b$ and $M_b$ as the control in~\eqref{equ:quadrotor dynamics}, and they are achieved via motor mixing (see \cite{mellinger2011minimum} for details in computation).
The design of the geometric controller follows \cite{lee2010geometric,mellinger2011minimum}, where the goal is to have the quadrotor follow prescribed trajectory $p_d(t) \in \mathbb{R}^3$ and yaw $\psi_d(t)$ for time $t$ in a prescribed interval $[0,t_f]$. Towards this end, the translational motion is controlled by the desired thrust $f_b$:
\begin{equation}\label{eq:thrust control}
    f_b = -F_d \cdot (Re_3),
\end{equation}
which is obtained by projecting the desired force $F_d$ to the body-fixed $z$-axis $Re_3$. The desired force $F_d \in \mathbb{R}^3$ is computed via $F_d = -K_p e_p-K_v e_v - mg e_3 + m \ddot{p}_d$ with $K_p,K_v \in \mathbb{R}^{3 \times 3}$ being user-selected positive-definite gain matrices and $e_p = p-p_d$ and $e_v = \dot{p}-\dot{p}_d$ standing for the position and velocity errors, respectively.
The rotational motion is controlled by the desired moment
\begin{align}\label{eq:torque control}
    M_b =& -K_R e_R - K_\Omega e_\Omega + \Omega \times J \Omega  \nonumber\\
     &- J( \Omega^{\wedge} R^\top R_d \Omega_d-R^\top R_d \dot{\Omega}_d),
\end{align}
where $K_R,$ $K_{\Omega} \in \mathbb{R}^{3 \times 3}$ are user-selected positive-definite gain matrices; $R_d$, $\Omega_d$, and $\dot{\Omega}_d$ are the desired rotation matrix, desired angular velocity, and desired angular velocity derivative, respectively; $e_R = ( R_d^\top R - R^\top R_d)^{\vee}/2$ and $e_{\Omega} = \Omega -R^\top R_d \Omega_d$ are the rotation error and angular velocity error, respectively. The derivations of $R_d$, $\Omega_d$, and $\dot{\Omega}_d$ are omitted for simplicity; see~\cite{lee2010arxiv} for details. 

Before we revisit the performance analysis of the geometric controller, we first define the state $x \in \mathbb{R}^{3} \times \mathbb{R}^3 \times SO(3) \times \mathbb{R}^3 $ by $x(t) = \left(p(t),v(t),R(t),\Omega(t)\right)$, and the desired state $x_d(t) = (p_d(t),v_d(t),R_d(t),\Omega_d(t))$ which satisfies the nominal dynamics in~\eqref{equ:quadrotor dynamics}.  Further, let
\begin{align}\label{equ: norm of tuples}
    &d(x(t),x_d(t)) \nonumber \\ &\triangleq \norm{\begin{bmatrix}
e_p(t)^\top& e_v(t)^\top &e_R(t)^\top & e_\Omega(t)^\top
\end{bmatrix}^\top},
\end{align}
which describes the distance between the true states and the desired states.
\begin{proposition}\label{prop:geometricproposition}
Consider the thrust magnitude $f_b$ and moment vector $M_b$ defined by equations~\eqref{eq:thrust control} and~\eqref{eq:torque control}, respectively. Suppose the initial condition $x(0) = \left(p(0),v(0),R(0),\Omega(0)\right)$ satisfies 
\begin{equation}\label{equ: condition on psi0}
    \Psi(R(0),R_d(0)) < 1,
\end{equation}
for $\Psi(R,R_d)\triangleq \text{tr}(I-R_d^\top R)/2$ being the rotation error function.
Define $W_1$, $W_{12}$, $W_2 \in \mathbb{R}^{2 \times 2}$ as follows:
\begin{equation*}\label{equ: define W1}
W_1 = \left[ \begin{smallmatrix} \frac{c_1 K_p}{m}(1-\alpha) &  -\frac{c_1 K_v}{2m}(1+\alpha)-\frac{K_p \alpha}{2}\\  -\frac{c_1 K_v}{2m}(1+\alpha)-\frac{K_p \alpha}{2} & (K_v(1-\alpha)-c_1) \end{smallmatrix} \right],
\end{equation*}
\[
W_{12} = \left[ \begin{smallmatrix}  \frac{c_1}{m}H & 0\\ H &  0\end{smallmatrix} \right], \quad W_2 = \left[ \begin{smallmatrix}  \frac{c_2 K_R}{\lambda_M(J)} & -\frac{c_2 K_\Omega}{2 \lambda_m(J)} \\ -\frac{c_2 K_\Omega}{2 \lambda_m(J)} & K_\Omega -c_2 \end{smallmatrix} \right],
\]
and define $M_{11},M_{12},M_{21},M_{22} \in \mathbb{R}^{2 \times 2}$ as
\begin{align*}
    M_{11}=&\frac{1}{2} \begin{bmatrix} Kp & -c_1 \\ -c_1 & m \end{bmatrix}, & 
    M_{12}=&\frac{1}{2} \begin{bmatrix} Kp & c_1 \\ c_1 & m \end{bmatrix}, \\
    M_{21}=&\frac{1}{2} \begin{bmatrix} K_R & -c_2 \\-c_2 & \lambda_m(J) \end{bmatrix}, & M_{22} =& \frac{1}{2} \begin{bmatrix}
        \frac{2K_R}{2-\psi_1} & c_2 \\c_2 & \lambda_M(J)
    \end{bmatrix},
\end{align*}
where $\psi_1$ is such that $ \Psi \left(R(0),R_d(0)\right) < \psi_1 < 1$, $H > \norm{-mge_3 + m \Ddot{x}_d}$, and $\alpha = \sqrt{\psi_1(2-\psi_1)}$. 
If we choose positive constants $K_p$, $K_v$, $K_{\Omega}$, $K_R$, $c_1$, $c_2$ such that $M_{11}$, $M_{21}$, $M_{12}$, $M_{22}$, $W_1$, $W_2$ are positive definite matrices, and
\begin{equation}\label{equ:definition of W}
W = 
\begin{bmatrix}
W_1 & -\frac{W_{12}}{2} \\ -\frac{W_{12}^\top}{2} & W_2
\end{bmatrix} \succ 0,
\end{equation}
then the equilibrium  $\left( e_p, e_v, e_R, e_\Omega \right) = \left( 0,0,0,0  \right)$ is exponentially stable for any initial condition satisfying~\eqref{equ: condition on psi0}, and
\begin{equation}\label{equ: region of attraction}
    \norm{e_\Omega(0)}^2 < \frac{2}{\lambda_M(J)}K_R(\psi_1 - \Psi(R(0),R_d(0))).
\end{equation}
Particularly, the positive definite function
\begin{align}\label{equ:lyapunovforcomplete}
&V(e_{p}(t),e_{v}(t),e_{R}(t),e_{\Omega}(t)) \nonumber
\\
& \begin{aligned}[t]= &\frac{1}{2}K_p \norm{e_p(t)}^2 +   \frac{1}{2}e_\Omega(t) \cdot Je_\Omega(t) + \frac{1}{2}m \norm{e_v(t)}^2 \nonumber 
\\  
&+ c_1 e_p(t)\cdot e_v(t) + c_2 e_R(t) \cdot e_\Omega(t)\end{aligned}
\\ 
&\hspace{2.8mm}+ K_R \Psi (R(t),R_d(t)) 
\end{align}
is the Lyapunov function associated with the error dynamics, as it satisfies
\begin{equation}\label{equ:lyapunovdotsmallerthanlyapunov}
    \dot{V} \leq -\beta V,
\end{equation}
and
\begin{equation}\label{equ: lyapunovboundfinal}
    \underline{\gamma}d(x,x_d)^2 \leq V \leq \overline{\gamma}d(x,x_d)^2,
\end{equation}
where
\begin{subequations}
\begin{align} 
    \beta \leq& \frac{\lambda_m(W)}{\lambda_M \left( \diag{M_{12} , M_{22}} \right)},\label{equ: beta condition} \\
    \underline{\gamma} \triangleq& \min(\lambda_m(M_{11}),\lambda_m(M_{21})),\label{equ:definition of gamma lower bar}\\
\overline{\gamma} \triangleq &\max\left(\lambda_M(M_{12}\right),\lambda_M\left(M_{22})\right)\label{equ:definition of gamma upper bar}.
\end{align}
\end{subequations}
\end{proposition}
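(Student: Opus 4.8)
\emph{The plan} is to use the candidate Lyapunov function $V$ in~\eqref{equ:lyapunovforcomplete} and establish, in turn, the quadratic sandwich bounds~\eqref{equ: lyapunovboundfinal}, the dissipation inequality~\eqref{equ:lyapunovdotsmallerthanlyapunov}, and finally exponential convergence inside the stated region of attraction. I would split $V$ into its translational part $V_1 = \frac{1}{2}K_p\norm{e_p}^2 + \frac{1}{2}m\norm{e_v}^2 + c_1 e_p\cdot e_v$ and its rotational part $V_2 = \frac{1}{2}e_\Omega\cdot Je_\Omega + K_R\Psi(R,R_d) + c_2 e_R\cdot e_\Omega$, and analyze each against the reduced error coordinates $z_1 = [\,\norm{e_p}\ \norm{e_v}\,]^\top$ and $z_2 = [\,\norm{e_R}\ \norm{e_\Omega}\,]^\top$.

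For the sandwich bounds I would apply Cauchy--Schwarz to the cross terms ($\abs{e_p\cdot e_v}\le\norm{e_p}\norm{e_v}$ and $\abs{e_R\cdot e_\Omega}\le\norm{e_R}\norm{e_\Omega}$), the eigenvalue bounds $\lambda_m(J)\norm{e_\Omega}^2 \le e_\Omega\cdot Je_\Omega \le \lambda_M(J)\norm{e_\Omega}^2$, and the key estimate $\tfrac{1}{2}\norm{e_R}^2 \le \Psi(R,R_d) \le \tfrac{1}{2-\psi_1}\norm{e_R}^2$, which holds whenever $\Psi<\psi_1<1$. These yield $z_1^\top M_{11}z_1 \le V_1 \le z_1^\top M_{12}z_1$ and $z_2^\top M_{21}z_2 \le V_2 \le z_2^\top M_{22}z_2$; summing and using $d(x,x_d)^2 = \norm{z_1}^2+\norm{z_2}^2$ gives~\eqref{equ: lyapunovboundfinal} with $\underline{\gamma},\overline{\gamma}$ as in~\eqref{equ:definition of gamma lower bar}--\eqref{equ:definition of gamma upper bar}.

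The core of the argument is differentiating $V$ along the closed-loop error dynamics. I would first derive those dynamics: $\dot e_p = e_v$, $\dot e_R = E(R,R_d)e_\Omega$ with $\norm{E}\le 1$, $\dot\Psi = e_R\cdot e_\Omega$, and---after substituting the moment law~\eqref{eq:torque control} so that the gyroscopic and feedforward terms cancel---$J\dot e_\Omega = -K_R e_R - K_\Omega e_\Omega$. The translational channel is more delicate: substituting the thrust law~\eqref{eq:thrust control} gives $m\dot e_v = -K_p e_p - K_v e_v + X$, where $X$ measures the mismatch between the realized thrust direction $Re_3$ and the commanded force $F_d$, and $\norm{X}$ can be bounded by a term proportional to $\norm{e_R}$ with coefficient controlled by $H > \norm{-mge_3 + m\ddot{p}_d}$. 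This bound is exactly what couples the two channels and produces the off-diagonal block $W_{12}$. Collecting the terms of $\dot V_1$ and $\dot V_2$ and applying the above estimates, I expect $\dot V \le -z_1^\top W_1 z_1 + z_1^\top W_{12}z_2 - z_2^\top W_2 z_2 = -[\,z_1^\top\ z_2^\top\,]\,W\,[\,z_1^\top\ z_2^\top\,]^\top \le -\lambda_m(W)\,d(x,x_d)^2$, invoking $W\succ 0$ from~\eqref{equ:definition of W}.

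The remaining steps are bookkeeping. Combining $\dot V \le -\lambda_m(W)d(x,x_d)^2$ with the upper sandwich bound $V \le \lambda_M(\diag{M_{12},M_{22}})\,d(x,x_d)^2$ produces~\eqref{equ:lyapunovdotsmallerthanlyapunov} with $\beta$ as in~\eqref{equ: beta condition}; the comparison lemma then gives $V(t)\le V(0)e^{-\beta t}$, and the lower sandwich bound converts this into exponential decay of $d(x,x_d)$. Finally, conditions~\eqref{equ: condition on psi0} and~\eqref{equ: region of attraction} are precisely what keep the trajectory inside $\{\Psi<\psi_1\}$: the bound on $\norm{e_\Omega(0)}^2$ forces $V_2(0)$ below the level at which $\Psi$ could reach $\psi_1$, so the estimate $\Psi \le \tfrac{1}{2-\psi_1}\norm{e_R}^2$ stays valid for all $t$ and the derivative analysis remains self-consistent. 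I expect the derivation and careful bounding of the coupling term $X$---together with checking that $W\succ 0$ is compatible with positive-definiteness of the $M_{ij}$---to be the main obstacle; everything else is routine manipulation.
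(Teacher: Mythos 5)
Your overall route is the same as the paper's (Appendix A): split $V=V_1+V_2$ into translational and rotational parts, sandwich each via Cauchy--Schwarz, $\lambda_m(J)\norm{e_\Omega}^2 \le e_\Omega\cdot Je_\Omega \le \lambda_M(J)\norm{e_\Omega}^2$, and $\tfrac12\norm{e_R}^2\le\Psi\le\tfrac{1}{2-\psi_1}\norm{e_R}^2$ to obtain the $M_{ij}$ bounds, then establish $\dot V \le -z_1^\top W_1 z_1 + z_1^\top W_{12}z_2 - z_2^\top W_2 z_2$ and finish with $\beta \le \lambda_m(W)/\lambda_M\left(\diag{M_{12},M_{22}}\right)$. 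Your rotational channel ($J\dot e_\Omega = -K_Re_R - K_\Omega e_\Omega$ after the gyroscopic/feedforward cancellation, $\dot e_R = C(R_d^\top R)e_\Omega$ with $\norm{C(R_d^\top R)}_2\le 1$), the comparison-lemma bookkeeping, and the region-of-attraction argument (the bound on $\norm{e_\Omega(0)}$ keeping $\Psi<\psi_1$ so the estimates remain self-consistent) all match the paper.

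The one genuine gap sits at exactly the step you yourself flag as the main obstacle: your stated bound on $X$ is too weak and, read literally, false. The correct estimate is $\norm{X}\le \left(K_p\norm{e_p}+K_v\norm{e_v}+H\right)\norm{e_R}$ --- the coefficient is state-dependent, not ``controlled by $H$.'' If you had only $\norm{X}\le \mathrm{const}\cdot H\norm{e_R}$, the entire coupling would route into $W_{12}$ and you would arrive at a $W_1$ \emph{without} its $\alpha$-dependent entries, i.e., at a different inequality than the one in the statement. The missing idea --- and precisely where the paper's rewrite departs from the original proof of Lee et al., which instead restricts the analysis to a compact domain to bound $X$ --- is to use $\norm{e_R}\le\alpha=\sqrt{\psi_1(2-\psi_1)}<1$ (valid while $\Psi\le\psi_1$, since $\norm{e_R}^2\le\Psi(2-\Psi)$) to absorb the state-dependent portion $\left(K_p\norm{e_p}+K_v\norm{e_v}\right)\norm{e_R}\left(\tfrac{c_1}{m}\norm{e_p}+\norm{e_v}\right)$ as a quadratic form in $z_1$. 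That absorption is what degrades the diagonal of $W_1$ by the factor $(1-\alpha)$ and inflates its off-diagonal entry to $-\tfrac{c_1K_v}{2m}(1+\alpha)-\tfrac{K_p\alpha}{2}$, leaving only the remainder $H\norm{e_R}\left(\tfrac{c_1}{m}\norm{e_p}+\norm{e_v}\right)$ to populate $W_{12}$ (whose entries are $\tfrac{c_1}{m}H$ and $H$; the ``$B$'' in the statement is inherited notation for this bound). Avoiding the compact-domain restriction via $\alpha$ is the entire reason the paper re-proves the proposition rather than citing it; once you insert this step, your plan closes and is otherwise the paper's proof.
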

The details of the error dynamics and the proof of this proposition can be found in Appendix~\ref{apdx: proof of geometric}. Note that the proof of Proposition~\ref{prop:geometricproposition} relaxes a simplifying assumption in~\cite[Appendix D, Section (c)]{lee2010arxiv}.

\section{Uncertainty Characterization}\label{sec: uncertainty characterization}
The nominal dynamics \eqref{equ:quadrotor dynamics} provide a description of the quadrotor's motion in an ideal case. However, in reality, a quadrotor's performance is affected by uncertainties and disturbances, such as propellers' aerodynamic effects, ground effects, etc. Our approach lumps the uncertainties induced from various sources into uncertain forces and moments. To see how it works, we first introduce the state-space representation of the nominal dynamics in~\eqref{equ:quadrotor dynamics}:
\begin{equation}\label{equ:nominal system with partial state}
    \dot{x}(t)=f_F(x(t)) +B_F(R(t))u_b(t), \quad x(0)=x_0,
\end{equation}
where
\begin{equation*}
\vspace{-1mm}
    f_F(x)= \begin{bmatrix}
          v \\
        ge_3 \\
         \vect(R\Omega^{\wedge}) \\
        -J^{-1}\Omega \times J \Omega
    \end{bmatrix}, \ B_F(R)=  \begin{bmatrix}
        0_{3\times1} & 0_{3\times3}\\
        -m^{-1}Re_3 & 0_{3\times3}\\
        0_{9\times1} & 0_{9\times3}\\
        0_{3\times1} & J^{-1}
    \end{bmatrix},
\end{equation*}
$u_b = [f_b \ M_b^\top]^\top$ is the control input from the baseline geometric controller and $x_0 \in \mathbb{R}^{18}$ is the initial state vector. 
Note that there is an abuse of notation that if $x$ is used in state-space representations, e.g., equation~\eqref{equ:nominal system with partial state}, it takes the vector form $x=[p^\top \ v^\top \ \vect(R)^\top \ \Omega^\top ]^\top$. Otherwise, it takes the from $x=(p,v,R,\Omega)$.

We only consider the model uncertainties and external disturbances. To be specific, the uncertainties enter the system \eqref{equ:quadrotor dynamics} through the dynamics via \eqref{equ:translational dynamics} and \eqref{equ:rotational dynamics}. Therefore, the kinematics \eqref{eq:position integral} and \eqref{eq:angle integral} are integrators and thus uncertainty-free. As a result, we have the uncertain dynamics:
\begin{align}\label{equ:uncertain system with full state}
    \dot{x}(t)=&f_F(x(t)) + B_F(R(t))(u_b(t) + \sigma_m(t,x(t))) \nonumber \\ &+ B_F^\bot(R(t)) \sigma_{um}(t,x(t)), \quad x(0)=x_0,
\end{align}
where $\sigma_m \in \mathbb{R}^4$, $\sigma_{um} \in \mathbb{R}^2$ stand for the matched and unmatched uncertainties, respectively, and 
\[
B_F^{\bot}(R)=  \begin{bmatrix}
        0_{3\times1} & 0_{3\times1} \\
        m^{-1}Re_1 &m^{-1}Re_2\\
        0_{12\times1} & 0_{12\times1}
    \end{bmatrix}.
\]
The matched uncertainty $\sigma_m$ enters the system in the same channel as the baseline control input $u_b$, which contains the forces along the body-$z$ axis and moments in body-$x$, -$y$, and -$z$ axes. Therefore, $\sigma_m$ can be directly compensated for by the control actions. In contrast, the unmatched uncertainty $\sigma_{um}$ contains forces along any direction in the  body-$xy$ plane, which enter the dynamics through $B^{\bot}_F(R)$ (whose columns are perpendicular to those of $B_F(R)$). 

Our formulation with $\sigma_m$ and $\sigma_{um}$ in~\eqref{equ:uncertain system with full state} can characterize various types of uncertainties and disturbances which can be categorized into three classes: i) external disturbances, ii) actuation-induced uncertainties, and iii) model mismatch.
External disturbances (e.g., wind, gust, and ground effect) apply to the quadrotor's dynamics in the form of unknown force $F_0 \in \mathbb{R}^3$ and moment $M_0 \in \mathbb{R}^3$. The matched uncertainty $\sigma_m$ contains the unknown force $F_0$'s projection onto the body-$z$ axis and the unknown moment $M_0$ (i.e., $\sigma_m = [F_0^\top Re_3 \ M_0^\top]^\top$), whereas the unmatched uncertainty~$\sigma_{um}$ contains $F_0$'s projection onto the body-$xy$ plane (i.e., $\sigma_{um} = [F_0^\top Re_1\ F_0^\top R e_2]^\top$). The actuation-induced uncertainties (e.g., battery voltage drop and damaged propellers) will cause scaled actual thrust $f_{real}$ and moment $M_{real}$ to deviate from the commanded thrust $f_{b}$ and moment $M_{b}$. The deviation can be captured by $\sigma_m = [f_{real} \  M_{real}^\top]^\top - [f_b \ M_b^\top]^\top$. For the model mismatch, we provide a simple example where the mass of the quadrotor in design is $m$ while the true value of the mass is $m_{real}$. The effect of this mismatch on the quadrotor's translational dynamics is reflected by
\[
\dot{v}=  ge_3-\frac{f_b}{m_{real}}Re_3 =  ge_3-\frac{f_b}{m}Re_3-\frac{(m-m_{real})f_b}{m_{real}m} Re_3.
\]
Comparing the above equation with the ideal form in~\eqref{equ:translational dynamics}, we can use $\sigma_m$ in~\eqref{equ:uncertain system with full state} to describe this effect where $(\sigma_m)_1 = (m-m_{real})f_b / m_{real}$. 

Notice that both types of uncertainties, $\sigma_m(t,x(t))$ and $\sigma_{um}(t,x(t))$, are modeled as non-parametric uncertainties, i.e., they are not necessarily parameterized with a finite number of parameters~\cite{boyd1986note} (in the form of $\xi^\top W(x)$ with $\xi \in \mathbb{R}^r$ being unknown parameters and $W(x) \in \mathbb{R}^r$ being known vector-valued nonlinear functions of $x$ forming a basis). Another important feature is that both uncertainties are nonlinearly dependent on time and state, which is the case for a broad class of uncertainties in practice.\vspace{-2mm}
\section{$\mathcal{L}_1$ Adaptive Augmentation}\label{sec:geometric+L1}
\vspace{-1mm}
In this section, we describe the $\mathcal{L}_1$ adaptive augmentation for the quadrotor control system. The $\mathcal{L}_1$ adaptive control input $u_{ad} = [f_{ad} \  M_{ad}^\top ]^\top \in \mathbb{R}^4$ enters the system in the same channel as the baseline control input $u_b$. As a result, we have
\begin{align}\label{equ:uncertain system with full state and with L1}
    \dot{x}(t)=&f_F(x(t)) + B_F(R(t))(u_b(t) + u_{ad}(t) + \sigma_m(t,x(t))) \nonumber \\ &+ B_F^\bot(R(t)) \sigma_{um}(t,x(t)), \quad x(0)=x_0.
\end{align}
The objective is to ensure that the state $x(t)$ of the system in~\eqref{equ:uncertain system with full state and with L1} remains `close' to the desired trajectory $x_d(t)$, for all $t \geq 0$.

Next we introduce the notion of `closeness'. Given a positive scalar $\rho$ and the desired state $x_d(t)$, $\mathcal{O}(x_d(t),\rho)$ denotes the set centered at $x_d(t)$ with radius $\rho$, i.e., $\mathcal{O}(x_d(t),\rho)=\{ x(t) \in \mathbb{R}^3 \times \mathbb{R}^3 \times SO(3) \times \mathbb{R}^3 | 
d(x(t),x_d(t)) \leq \rho \}$. Clearly $\mathcal{O}(x_d(t),\rho)$ induces a tube centered around $x_d(t)$ for all $t>0$, where the tube is defined as
\begin{equation}\label{equ: tube definition}
    \mathcal{T}(\rho) \triangleq \cup_{t\geq 0} \mathcal{O}(x_d(t),\rho).
\end{equation}
Thus, $u_{ad}(t)$ needs to ensure
\[
x(t) \in \mathcal{O}(x_d(t),\rho), \quad \forall t \geq 0,
\]
for any given $x_d(t)$ and  $\rho>0$.
\begin{assumption}\label{aspt: main assumption}
$\sigma(t,x(t))$ is continuous, bounded, and Lipschitz continuous in $x$, uniformly in $t$, for all $x \in \mathcal{T}(\rho)$ and $t \geq 0$. In addition, $\frac{\partial \sigma}{\partial x}(t,x(t))$ and $\frac{\partial \sigma}{\partial t}(t,x(t))$ are bounded for all $x \in \mathcal{T}(\rho)$ and $t \geq 0$.
\end{assumption}
Based on Assumption~\ref{aspt: main assumption} and considering the fact that $\sigma = [{\sigma}_m^\top \ {\sigma}_{um}^\top]^\top$, we have that for all $x \in \mathcal{T}(\rho)$ and all $t \geq 0$:
\begin{equation*}
\noindent\centering
\begin{minipage}{0.22\textwidth}
\begin{subequations}
\begin{align}
\norm{\frac{\partial \sigma(t,x)}{\partial t}}&\leq L_{\sigma_{t}}, \\ \setcounter{equation}{2} \norm{\frac{\partial \sigma(t,x)}{\partial x}}&\leq L_{\sigma_{x}},
\end{align}
\end{subequations}
\end{minipage}
\begin{minipage}{0.25\textwidth}
\setcounter{equation}{15}
\begin{subequations}
\setcounter{equation}{1}
\begin{align}
\norm{\frac{\partial \sigma_m(t,x)}{\partial t}}&\leq L_{\sigma_{m_t}},\\\setcounter{equation}{3}\norm{\frac{\partial \sigma_m(t,x)}{\partial x}}&\leq L_{\sigma_{m_x}},
\end{align} \nonumber
\end{subequations}
\end{minipage}\vspace{2mm}
\end{equation*}

\begin{equation*}
\noindent\centering
\begin{minipage}{0.24\textwidth}
\setcounter{equation}{15}
\begin{subequations}
\begin{align}
\setcounter{equation}{4}
\norm{\sigma(t,x)} \leq& \Delta_{\sigma},\label{equ: bounds of sigmas}\\ \setcounter{equation}{6}\norm{\sigma_{um}(t,x)} \leq& \Delta_{\sigma_{um}}.
\end{align}
\end{subequations}
\end{minipage}
\begin{minipage}{0.24\textwidth}
\vspace{-5mm}
\setcounter{equation}{15}
\begin{subequations}
\setcounter{equation}{5}
\begin{align}
\norm{\sigma_{m}(t,x)} \leq& \Delta_{\sigma_{m}},
\end{align} \nonumber
\end{subequations}
\end{minipage}\vspace{2mm}
\end{equation*}
\vspace{-1mm}
\begin{assumption}\label{asump: unmatched uncertainty assumption}
The constant $\Delta_{\sigma_{um}}$ verifies
\begin{align}
\Delta_{\sigma_{um}} <& \frac{\underline{\gamma}\rho^2-V_0}{c_4\rho},\label{unmatched special bound}
\end{align}
where $c_4 \triangleq \max\{c_1/m,1\}$, $c_1$, and $\underline{\gamma}$ are defined in Proposition~\ref{prop:geometricproposition}, and $V_0 = V(e_{p}(0),e_{v}(0),e_{R} (0),e_{\Omega}(0))$.
\end{assumption}
\begin{remark}
If the Lipschitz constant of $\sigma_{um}(t,x(t))$ with respect to the state $x$ is small enough, the upper bound in~\eqref{unmatched special bound} in Assumption~\ref{asump: unmatched uncertainty assumption} can always be satisfied by choosing a large enough $\rho$. 
In real application, side wind is the most common cause of unmatched uncertainty for quadrotors. 
It is realistic to assume the magnitude of the side wind does not change dramatically within a relatively large region. Note that choosing a sufficiently large $\rho$ to satisfy the condition in~\eqref{unmatched special bound} may lead to a degradation of the performance bound.
\end{remark}

\begin{remark}
The implementation of the $\mathcal L_1$Quad controller does not require knowledge of the bounds in Assumptions~\ref{aspt: main assumption} or~\ref{asump: unmatched uncertainty assumption}, as illustrated in Algorithm~\ref{alg: L1 algorithm}. The knowledge of these bounds is needed only for analysis.
\end{remark}

Since the kinematics \eqref{eq:position integral} and \eqref{eq:angle integral} are uncertainty-free, we design the $\mathcal{L}_1$ adaptive controller only using the dynamics~\eqref{equ:translational dynamics} and~\eqref{equ:rotational dynamics} that are directly affected by the uncertainties. The state-space representation of \eqref{equ:translational dynamics} and~\eqref{equ:rotational dynamics} is given by:
\begin{align}\label{equ:quadrotor partial states uncertain dynamics}
    \dot{z}(t)=&f(z(t))+B(R(t))\left(u_b(t) + u_{ad}(t)+\sigma_m(t,x(t))\right) \nonumber \\ &+B^{\bot}(R(t))\sigma_{um}(t,x(t)), \quad z(0)=z_0,
\end{align}
where $z = [v^\top \ \Omega^\top]^\top \in \mathbb{R}^6$, $f(z)= \left[\begin{smallmatrix}
        ge_3 \\
        -J^{-1}\Omega \times J \Omega
    \end{smallmatrix}\right]$, $B(R)=\left[  \begin{smallmatrix}
        -m^{-1}Re_3 & 0_{3\times3}\\
        0_{3\times1} & J^{-1}
    \end{smallmatrix} \right]$, $B^{\bot}(R)= \left[ \begin{smallmatrix}
        m^{-1}Re_1 &m^{-1}Re_2\\
        0_{3\times1} & 0_{3\times1}
    \end{smallmatrix}\right]$, and $z_0 \in \mathbb{R}^6$ is the initial state vector.

\begin{figure}[t]
\vspace{2mm}
\begin{center}
    \begin{tikzpicture}[scale=0.78, transform shape,align=center] 
     \filldraw[fill=lcolor!70, densely dotted] (1.1,-0.9) -- (8,-0.9) -- (8,-4) -- (1.1,-4)  -- cycle;
    
       \node [input, name=input] {};
        \node [block] (ccm)  {Geometric\\Control};
        \node [sum] (sum) [right=0.8cm of ccm] {};
        \node [block] (sys) [right=1.5cm of sum] {Uncertain \\ Quadrotor System};
        \node [sqblock] (filt) [below=1.4cm of sum] {LPF};
          \node [block] (planner) [below=1.1cm of ccm] {Planner};
        \node [block] (pred) [below=1.4cm of sys] {State Predictor};
        \node [block] (adap) [below=0.5cm of pred] {Adaptation Law};
        \node [sum] (diff) [right=1.0cm of pred] {};
        \node [sum] (sum2) [left=0.7cm of pred] {};
        \node [phantom] (p2) at (sys -| diff) {};
        \node [phantom, above=1.0cm of p2] (p3) {};
        \node [phantom, below=1.9cm of p2] (p4) {};
        \node [phantom, above=1cm of p2] (p11) {};
        \node [phantom, right=1.3cm of sum] (p6) {};
        \node [phantom, below=1.9cm of p6] (p7) {};
        \node [phantom, above=2.5cm of sys] (p8) {};
        \node [phantom] (p9) at (adap -| sum2) {};

        \draw[-Latex] (planner) -- node [midway,left] {$(p_d,\psi_d)$} (ccm);
        \draw[-Latex] (ccm) -- node [midway,above] {$u_{b}$} (sum);
        \draw[-Latex] (sum) -- node [midway,above] {$u$} (sys);
         \draw[-Latex] (sum) -| (sum2);
        \draw[-Latex] (filt) -- node [near start,right] {$u_{ad}$} node[pos=0.99,right] {$+$} (sum);
        \draw[-Latex] (p9) -|  node[pos=0.99,right] {$+$} node [near end,right] {$\hat{\sigma}$}(sum2);
        \draw[-Latex] (pred) -- node [midway,above] {$\hat{x},\hat{z}$} (diff);
        \draw[-Latex] (diff) |- node [near start,right] {$\tilde{x},\tilde{z}$} (adap);
        \draw[-Latex] (p2) -- node[pos=0.99,right] {$-$} (diff);
        \draw (sys) -- node [midway,above] {$x,z$} (p2);
        \node [output, right=1.0cm of p2] (output) {};
        \draw[-Latex] (p2) -- (output);
        \draw[-Latex] (sum2) -- (pred);
        \draw (p2) -- (p11);
        \draw[-Latex] (p11) -| (ccm);
        \draw (adap) -- (p9);
        \draw[-Latex] (p9) -| node [near end,right] {$\hat{\sigma}_m$}(filt);
    \end{tikzpicture}
    \vspace{-2mm}
  \caption{The framework of geometric control with $\mathcal{L}_1$ adaptive augmentation for quadrotors. The $\mathcal{L}_1$ adaptive controller is highlighted in blue. }
  \label{fig: L1 control framework}
 \end{center}
 \vspace{-5mm}
\end{figure}
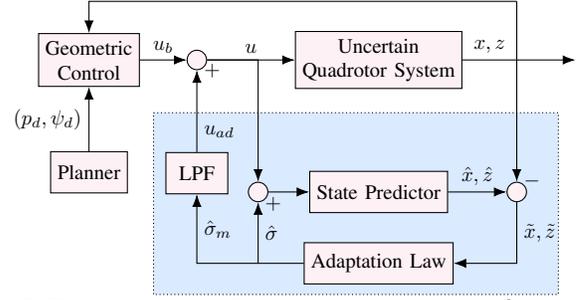

The $\mathcal{L}_1$ adaptive controller includes a state predictor, an adaptation law, and a low-pass filter (LPF) as shown in Fig~\ref{fig: L1 control framework}. 
The state predictor is given by
\begin{align}\label{equ:L1statepredictor}
    \dot{\hat{z}}(t)=&f(z(t)) +B(R(t))(u_b(t) +  u_{ad}(t)) \nonumber \\ &+h(t)+A_s\Tilde{z}(t), \quad \hat{z}(0)=z_0,
\end{align}
where $\Tilde{z}=\hat{z}-z$ is the prediction error, $A_s \in \mathbb{R}^{6 \times 6}$ is a user-selected diagonal Hurwitz matrix, $h(t) \in \mathbb{R}^6$ represents the estimation of $\bar{B}(R(t))\sigma(t,x(t))$, and $\bar{B}(R(t))=[B(R(t)) \ B^\bot(R(t))]$. 

We use the piecewise-constant adaptation law to compute $h(t)$ such that for $t \in [iT_s,(i+1)T_s)$, and $i \in \mathbb{N}$, we have
\begin{equation}\label{equ:L1adaptationlaw}
h(t)
=
h(iT_s)
=-\Phi^{-1}\mu(iT_s),\vspace{-1mm}
\end{equation}
where $T_s$ is the sampling time, $\Phi=A_s^{-1}(\exp(A_sT_s)-I)$, and $\mu(iT_s)=\exp(A_sT_s)\Tilde{z}(iT_s)$. With the piecewise-constant adaptation law, it is straightforward to calculate the estimated uncertainty for $t \in [iT_s,(i+1)T_s)$:
\begin{equation}\label{equ: sigma hat}
\hat{\sigma}(t)
=
\hat{\sigma}(iT_s)
=-\bar{B}(iT_s)^{-1}\Phi^{-1}\mu(iT_s),
\end{equation}
where $\hat{\sigma} = [\hat{\sigma}_m^\top \ \hat{\sigma}_{um}^\top]^\top$. Note that the square matrix $\bar{B} \in \mathbb{R}^{6 \times 6}$ is invertible.

To see how the proposed estimation architecture works, consider the state prediction error dynamics via subtracting~\eqref{equ:quadrotor partial states uncertain dynamics} from~\eqref{equ:L1statepredictor}:
\begin{equation}\label{equ:prediction error dynamics}
    \dot{\tilde{z}}(t) = A_s \tilde{z}(t) +h(t)- \bar{B}(t)\sigma(t), \quad \tilde{z}(0)=0.
\end{equation}
Given that $h(t)$ is constant for $t \in [iT_s,(i+1)T_s)$, the solution of the linear system in \eqref{equ:prediction error dynamics} at $t=(i+1)T_s$ can be computed from $t=iT_s$:
\begin{align}\label{equ:closed-form solution to prediction error}
&\tilde{z}\left( (i+1)T_s \right) \nonumber 
\\ 
&=  \exp(A_s T_s) \tilde{z}(iT_s) + \left(\exp(A_s T_s) - I \right) A_s^{-1} h(iT_s) \nonumber 
\\
&\hspace{3.5mm}-\textstyle\int_{iT_s}^{(i+1)T_s} \exp(A_s((i+1)T_s-\tau)) \bar{B}(\tau) \sigma(\tau) \dd \tau.
\end{align}
As the state predictor \eqref{equ:L1statepredictor} replicates the system's structure in \eqref{equ:quadrotor partial states uncertain dynamics}, with the terms related to the uncertainties replaced by the estimation $h(t)$, a small prediction error implies an accurate estimation. 
Consequently, the design philosophy of the piecewise-constant adaptation law is to drive the prediction error~\eqref{equ:closed-form solution to prediction error} as small as possible. When the piecewise-constant adaptation law is executed at $t=iT_s$, we do not have enough information of $\sigma(t)$ for $t \in [iT_s,(i+1)T_s)$ to compute the integral term on the right-hand side of~\eqref{equ:closed-form solution to prediction error}. As a result, we design the piecewise-constant adaptation law to cancel the term $\exp(A_s T_s) \tilde{z}(iT_s)$ due to the initial condition of the sampling period. 
Using the adaptation law in~\eqref{equ:L1adaptationlaw} and plugging $h(iT_s)$ in \eqref{equ:closed-form solution to prediction error}, the prediction error at $t=(i+1)T_s$ takes the form
\vspace{-1.5mm}
\begin{align}\label{equ: prediction error final}
&\tilde{z}\left( (i+1)T_s \right) \nonumber \nonumber 
\\
&= -\textstyle\int_{iT_s}^{(i+1)T_s} \exp(A_s((i+1)T_s-\tau)) \bar{B}(\tau) \sigma(\tau) \dd \tau.
\end{align}
Moreover, following the same argument, we can see that the contribution of $\tilde{z}\left((i+1)T_s\right)$ to $\tilde{z}((i+2)T_s)$ will be eliminated at $t=(i+2)T_s$ by $h((i+1)T_s)$.

By setting the sampling time $T_s$ small enough (up to the hardware limit), one can achieve arbitrary accurate uncertainty estimation. Note that small $T_s$ leads to high adaptation gain $\Phi^{-1} \exp(A_sT_s)$ in~\eqref{equ:L1adaptationlaw}. When implemented, the high adaptation gain will amplify high-frequency components (resulting from sensor noise) in the estimates. Therefore, we filter the uncertainty estimates to prevent the high-frequency components from entering the control channel (as we will introduce next), which decouples the fast estimation from the control channel and protects the robustness of the system. The $\mathcal{L}_1$ adaptive control law $u_{ad}(t)$ only compensates for the matched uncertainty $\sigma_m(t)$ within the bandwidth $\omega$ of the LPF $C(s)$:

\begin{equation}\label{equ:L1controlinput}
u_{ad}(s)=-C(s)\hat{\sigma}_m(s),~~C(0)=1,
\end{equation}
where the signals are posed in the Laplace domain. 

Finally, the \ellone adaptive control input $u_{ad}(t)$ is integrated into the control loop such that the total control input is $u(t) = u_b (t) + u_{ad}(t)$, as shown in Fig~\ref{fig: L1 control framework}. The details of computing the baseline control input $u_b$ using geometric control are summarized in Section~\ref{sec: geometric controller}. We outline the implementation of the \ellone adaptive control in Algorithm~\ref{alg: L1 algorithm}, where we use a first-order low-pass filter with transfer function $C(s)=\omega / (s+\omega)$ as an example. To implement \ellone adaptive control, we only require a rough estimate of the nominal system's parameters (mass and inertia). We recommend constructing it in the sequence of state predictor, adaptation law, and low-pass filter. When tuning the state predictor and the adaptation law, we suggest disconnecting the $u_{ad}(t)$ from the control loop. 
If the state predictor and the adaptation law are constructed correctly, the predicted states $\hat{z}(t)$ should closely track the real states $z(t)$. Subsequently, one can connect $u_{ad}(t)$ into the control loop and tune the low-pass filter. We suggest beginning with a small bandwidth and slowly increasing the bandwidth to balance the performance and the robustness. 

\begin{algorithm}[ht]
	\caption{$\mathcal{L}_1$ adaptive control (in discrete-time notation with $k-1$ and $k$ indicating previous and current steps, respectively). Note that \textbf{Parameters} are designed by the user, and \textbf{Input} includes terms either from the state estimation or computation at the previous time step.}
	\label{algo: L1}
	\begin{algorithmic}[1]\label{alg: L1 algorithm}
	    \ENSURE  Hurwitz matrix $A_s$, sampling time $T_s$, and cutoff frequency $\omega$ of a first-order low-pass filter.
		\REQUIRE  $R(k)$, $z(k)$, $u_b(k-1)$, $z(k-1)$, $R(k-1)$, $\hat{z}(k-1)$,  $\hat{\sigma}_m(k-1)$, $\hat{\sigma}_{um}(k-1)$, $u_{ad}^{pre}(k-1)$.
		\STATE Update state predictor: \\
		    $\hat{z}(k)=  \hat{z}(k-1) + T_s \dot{\hat{z}}(k-1)$;
		\STATE Update state prediction error: $\tilde{z}(k) = \hat{z}(k) - z(k)$.
            \STATE Compute $h(k)$: \\$h(k)=(\exp(A_s T_s)-I)^{-1}A_s \exp(A_s T_s)\tilde{z}(k)$.
		\STATE Compute the estimated uncertainty: \\ \vspace{0.5mm}$\left[\begin{smallmatrix} \hat{\sigma}_m(k) \\ \hat{\sigma}_{um}(k) \end{smallmatrix}\right]= \hat{\sigma}(k) =-\bar{B}(R(k))^{-1}h(k)$.\vspace{1mm}
	\STATE{Filter the matched uncertainty estimate and negate the filtered signal to obtain \ellone adaptive control}: \\ $u_{ad}^{pre}(k) = e^{-\omega T_s}u_{ad}^{pre}(k-1) + (1-e^{-\omega T_s}) \hat{\sigma}_m(k)$, \\$u_{ad}(k) = -u_{ad}^{pre}(k)$.
		\RETURN \ellone adaptive control $u_{ad}(k)$.
	\end{algorithmic}
\end{algorithm}
\vspace{-4mm}
\section{Performance Analysis}\label{sec: performance analysis}
Next, we analyze the performance with the \ellone adaptive control. We show that the tracking errors are uniformly bounded with tunable bound size.
Note that we restrict the analysis to a first-order filter of the form $C(s)=\omega / (s+\omega)$. The results can be generalized to higher-order filters. Furthermore, we assume the states of the quadrotor can be measured accurately without noise. The proofs of Proposition~\ref{prop:bounds of Ps Ub and Uad} and~\ref{prop: estimation error bound}, and Theorem~\ref{thm: main theorem} in this section are provided in Appendix~\ref{apdx: proof of prop 3},~\ref{apdx: proof of prop 4}, and~\ref{apdx: proof of thm}, respectively.

Let \!$\bar{B}_F(R(t))\! \triangleq\! [B_F(R(t)) \ B^\bot_F(R(t))]$. 
\begin{proposition}\label{prop: bound of Bs.}
The functions $B(R(t))$, $\Bar{B}(R(t))$, $\Bar{B}^{-1}(R(t))$, $B_F(R(t))$, and $\Bar{B}_F(R(t))$ are bounded; i.e.  there exist constants $\Delta_B$, $\Delta_{\Bar{B}}$, $\Delta_{\Bar{B}^{-1}}$, $\Delta_{B_F}$, and $\Delta_{\Bar{B}_F}$ satisfying
\begin{equation}\label{bounds on Bs}
\begin{aligned}
\norm{B(R(t))} \leq& \Delta_B, & \norm{\Bar{B}(R(t))} \leq& \Delta_{\Bar{B}},\\
\norm{\Bar{B}^{-1}(R(t))}\leq& \Delta_{\Bar{B}^{-1}}, & \norm{B_F(R(t))} \leq& \Delta_{B_F},\\
\norm{\Bar{B}_F(R(t))} \leq& \Delta_{\Bar{B}_F},
\end{aligned}
\end{equation}
for all $t\geq0$.
\end{proposition}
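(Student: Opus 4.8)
The plan is to exploit the fact that each of these five matrix-valued functions depends on time only through the rotation matrix $R(t)$, and that $R(t)$ always lies in the compact manifold $SO(3)$. Since every entry of each matrix is a continuous function of $R$, each map is a continuous matrix-valued function on a compact domain and therefore attains a finite maximum norm; this already establishes the existence of all five constants simultaneously. I would nonetheless make the argument constructive, so that the bounds are explicit in terms of $m$ and $J$, since those expressions are what is used downstream.

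First I would record that the columns of $R$ are orthonormal, so $\norm{Re_i} = 1$ for $i \in \{1,2,3\}$ and hence $\abs{(Re_i)_j} \le 1$ for every component. Inspecting the definitions of $B(R)$, $B^\bot(R)$, $B_F(R)$, and $B_F^\bot(R)$, every nonzero entry is either a fixed constant (an entry of $J^{-1}$ or the scalar $m^{-1}$) or a scaled component $m^{-1}(Re_i)_j$. Bounding the Frobenius norm entry-by-entry then yields $\Delta_B$, $\Delta_{\bar B}$, $\Delta_{B_F}$, and $\Delta_{\bar B_F}$ in closed form in terms of $m$ and $\norm{J^{-1}}$; this part is entirely routine.

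The only substantive step is the bound on $\bar B^{-1}(R)$, for which I must first secure \emph{uniform} invertibility. Here I would introduce the column permutation $P$ that regroups the columns of $\bar B(R) = [\,B(R)\ B^\bot(R)\,]$ (whose $B$- and $B^\bot$-columns are interleaved, with the nonzero upper blocks coming from $-Re_3$, $Re_1$, $Re_2$ and the nonzero lower block from $J^{-1}$) into the block-diagonal form $\bar B(R)P = \diag{m^{-1}[\,-Re_3\ Re_1\ Re_2\,],\ J^{-1}}$. The upper-left factor is a signed column permutation of $R$, hence orthogonal with $\abs{\det}=1$, while $J^{-1}$ is a fixed invertible matrix; therefore $\abs{\det \bar B(R)} = m^{-3}\abs{\det J^{-1}}$ is a nonzero constant independent of $R$. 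This gives invertibility on all of $SO(3)$, and moreover $\bar B^{-1}(R) = P\,\diag{m\,[\,-Re_3\ Re_1\ Re_2\,]^{\top},\ J}$, where I used that the inverse of the orthogonal block equals its transpose. Since that transpose has unit spectral norm, $P$ has unit norm, and $J$ is fixed, I obtain the explicit bound $\norm{\bar B^{-1}(R)} \le \max(m,\norm{J}) =: \Delta_{\bar B^{-1}}$.

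I expect the permutation and block-diagonalization bookkeeping to be the main obstacle—not because it is deep, but because the columns of $B$ and $B^\bot$ are interleaved inside $\bar B$, so one must be careful in identifying the correct column reordering and the sign of the orthogonal block when computing the determinant. Once the block-diagonal form is in hand, both uniform invertibility and the explicit norm bound follow immediately, and the compactness of $SO(3)$ guarantees boundedness of the continuous map $R \mapsto \bar B^{-1}(R)$ even if one prefers not to track the constant explicitly.
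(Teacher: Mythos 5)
Your proof is correct, and its core is the same as the paper's: boundedness follows because every entry of these matrices is either a constant (built from $m$ and $J$) or an entry of $R(t)\in SO(3)$, which is uniformly bounded. The difference is one of degree: the paper's entire proof is a two-line assertion---``since $\det(R)=1$ and $J$ is a constant matrix, the five bounds hold''---which implicitly bundles the uniform invertibility of $\bar{B}(R)$ into the observation that its determinant is a nonzero constant. You make precisely this step explicit: the column permutation $P$ bringing $\bar{B}(R)$ to $\diag{m^{-1}[-Re_3\ Re_1\ Re_2],\,J^{-1}}$, the observation that $[-Re_3\ Re_1\ Re_2]$ is orthogonal (it is $R$ times a signed permutation, with $\abs{\det}=1$), and the resulting closed form $\bar{B}^{-1}(R)=P\,\diag{m[-Re_3\ Re_1\ Re_2]^{\top},\,J}$ with $\norm{\bar{B}^{-1}(R)}\leq \max(m,\norm{J})$. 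Your computation checks out (including $\abs{\det\bar{B}(R)}=m^{-3}\abs{\det J^{-1}}$), and it has a bonus the paper's proof lacks: it constructively recovers the ``explicit expression for $\bar{B}^{-1}$'' that Section V of the paper invokes for fast onboard computation but never writes down. One small caveat: your bound $\max(m,\norm{J})$ is exact for the induced $2$-norm but not for the Frobenius norm (where one would get $\sqrt{3m^2+\norm{J}_F^2}$); since the proposition only asserts existence of the constants, and all norms on a fixed-dimensional space are equivalent, this does not affect correctness, but you should fix the norm convention if you want to quote the explicit constant downstream.
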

\begin{proof}
Since $\det(R)=1$ and $J$ is a constant matrix, the five bounds in \eqref{bounds on Bs} hold for all $R(t)$ (and hence all $x(t)$) for $t \geq 0$.
\end{proof}
\begin{proposition}\label{prop:bounds of Ps Ub and Uad}
Let 
\begin{equation}\label{def: P1 P2 definition}
    \begin{aligned}
P_1(t) & \triangleq \begin{bmatrix}
-R(t)e_3 \cdot \left( \frac{c_1}{m}e_x(t) + e_v(t)  \right) \\ e_\Omega(t)+c_2 J^{-1} e_R(t) \end{bmatrix}^\top,\\
P_2(t) &\triangleq \frac{c_1}{m}e_x(t) + e_v(t).
    \end{aligned}
\end{equation}
There exist constants $c_3$, $\Delta_f$, $\Delta_{u_b}$, $\Delta_{u_{ad}}$ such that the following bounds hold for all $x(t) \in \mathcal{O}(x_d(t),\rho)$ and $t \geq 0$
\begin{equation}\label{bounds of P1 P2}
\norm{P_1} \leq  c_3 \rho, \ \norm{P_2} \leq  c_4 \rho, \ \norm{f(x(t))}\leq \Delta_f,
\end{equation}
\begin{equation}\label{bounds on ub and sigmahat}
    \norm{u_b}\leq \Delta_{u_b}, \quad \norm{u_{ad}}\leq \Delta_{u_{ad}},
\end{equation}
where $c_4$ is defined in~\eqref{unmatched special bound}.
\end{proposition}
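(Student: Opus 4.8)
The plan is to use the containment $x(t)\in\mathcal{O}(x_d(t),\rho)$ to bound every tracking-error component and then propagate those bounds through the algebraic definitions, handling $u_{ad}$ separately because it is produced by a dynamic filter rather than a static map. First I would observe that, by Definition~\ref{def: norm of tuples}, $d(x(t),x_d(t))\le\rho$ forces each sub-block of the stacked error vector to satisfy $\norm{e_p},\norm{e_v},\norm{e_R},\norm{e_\Omega}\le\rho$ (the symbol $e_x$ in the definitions of $P_1,P_2$ is the position error $e_p$). The bound on $P_2$ is then immediate from the triangle inequality, $\norm{P_2}\le\tfrac{c_1}{m}\norm{e_x}+\norm{e_v}\le(\tfrac{c_1}{m}+1)\rho$, so one may take $c_4=\tfrac{c_1}{m}+1$. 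For $P_1$ I would bound its scalar first block by $\norm{P_2}$ using $\norm{Re_3}=1$ (orthogonality of $R$), and its vector second block by $(1+c_2/\lambda_m(J))\rho$ using $\norm{J^{-1}}=1/\lambda_m(J)$ for the symmetric positive-definite inertia matrix; stacking the two blocks gives $\norm{P_1}\le c_3\rho$ with $c_3=\sqrt{c_4^2+(1+c_2/\lambda_m(J))^2}$.

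Next I would bound $\norm{f}$ and $\norm{u_b}$, which are static functions of the state. The block $ge_3$ of $f$ is constant, while $-J^{-1}\Omega\times J\Omega$ needs a bound on $\norm{\Omega}$; writing $\Omega=e_\Omega+R^\top R_d\Omega_d$ and using orthogonality of $R^\top R_d$ gives $\norm{\Omega}\le\rho+\sup_t\norm{\Omega_d}$, which is finite for a smooth bounded reference, so $\norm{f}\le\Delta_f$ with $\Delta_f$ quadratic in $\rho+\sup_t\norm{\Omega_d}$. For $u_b$ I would read off \eqref{eq:thrust control} and \eqref{eq:torque control} term by term: $\abs{f_b}\le\norm{F_d}$ again via $\norm{Re_3}=1$, with $\norm{F_d}$ controlled by the gain norms times $\rho$ plus $mg$ plus $m\sup_t\norm{\ddot p_d}$, while each summand of $M_b$ is bounded using the established bound on $\norm{\Omega}$, the bounds on $\norm{e_R},\norm{e_\Omega}$, boundedness of $\Omega_d$ and $\dot\Omega_d$, orthogonality of $R^\top R_d$, and $\norm{\Omega^\wedge}=\norm{\Omega}$. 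This yields $\norm{u_b}\le\Delta_{u_b}$.

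The hard part will be the bound on $u_{ad}$, since it is the output of the filter \eqref{equ:L1controlinput} rather than a static function of $x$. The plan is to use that the first-order filter $C(s)=\omega/(s+\omega)$ has impulse response $\omega e^{-\omega t}$ of unit $\mathcal{L}_1$ norm, whence $\norm{u_{ad}}_{\mathcal{L}_\infty}\le\norm{\hat\sigma_m}_{\mathcal{L}_\infty}$, reducing the task to a uniform bound on $\hat\sigma_m$. From the adaptation law \eqref{equ:L1adaptationlaw}, $\norm{\hat\sigma(iT_s)}\le\Delta_{\bar B^{-1}}\norm{\Phi^{-1}\exp(A_sT_s)}\norm{\tilde z(iT_s)}$, with $\Delta_{\bar B^{-1}}$ supplied by Proposition~\ref{prop: bound of Bs.}, so everything hinges on a uniform bound for the prediction error $\tilde z$. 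I would extract this from the closed-form solution \eqref{equ:closed-form solution to prediction error}: after substituting the adaptation law, the $\tilde z(iT_s)$ term cancels, leaving two single-interval integrals; the first is bounded through Assumption~\ref{aspt: main assumption} and Proposition~\ref{prop: bound of Bs.} by $\Delta_{\bar B}\Delta_\sigma\int_0^{T_s}\norm{\exp(A_s\tau)}\dd\tau$, and the second carries a factor $\norm{\bar B(t)-\bar B(iT_s)}$ that is $O(T_s)$ because $\bar B(R(t))$ is Lipschitz in $t$ (as $\dot R=R\Omega^\wedge$ with $\norm{\Omega}$ already bounded). Substituting the adaptation-law bound on $\norm{\hat\sigma(iT_s)}$ converts this into a scalar recursion $\norm{\tilde z_{i+1}}\le a+b\norm{\tilde z_i}$, and since $\norm{\Phi^{-1}\exp(A_sT_s)}$ scales like $1/T_s$ while the Lipschitz correction carries $T_s^2$, the contraction factor $b$ is $O(T_s)$ and hence less than one for $T_s$ small enough, giving $\norm{\tilde z_i}\le a/(1-b)$ and therefore $\norm{u_{ad}}\le\Delta_{u_{ad}}$. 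The delicate point is closing this self-referential recursion, since $\hat\sigma$ both depends on $\tilde z$ and drives its propagation, and the contraction estimate is precisely what renders the bound finite and uniform.
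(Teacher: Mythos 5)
Your proposal is correct, and for the easy parts ($P_1$, $P_2$, $f$, $u_b$) it matches the paper's proof, just spelled out more explicitly (the paper simply notes $\norm{e_p},\norm{e_v},\norm{e_R},\norm{e_\Omega}\leq\rho$, $\norm{Re_3}=1$, and boundedness of the control law). Where you genuinely diverge is the key step, the uniform bound on $\hat{\sigma}$ that feeds $\norm{u_{ad}}\leq\norm{C(s)}_{\mathcal{L}_1}\norm{\hat{\sigma}}_{\mathcal{L}_\infty}$ (this filter reduction via the unit $\mathcal{L}_1$-norm of $C(s)$ is identical in both). The paper argues by induction on $i$: assuming $\hat{\sigma}(iT_s)$ bounded, the BIBO stability of the prediction-error dynamics \eqref{equ:prediction error dynamics} together with Assumption~\ref{aspt: main assumption} and Proposition~\ref{prop: bound of Bs.} gives boundedness of $\tilde{z}((i+1)T_s)$, hence of $\hat{\sigma}((i+1)T_s)$ through \eqref{equ:L1adaptationlaw}. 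You instead substitute the adaptation law into the closed-form solution \eqref{equ:closed-form solution to prediction error}, exploit the designed cancellation of the $\tilde{z}(iT_s)$ term, and track constants: the residual Lipschitz-correction integral is $O(T_s^2)\norm{\hat{\sigma}(iT_s)}$ while the adaptation gain $\Phi^{-1}\exp(A_sT_s)$ scales as $1/T_s$, giving the recursion $\norm{\tilde{z}_{i+1}}\leq a+b\norm{\tilde{z}_i}$ with $b=O(T_s)$ and hence a geometric-series bound uniform in $i$. Your route is more quantitative and, arguably, more airtight: the paper's induction establishes per-step finiteness, but the uniformity of $\Delta_{\hat{\sigma}}$ over all $i$ rests on a BIBO claim whose input itself contains $\hat{\sigma}$, which is exactly the self-referential loop your contraction estimate closes explicitly. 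The price is a smallness condition $T_s<1/(L_B\Delta_{\bar{B}^{-1}})$ that the proposition statement does not impose, though it is harmless in context since the theorem anyway requires a sufficiently small sampling time via \eqref{equ: condition on sampling rate}; the paper's argument, in exchange, is shorter and condition-free. One cosmetic note: your bound $\norm{\Phi^{-1}\exp(A_sT_s)}\leq 1/T_s$ is exact for diagonal Hurwitz $A_s$ (since $u/(e^u-1)\leq 1$ for $u>0$), so the constant $b$ in your recursion can be taken as $L_B\Delta_{\bar{B}^{-1}}T_s$ precisely.
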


For an arbitrarily chosen positive scalar $\epsilon$, let
\begin{subequations}
\begin{align}
\rho \triangleq& d(x(0),x_d(0))\sqrt{\overline{\gamma}/\underline{\gamma}}+\epsilon,     \label{def: definition of rho} \\
\label{def: phi1}
\phi_1 \triangleq& \Delta_f+ \Delta_{\bar{B}}\Delta_\sigma+\Delta_{B_F} (\Delta_{u_b}+\Delta_{\hat{\sigma}}), \\
\zeta_1(\omega) \triangleq& \frac{\Delta_{\sigma_m}}{\lvert \beta - \omega \rvert} + \frac{L_{\sigma_{m_t}} + L_{\sigma_{m_x}}\phi_1}{\beta \omega},\label{equ: zeta 1}\\
\zeta_2(A_s) \triangleq&   2\sqrt{6}\Delta_{\Bar{B}}\Delta_{\Bar{B}^{-1}}\left( \phi_1 L_{\sigma_x} +  L_{\sigma_t}\right) \nonumber \\ &+ \sqrt{6}\Delta_{\Bar{B}^{-1}}\left(  L_B + \abs{\lambda_{m}(A_s)}\Delta_{\Bar{B}}\right)\Delta_\sigma, \label{equ: zeta 2}\\
\label{equ: zeta 3}
\zeta_3(\omega)\triangleq&\Delta_\sigma \omega, \\
\label{equ: zeta 4}
\zeta_4(A_s,\omega) \triangleq& \zeta_2(A_s) + \zeta_3(\omega).
\end{align}
\end{subequations}
Choose the sampling time $T_s$ and the bandwidth $\omega$ of the low-pass filter $C(s)$ to satisfy the following conditions:
\begin{subequations}
\begin{align}\label{equ:condition on filter bandwidth}
\underline{\gamma}\rho^2 >& c_3 \rho \zeta_1(\omega) + c_4\rho\Delta_{\sigma_{um}} + V_0, \\
\label{equ: condition on sampling rate}
    T_s \leq& \frac{\underline{\gamma}\rho^2- c_3 \rho \zeta_1(\omega) - c_4\rho \Delta_{\sigma_{um}}-V_0}{\zeta_4(A_s,\omega)}.
\end{align}
\end{subequations}

\begin{remark}\label{remark on existence of qualified bandwidth and sampling time}
Based on the definition of $\rho$ in~\eqref{def: definition of rho} and the bounds of the Lyapunov function in~\eqref{equ: lyapunovboundfinal}, the inequality $\underline{\gamma}\rho^2 > V_0$ holds. Furthermore, since $\zeta_1(\omega)$ converges to zero as $\omega$ increases, under Assumption~\ref{asump: unmatched uncertainty assumption}, conditions~\eqref{equ:condition on filter bandwidth} and~\eqref{equ: condition on sampling rate} can always be satisfied by choosing a sufficiently large $\omega$ and a sufficiently small $T_s$. 
\end{remark}
\begin{proposition}\label{prop: estimation error bound}
Let Assumption~\ref{aspt: main assumption} hold. If  $x(t) \in \mathcal{O}(x_d(t),\rho)$ for all $t\ge 0$, then 
\begin{equation}
    \norm{\sigma(t,x(t))-\hat{\sigma}(t,x(t))} \leq \left\{ \begin{aligned} & \Delta_\sigma,  && \forall 0\leq t < T_s,\\& \zeta_2(A_s)T_s,  && \forall t \geq T_s. 
    \end{aligned} \right.
\end{equation}
\end{proposition}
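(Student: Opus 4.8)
The plan is to handle the two regimes in the statement separately. On the first interval $0\le t<T_s$, the predictor is initialized with $\hat z(0)=z(0)$, so $\tilde z(0)=0$ and the adaptation law \eqref{equ:L1adaptationlaw} returns $\hat\sigma(0)=0$; since $\hat\sigma$ is held constant on $[0,T_s)$, there $\norm{\sigma(t,x(t))-\hat\sigma(t)}=\norm{\sigma(t,x(t))}\le\Delta_\sigma$ directly by Assumption~\ref{aspt: main assumption}. All the work therefore lies in the bound $\zeta_2(A_s)T_s$ for $t\ge T_s$.

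Fix $i\ge1$ and $t\in[iT_s,(i+1)T_s)$, on which $\hat\sigma(t)=\hat\sigma(iT_s)$. I would first use the cancellation that the adaptation law induces in \eqref{equ:closed-form solution to prediction error} (the computation right after it): for $i\ge1$ the initial-error term drops out and $\tilde z(iT_s)$ equals the purely integral expression over the previous interval, namely $\tilde z(iT_s)=-\int_{(i-1)T_s}^{iT_s}\exp(A_s(iT_s-\tau))\bar B(\tau)\sigma(\tau)\,\dd\tau+\int_{(i-1)T_s}^{iT_s}\exp(A_s(iT_s-\tau))(\bar B(\tau)-\bar B((i-1)T_s))\hat\sigma((i-1)T_s)\,\dd\tau$. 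Writing \eqref{equ:L1adaptationlaw} as $\hat\sigma(iT_s)=-\bar B(iT_s)^{-1}\Lambda\,\tilde z(iT_s)$ with $\Lambda\triangleq(\exp(A_sT_s)-I)^{-1}A_s\exp(A_sT_s)$ then expresses $\hat\sigma(iT_s)$ entirely through these two integrals.

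The crux is the exact identity $\Lambda\,A_s^{-1}(\exp(A_sT_s)-I)=\exp(A_sT_s)$, which holds because $\Lambda$ is a function of the diagonal matrix $A_s$ and $\int_{(i-1)T_s}^{iT_s}\exp(A_s(iT_s-\tau))\,\dd\tau=A_s^{-1}(\exp(A_sT_s)-I)$. It lets me represent $\sigma(t)=\bar B(iT_s)^{-1}\Lambda\int_{(i-1)T_s}^{iT_s}\exp(A_s(iT_s-\tau))\bar B(iT_s)\sigma(t)\,\dd\tau-\bar B(iT_s)^{-1}(\exp(A_sT_s)-I)\bar B(iT_s)\sigma(t)$ and subtract, so that $\hat\sigma(iT_s)-\sigma(t)$ splits into three $O(T_s)$ remainders: (A) $\bar B(iT_s)^{-1}\Lambda\int\exp(A_s(iT_s-\tau))[\bar B(\tau)\sigma(\tau)-\bar B(iT_s)\sigma(t)]\,\dd\tau$; (B) $-\bar B(iT_s)^{-1}\Lambda\int\exp(A_s(iT_s-\tau))(\bar B(\tau)-\bar B((i-1)T_s))\hat\sigma((i-1)T_s)\,\dd\tau$; and (C) the similarity residual $\bar B(iT_s)^{-1}(\exp(A_sT_s)-I)\bar B(iT_s)\sigma(t)$. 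For (A) I split the bracket as $\bar B(\tau)[\sigma(\tau)-\sigma(t)]+[\bar B(\tau)-\bar B(iT_s)]\sigma(t)$, bounding the first piece by the Lipschitz estimate of Assumption~\ref{aspt: main assumption} over the span $\abs{\tau-t}\le2T_s$ with $\norm{x(\tau)-x(t)}\le\phi_1\abs{\tau-t}$ (where $\phi_1$ from \eqref{def: phi1} upper-bounds $\norm{\dot x}$ on the tube via Propositions~\ref{prop: bound of Bs.} and~\ref{prop:bounds of Ps Ub and Uad}), and the second by the Lipschitz constant $L_B$ of $\bar B$; (B) by $L_B$ and $\norm{\hat\sigma}\le\Delta_{\hat\sigma}$; and (C) by $\norm{\exp(A_sT_s)-I}\le\abs{\lambda_m(A_s)}T_s$. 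The diagonal-Hurwitz structure of $A_s$ supplies $\norm{\exp(A_s s)}\le1$ for $s\ge0$ and the key estimate $\norm{\Lambda}\le\sqrt6/T_s$, which converts the $O(T_s^2)$ integrals of (A)--(B) into $O(T_s)$ terms; collecting the three bounds, together with $\Delta_{\bar B},\Delta_{\bar B^{-1}}$ from Proposition~\ref{prop: bound of Bs.}, reproduces exactly the three groups of $\zeta_2(A_s)$ in \eqref{equ: zeta 2} and yields $\norm{\sigma(t)-\hat\sigma(iT_s)}\le\zeta_2(A_s)T_s$.

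I expect the main obstacle to be the algebraic bookkeeping of the third paragraph: arranging the added-and-subtracted terms so that the $O(1)$ contributions cancel exactly and only the genuinely $O(T_s)$ residuals remain, together with the sharp estimate $\norm{\Lambda}\le\sqrt6/T_s$. The latter is essential, since any crude bound on $(\exp(A_sT_s)-I)^{-1}$ diverges as $T_s\to0$ and would destroy the $O(T_s)$ conclusion; controlling it requires exploiting the exact diagonal form of $A_s\exp(A_sT_s)(\exp(A_sT_s)-I)^{-1}$ rather than treating the three factors independently.
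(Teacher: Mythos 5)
Your proof is correct, and it reaches the paper's bound by a genuinely different mechanism at the key step. Both proofs share the same skeleton (zero initial prediction error gives $\hat\sigma\equiv 0$ on $[0,T_s)$; the adaptation law cancels the $\tilde z(iT_s)$-term in \eqref{equ:closed-form solution to prediction error} so that $\tilde z$ at a sampling instant is a pure two-integral expression over the previous interval; the error then splits into a Lipschitz-in-$\sigma$ piece, a Lipschitz-in-$\bar B$ piece, and an $\exp(A_sT_s)-I$ residual). The difference is how the $1/T_s$ singularity of the adaptation gain is neutralized. The paper applies the mean value theorem element-wise to the integrals (valid because the diagonal entries of $\exp(A_s(\cdot))$ are positive), collapsing them to point evaluations $\Bar{B}^\ast\sigma^\ast$ and $\Bar{B}^\#\hat{\sigma}^\#$ multiplied by $A_s^{-1}(I-\exp(A_sT_s))$; this factor cancels the gain \emph{exactly}, leaving only $\exp(A_sT_s)$ (norm at most $1$), so no bound on $\Lambda=(\exp(A_sT_s)-I)^{-1}A_s\exp(A_sT_s)$ is ever needed — the $\sqrt6$ factors there come instead from passing between $\norm{\cdot}$ and $\norm{\cdot}_\infty$ in the element-wise argument. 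You instead keep the integrals, insert the exact add-and-subtract representation of $\sigma(t)$ built from $\Lambda A_s^{-1}(\exp(A_sT_s)-I)=\exp(A_sT_s)$, and estimate $\norm{\Lambda}\leq\sqrt6/T_s$ directly from diagonality (in fact $\norm{\Lambda}\leq 1/T_s$ suffices, since for each eigenvalue $\lambda_i<0$ the entry is $u_i/(T_s(e^{u_i}-1))\cdot e^{u_i}\big|_{u_i=-\lambda_iT_s}\leq 1/T_s$), which turns your $O(T_s^2)$ integral remainders into $O(T_s)$ terms. Your route buys a cleaner argument — no element-wise MVT bookkeeping, no continuity-of-integrand requirement beyond integrability — and your constants come out component-wise no larger than $\zeta_2(A_s)$ (your $\bar B$-variation spans are $T_s$ rather than the paper's $2T_s$, and your term (C) needs no $\sqrt6$), so the stated inequality holds with room to spare; the paper's route buys exact cancellation and hence needs no gain estimate at all. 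One small correction to your closing remark: a crude factor-by-factor bound on $(\exp(A_sT_s)-I)^{-1}$ would \emph{not} destroy the $O(T_s)$ rate — it still gives $\norm{\Lambda}=O(1/T_s)$, only with constant roughly $\abs{\lambda_m(A_s)}/\abs{\lambda_M(A_s)}$ replacing $1$ — it would merely fail to match the specific constant $\zeta_2(A_s)$ claimed in the proposition; the sharp diagonal computation is needed for the constant, not for the rate.
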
 

\begin{remark}\label{rmk: estimation error bound}
Notice that for $t \geq T_s$, the estimation error converges to zero as the sampling time $T_s$ goes to zero. 

\end{remark}

Finally, we are ready to state our main theorem.
\begin{theorem}\label{thm: main theorem}
Let Assumptions~\ref{aspt: main assumption} and~\ref{asump: unmatched uncertainty assumption} and the conditions in~\eqref{equ:condition on filter bandwidth} and~\eqref{equ: condition on sampling rate} hold. 
Then using the geometric control law in~\eqref{eq:thrust control} and~\eqref{eq:torque control}, and the $\mathcal{L}_1$ adaptive control law in~\eqref{equ:L1controlinput}, we have 
\begin{equation}\label{equ: main theorem }
    x(t) \in \mathcal{O}(x_d(t),\rho), \quad \forall t \geq 0.
\end{equation}
Moreover, the closed-loop system's state $x(t)$ in~\eqref{equ:uncertain system with full state and with L1} is uniformly ultimately bounded, i.e., for any $t_1 > 0$, we have
\begin{equation}
 x(t) \in \mathcal{O}\left(x_d(t),\mu(\omega,T_s,t_1)\right) \subset \mathcal{O}(x_d(t),\rho), \quad \forall t \geq t_1 >0,
\end{equation}
where the ultimate bound is defined as
\begin{align}\label{ultimate bound}
\mu&(\omega,T_s,t_1) \triangleq \nonumber \\
&\sqrt{\frac{e^{-\beta t_1}V_0 + c_3 \rho \zeta_1(\omega) + \zeta_4(A_s,\omega)T_s + c_4\rho\Delta_{\sigma_{um}}}{\underline{\gamma}}}.
\end{align}
\end{theorem}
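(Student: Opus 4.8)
The plan is to build on the geometric-controller Lyapunov function $V$ from Proposition~\ref{prop:geometricproposition} and show that, along trajectories of the uncertain closed loop~\eqref{equ:uncertain system with full state and with L1}, $V$ obeys a perturbed dissipation inequality $\dot V \le -\beta V + \delta(t)$ whose forcing $\delta$ the $\mathcal{L}_1$ augmentation drives small. Because every bound invoked (Assumptions~\ref{aspt: main assumption}--\ref{asump: unmatched uncertainty assumption}, Propositions~\ref{prop:bounds of Ps Ub and Uad}--\ref{prop: estimation error bound}) is only assumed on $\mathcal{T}(\rho)$, I would \emph{not} presuppose that $x(t)$ stays in the tube; instead I would close the argument by a continuity/first-exit contradiction. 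Since $d(x(0),x_d(0))<\rho$ by the definition~\eqref{def: definition of rho} of $\rho$, the trajectory starts strictly inside $\mathcal{O}(x_d(0),\rho)$, so there is a first time $\tau$ (if any) at which $d(x(\tau),x_d(\tau))=\rho$; on $[0,\tau]$ all hypotheses apply and I would derive $V(\tau)<\underline{\gamma}\rho^2$, contradicting $V(\tau)\ge\underline{\gamma}\rho^2$ forced by $d=\rho$ through~\eqref{equ: lyapunovboundfinal}. Hence no such $\tau$ exists and~\eqref{equ: main theorem } holds.

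The first computational step is to make $\delta(t)$ explicit. The uncertainties enter~\eqref{equ:quadrotor partial states uncertain dynamics} only through $\dot v$ and $\dot\Omega$, so differentiating $V$ reproduces the nominal estimate $\dot V\le-\beta V$ of~\eqref{equ:lyapunovdotsmallerthanlyapunov} plus two inner products: the gradient of $V$ along the $(e_v,e_\Omega)$ channels contracts, against $B(R)$ and $B^\bot(R)$, into exactly the vectors $P_1$ and $P_2$ of~\eqref{def: P1 P2 definition}. Concretely I expect $\dot V\le -\beta V + P_1\cdot r_m + P_2\cdot[Re_1\ Re_2]\sigma_{um}$, where $r_m=u_{ad}+\sigma_m$ is the residual left in the matched channel. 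Then Proposition~\ref{prop:bounds of Ps Ub and Uad} gives $\norm{P_1}\le c_3\rho$ and $\norm{P_2}\le c_4\rho$, while $\norm{[Re_1\ Re_2]}=1$ and Assumption~\ref{aspt: main assumption} gives $\norm{\sigma_{um}}\le\Delta_{\sigma_{um}}$; this already isolates the $c_4\rho\Delta_{\sigma_{um}}$ term of the final bound.

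The crux is bounding the matched residual $r_m$. Using $u_{ad}=-C(s)\hat\sigma_m$ from~\eqref{equ:L1controlinput} I would split $r_m=(1-C(s))\sigma_m+C(s)(\sigma_m-\hat\sigma_m)$. The high-pass piece $(1-C(s))\sigma_m=\frac{s}{s+\omega}\sigma_m$ is small because $\sigma_m$ is slowly varying: the magnitude/Lipschitz bounds of Assumption~\ref{aspt: main assumption} (with $\norm{\dot x}$ controlled by $\phi_1$ from~\eqref{def: phi1}) together with the impulse-response ($L_1$) norm of the cascade $\frac{1}{s+\beta}\cdot\frac{s}{s+\omega}$ acting on $\sigma_m$ should yield the two pieces of $\zeta_1(\omega)$ in~\eqref{equ: zeta 1}, namely $\frac{\Delta_{\sigma_m}}{\abs{\beta-\omega}}$ from the magnitude bound and $\frac{L_{\sigma_{m_t}}+L_{\sigma_{m_x}}\phi_1}{\beta\omega}$ from the derivative bound; multiplying by $\norm{P_1}\le c_3\rho$ gives the $c_3\rho\zeta_1(\omega)$ contribution. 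The estimation-error piece $C(s)(\sigma_m-\hat\sigma_m)$ is controlled by Proposition~\ref{prop: estimation error bound}: for $t\ge T_s$ the error is $\le\zeta_2(A_s)T_s$, whereas over the initial window $t<T_s$ it is only $\le\Delta_\sigma$, and passing this larger transient through the filter impulse response $\omega e^{-\omega t}$ contributes the $\zeta_3(\omega)=\Delta_\sigma\omega$ piece of~\eqref{equ: zeta 3}; together these produce the $\zeta_4(A_s,\omega)T_s$ term. I expect this convolution/filter bookkeeping --- tracking the interaction of the two filters with the Lyapunov decay and stitching the $t<T_s$ and $t\ge T_s$ regimes --- to be the main obstacle, since it is precisely where the constants $\zeta_1,\zeta_2,\zeta_3$ are generated.

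Finally I would assemble $\dot V\le-\beta V+\delta$ with $\delta$ the sum of the three contributions just bounded, integrate by the comparison lemma to obtain $V(t)\le e^{-\beta t}V_0+c_3\rho\zeta_1(\omega)+\zeta_4(A_s,\omega)T_s+c_4\rho\Delta_{\sigma_{um}}$, and read off both conclusions. For the invariance claim, conditions~\eqref{equ:condition on filter bandwidth}--\eqref{equ: condition on sampling rate} are exactly what forces this right-hand side below $\underline{\gamma}\rho^2$ (using $e^{-\beta t}\le1$), completing the first-exit contradiction, while Remark~\ref{remark on existence of qualified bandwidth and sampling time} guarantees these conditions are attainable. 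For uniform ultimate boundedness, evaluating the same integrated bound at any $t\ge t_1$ and dividing by $\underline{\gamma}$ through~\eqref{equ: lyapunovboundfinal} yields $d(x(t),x_d(t))\le\mu(\omega,T_s,t_1)$ with $\mu$ as in~\eqref{ultimate bound}, and the strict inequality in~\eqref{equ:condition on filter bandwidth} gives $\mu<\rho$, hence $\mathcal{O}(x_d(t),\mu)\subset\mathcal{O}(x_d(t),\rho)$.
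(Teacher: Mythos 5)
Your proposal is correct and matches the paper's own proof essentially step for step: the same perturbed Lyapunov inequality with the matched residual $u_{ad}+\sigma_m$ contracted against $P_1$ and the unmatched term against $P_2$, the same split of the matched residual into the high-pass piece $(1-C(s))\sigma_m$ (which the paper bounds via a virtual scalar system and a cited lemma, i.e., exactly your $\frac{1}{s+\beta}\cdot\frac{s}{s+\omega}$ cascade bound yielding $c_3\rho\,\zeta_1(\omega)$) and the filtered estimation error $C(s)(\sigma_m-\hat{\sigma}_m)$ stitched over $[0,T_s)$ and $[T_s,t]$ to give $\zeta_4(A_s,\omega)T_s$, closed by the same first-exit contradiction against the sampling-time condition~\eqref{equ: condition on sampling rate}. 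The readout of the ultimate bound $\mu(\omega,T_s,t_1)$ from the integrated comparison inequality via~\eqref{equ: lyapunovboundfinal} is likewise identical to the paper's.
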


\begin{remark}
Following the $\mathcal{L}_1$ adaptive control analysis in~\cite{hovakimyan2010L1}, one can also derive the performance bounds for the adaptive control input $u_{ad}$, which are essential for robustness guarantees of $\mathcal{L}_1$ adaptive controller. Due to page limitations, these derivations are not pursued in this paper.
\end{remark}
\begin{remark}
It is evident that the size of the uniform bound in~\eqref{equ: main theorem },  $\rho$, is lower bounded by the initial distance $d(x(0),x_d(0))$ and the positive scalars $\underline{\gamma}$ and $\overline{\gamma}$ (associated with the Lyapunov function of the nominal dynamics) from the definition in~\eqref{def: definition of rho}. Unless one redesigns the Lyapunov function that can tighten the bound in~\eqref{equ: lyapunovboundfinal}, the only way this bound $\rho$ can be further reduced is when the planner (which provides the desired state $x_d(0)$) can reduce $d(x(0),x_d(0))$. 
\end{remark}

\begin{remark}
Theorem~\ref{thm: main theorem} also provides a uniform ultimate bound $\mu(\omega,T_s,t_1)$ as shown in~\eqref{ultimate bound}. Since $e^{-\beta t_1}V_0$ decreases exponentially and  $\zeta_1(\omega)$ (defined in~\eqref{equ: zeta 1}) converges to zero as $\omega$ approaches infinity, if we do not consider the unmatched uncertainty (i.e., $\Delta_{\sigma_{um}} = 0$), then the bound $\mu(\omega,T_s,t_1)$ can be rendered arbitrarily small in finite time $0 < t_1 \leq t < \infty$ with a large $\omega$ and a small $T_s$. This feature is critical in controlling quadrotors through tight and cluttered environments: the bandwidth $\omega$ of $C(s)$ is the tuning knob of the trade-off between the performance and the robustness, while how small $T_s$ can be is subject to hardware limitations. 
The role of the low-pass filter $C(s)$ in the $\mathcal{L}_1$ adaptive control architecture is to decouple the control loop from the estimation loop~\cite{hovakimyan2010L1}. 
Thus, increasing the bandwidth $\omega$ to get a tighter bound will lead to $u_{ad}$ behaving as a high-gain signal, possibly sacrificing desired robustness levels~\cite{cao2010stability}. Therefore, this trade-off must always be taken into account when designing the low-pass filter.
\end{remark}

\begin{remark}
We want to highlight that the computational load for running $\mathcal{L}_1$Quad is fixed in each sampling interval. Thus, the sampling interval $T_s$ can be reduced as much as possible, only subject to hardware limitations (e.g., CPU power and time for performing other computations). It is worth noting that $\mathcal{L}_1$Quad holds a light computational load, involving only matrix summation and multiplication, along with matrix exponential of diagonal matrices. While we do employ a matrix inverse $\bar{B}^{-1}$ as shown in~\eqref{equ: sigma hat}, this matrix inverse has an explicit expression, allowing for fast computation without relying on numerical procedures.
\end{remark}

\section{Experimental results}\label{sec:experiments}
We demonstrate the performance of $\mathcal{L}_1$Quad on a custom-built quadrotor shown in Fig~\ref{fig:quadrotor with reference frames}. The quadrotor weighs 0.63~kg with a 0.22~m diagonal motor-to-motor distance. We use four T-Motor F60 2550KV BLDC motors with T5150 tri-blade propellers. The motors are driven by a Lumenier Elite 60A 4-in-1 ESC and a 4S Lipo battery.
The quadrotor is controlled by a Pixhawk 4 mini flight controller running the ArduPilot firmware. We modify the firmware
\footnote{Our firmware can be accessed from https://github.com/sigma-pi/L1Quad}
to enable the geometric controller and the $\mathcal{L}_1$ adaptive control, which both run at 400~Hz on the Pixhawk. Position feedback is provided by 9 Vicon V16 cameras, covering a motion capture volume of $7 \times 5 \times 3$ m$^3$. A ground station routes the Vicon measurements to the Pixhawk through Wifi telemetry at 50~Hz. We use ArduPilot's EKF to fuse the Vicon measurements with IMU readings onboard. A list of parameters in the experiments is shown in Table~\ref{tb: parameters in experiments}. We use the terms ``$\mathcal{L}_1$ on'' and ``$\mathcal{L}_1$ off'' to indicate that the system is controlled with and without the $\mathcal{L}_1$ adaptive control, respectively.
\setlength{\tabcolsep}{5pt} 
\renewcommand{\arraystretch}{1} 
  \captionsetup{
	skip=5pt, position = bottom}
\begin{table}[H]
	\centering
	\small
	\caption{Parameters in experiments.}
	\begin{tabular}{llcll}
		\toprule[1pt]
		 param. & value & & param. & value \\
		\cmidrule{1-2} \cmidrule{4-5} 
        $m$  & 0.62 kg & & $J$ & $10^{-3}\diag{3.0 , 1.8 , 3.2}$ kgm$^2$ \\
        g & 9.81 m/s$^2$ & & $K_p$ & $\diag{14 , 15 , 15}$\\
        $T_s$ & 0.0025 s & & $K_v$ & $\diag{1.5 , 0.9 , 1.1}$ \\
        $\omega_c$ ($f$) & 30  rad/s && $K_R$ & $10^{-1}\diag{5.5 , 3.5 , 1.5} $ \\
        $\omega_c$ ($M$) & 5, 15 rad/s & & $K_{\Omega}$ & $10^{-2}\diag{3.5 , 3 , 0.4} $  \\
        & && $A_s$ & $-\diag{5 , 5 , 5 , 10 , 10 , 10}$ \\
		\bottomrule[1pt]
	\end{tabular}\label{tb: parameters in experiments}
\end{table}

\begin{remark}\label{rmk: parameter tuning}
We tune the control parameters $K_p$, $K_v$,  $K_R$, and $K_{\Omega}$ in Table~\ref{tb: parameters in experiments} to ensure that the quadrotor hovers and tracks a circular trajectory with 1 m radius and 0.5 m/s linear speed, achieving a {root-mean-square error (RMSE)} of less than 3 cm. We then tune the bandwidth of the low-pass filter $\omega_c$ to balance the performance and the robustness. These parameters tuned in nominal scenarios are then employed in all the experiments presented below. We do not retune the parameters for any specific case. 
\end{remark}

\normalsize
\subsection{Uncertainty estimation and compensation}\label{subsec: injected unc}

In this section, we demonstrate the $\mathcal{L}_1$Quad's capability of estimating and compensating for uncertainties by comparing the artificially injected uncertainty $\sigma_{inj}(t) \in \mathbb{R}^4$, with the estimated uncertainty computed by the proposed architecture. We command the quadrotor to hover at $1$~m altitude and inject uncertainty to the thrust channel by setting the first element of $\sigma_{inj}(t)$ to $0.6  \sin(2 \pi t) + 0.6  \sin(\pi t)$ starting from $t=2$ as shown in Fig~\ref{fig:injected unc to thrust}. The $\mathcal{L}_1$ adaptive control input $u_{ad}(t)$ compensates for the pre-existing uncertainty for $t \in [0,2)$ and the sum of the pre-existing uncertainty and the injected uncertainty for $t \in [2, 16]$. To make sure the comparison is between the injected uncertainty and its estimation, we shifted the plot of $u_{ad}$ vertically to eliminate the effect of the pre-existing uncertainty. As shown in Fig~\ref{fig:injected unc to thrust}, the shifted $-u_{ad}$ closely follows the injected uncertainty $\sigma_{inj}$, which implies that $\mathcal{L}_1$Quad  accurately estimates and effectively compensates for uncertainties. 
\begin{figure}
     \begin{subfigure}[b]{\columnwidth}
         \centering
         \includegraphics[width=\columnwidth]{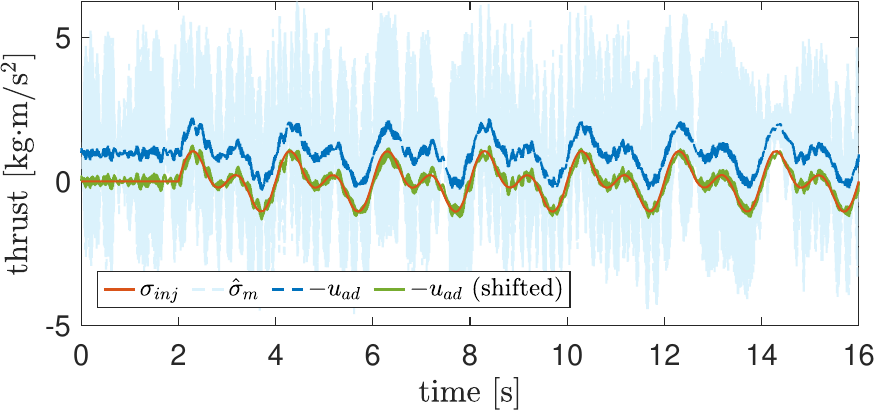}
         \caption{Injected uncertainty $\sigma_{inj}$ with estimation $\hat{\sigma}_m$ and compensation $u_{ad}$, all on the thrust channel.}\vspace{3mm}
        \label{fig:injected unc to thrust}
     \end{subfigure}
     \\
      \begin{subfigure}[b]{0.49\columnwidth}
         \centering
         \includegraphics[width=\columnwidth]{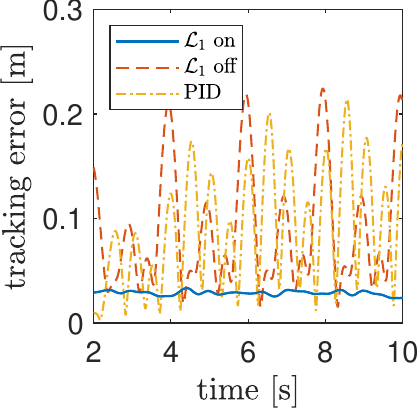}
         \caption{Comparison of the tracking error between $\mathcal{L}_1$ on, $\mathcal{L}_1$ off, and PID.}
        \label{fig:injected unc tracking error}
     \end{subfigure}
     \begin{subfigure}[b]{0.49\columnwidth}
         \centering
         \includegraphics[width=\columnwidth]{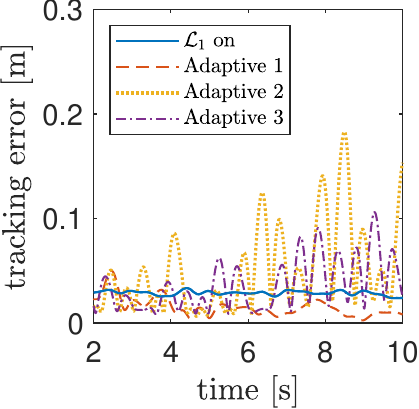}
         \caption{Comparison of the tracking error between $\mathcal{L}_1$ on, Adaptive 1, 2, and 3.}
        \label{fig:injected unc tracking error2}
     \end{subfigure}
     \caption{Results for the experiments with injected uncertainties on the thrust channel. Details about Adaptive 1, 2, and 3 are provided in Section~\ref{subsec: injected unc}.}
     \vspace{-4mm}
\end{figure}
Figure~\ref{fig:injected unc tracking error} and~\ref{fig:injected unc tracking error2} show the tracking error of the system with the $\mathcal{L}_1$ adaptive control on and off, where the former can keep a consistently small tracking error owing to accurate estimation and effective compensation of the uncertainties.
We also compare the performance of the existing geometric PID controller~\cite{goodarzi2013geometric} and the nonlinear adaptive controller~\cite{goodarzi2015geometric}, both offering uncertainty compensation by design. These two controllers use the same values of $K_p$, $K_v$, $K_R$, and $K_{\Omega}$ of the geometric controller in Table~\ref{tb: parameters in experiments}.
The PID controller uses parameters $k_i=3$, $c_1=2.5$, $k_I=0.2$, and $c_2=2$ following the notation in~\cite{goodarzi2013geometric}. For the adaptive controller~\cite{goodarzi2015geometric}, uncertainties are captured using a linearly parameterized approach, e.g., the uncertainties on the translational dynamics are parameterized by $\mathbb{W} \theta$ with $\mathbb{W}$ being the collection of basis functions (feature) and $\theta$ being the weights. 
Note that designing the basis functions in $\mathbb{W}$ is a challenging problem. In this experiment, we select $\mathbb{W}=[\diag{1,1,1},\ \mathcal{B}^\ast]$, where $\mathbb{W} \in \mathbb{R}^{3 \times 4}$, the first three columns of $\mathbb{W}$ capture the constant uncertainties on the $x$-, $y$-, and $z$-axis, and the last column has function $\mathcal{B}^\ast=[0,\ 0,\ \mathcal{B} ]^\top$, where $\mathcal{B}$ represents a scalar function that captures the time-varying ones on the $z$-axis. We test the performance of the adaptive controller~\cite{goodarzi2015geometric} under three basis function $B_1^\ast$, $B_2^\ast$, and $B_3^\ast$, named \textit{Adaptive~1}, \textit{Adaptive~2}, and \textit{Adaptive~3}, respectively, in Fig~\ref{fig:injected unc tracking error2}. 
Specifically, the third element of $\mathcal{B}_1^\ast$, $(\mathcal{B}^\ast_1)_3 = \sin(2 \pi  t) + \sin(\pi  t)$ provides a basis function to fully capture the injected uncertainty; $(\mathcal{B}^\ast_2)_3 = \sin(6t) + \sin(3  t)$ is the function with frequency off (by 4.5\%); and $(\mathcal{B}^\ast_3)_3 = \sin(2 \pi  t + \pi/8) + \sin(\pi t + \pi/8)$ is the function with phase off\footnote{We also test the basis function with a larger phase shift of $\pi/4$. 
The quadrotor crashed due to the phase lag, and this case is not reflected in Fig~\ref{fig:injected unc tracking error2}.}. 
\textit{Adaptive~1} achieves the smallest tracking error with the precise function $\mathcal{B}_1^\ast$ matching the injected uncertainty. Note that the precise basis function (with identical frequency and phase to the uncertainties) is unrealistic in practice. 
The imprecise basis functions $\mathcal{B}^\ast_2$ and $\mathcal{B}^\ast_3$ lead to worse tracking performance than $\mathcal{B}^\ast_1$. The degraded performance illustrates the challenges in choosing the appropriate basis functions in the traditional nonlinear adaptive control that requires a parametric structure for uncertainties: a small deviation of frequency or phase of the true uncertainty leads to significant performance degradation.
Overall, the existing adaptive controller~\cite{goodarzi2015geometric} performs better than the PID controller~\cite{goodarzi2013geometric}, whose performance is close to that of the geometric controller ($\mathcal{L}_1$ off). The integral term in PID takes effect after sufficiently integrating the uncertainty-induced tracking error, which performs poorly in this case with the fast-changing injected uncertainty.

\subsection{Slosh payload}
Slosh payload is a challenging scenario for the quadrotor's tracking control: the sloshing liquid constantly changes its distribution within the container, causing constant changes in the vehicle's center of mass and moment of inertia. Due to the high dimensions of the fluid dynamics, modeling the liquid distribution as a function of the quadrotor's state is impossible for real-time computation.
In this experiment, we attach a bottle to the bottom of the quadrotor (with the bottle's longitudinal direction aligned with the quadrotor's body-$x$ direction) and fill the bottle half-full. The total added weight is 0.317~kg, approximately 50\% of the quadrotor's weight. We command the quadrotor to track a circular trajectory with a 1 m radius and 2.5 m/s linear speed. 
To track such a trajectory, the quadrotor needs to tilt on the roll or pitch axis for a maximum of 45 degrees, leading to drastic sloshing of the liquid payload, as shown in Fig~\ref{fig:slosh payload snapshot}. 
The tracking error comparison between the $\mathcal{L}_1$ on and off is shown in Fig~\ref{fig:slosh payload tracking error}. Notably, since the slosh motion majorly happens on the $x$-axis of the quadrotor, there is an apparent improvement in tracking performance on the $x$-axis by the $\mathcal{L}_1$ adaptive control. For the total position tracking error, the $\mathcal{L}_1$ adaptive control maintains a consistently smaller tracking error than when it is off.
\begin{figure}
        \begin{subfigure}[b]{\columnwidth}
         \centering
         \includegraphics[width=\columnwidth]{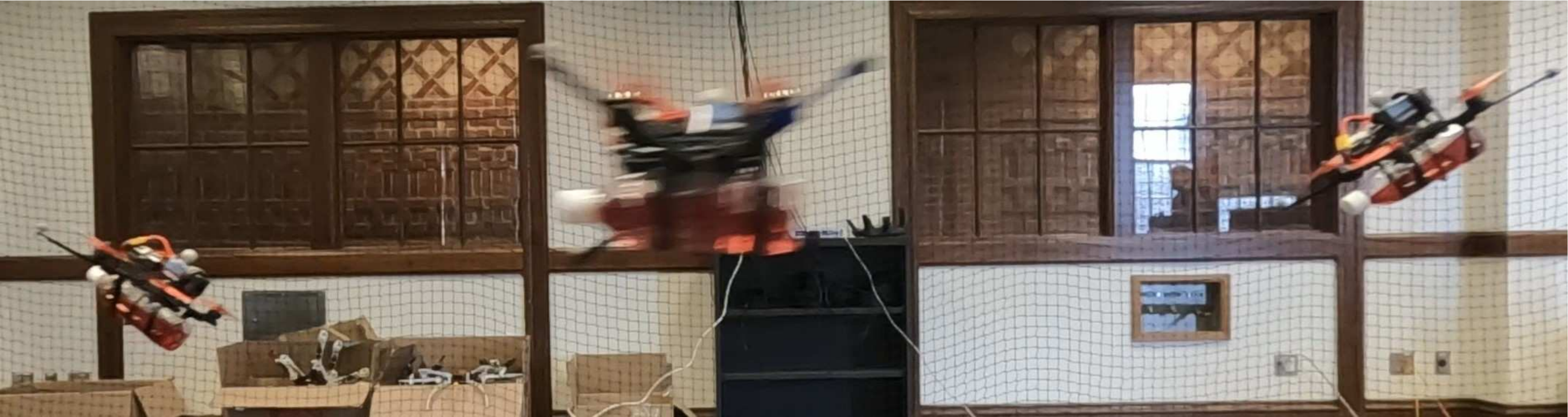}
         \caption{Snapshots of the quadrotor during the slosh payload test. Notice the tilted liquid surface when the quadrotor is at the left and right positions.}
        \label{fig:slosh payload snapshot}\vspace{3mm}
     \end{subfigure}
     \\
         \begin{subfigure}[b]{\columnwidth}
         \centering
         \includegraphics[width=\columnwidth]{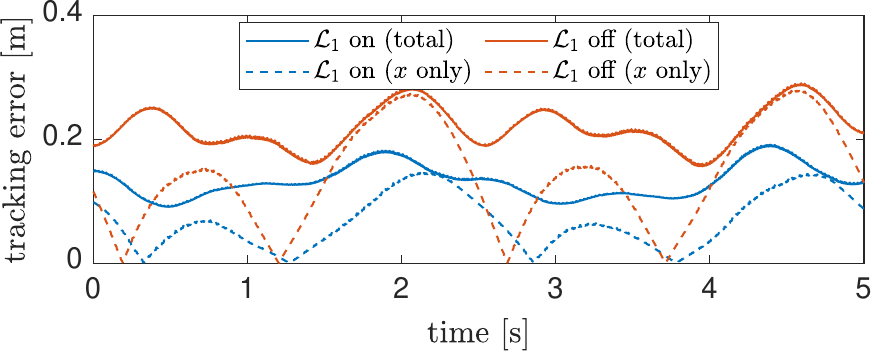}
         \caption{Position tracking error with the slosh payload for $\mathcal{L}_1$ on and off. The plots correspond to two rounds of the circular trajectory.}
        \label{fig:slosh payload tracking error}
     \end{subfigure}
     \caption{Slosh payload experiment.}
     \label{fig:slosh payload}
     \vspace{-6mm}
\end{figure}

\subsection{Tunnel}
In this experiment, we command the quadrotor to fly through a tunnel with 0.6 m (H) $\times$ 0.6 m (W) $\times$ 1.2 m (D). The tunnel is built to mimic the flying condition in confined space, e.g., a cave or mine passageway, where the quadrotor itself induces and therefore experiences complex airflow. We use Acrylic sheets for their transparency, which permits Vicon localization inside the tunnel. As shown in Fig~\ref{fig:tunnel snapshot}. the quadrotor flies from left to right following the trajectory $-1.2\cos(1/1.2t)$ with altitude of 0.3 m. Comparison of the tracking errors of $\mathcal{L}_1$ on, $\mathcal{L}_1$ off, and PID is shown in Fig~\ref{fig:tunnel tracking error}. With $\mathcal{L}_1$ on, the quadrotor achieves the minimum tracking error (within 0.03 m) while staying consistently smaller than with $\mathcal{L}_1$ off or PID. The latter two cases have the maximum tracking error that triples the one with $\mathcal{L}_1$ on, favoring $\mathcal{L}_1$ adaptive control in the environment with confined space.
\begin{figure}[H]
    \centering
    \begin{subfigure}[b]{\columnwidth}
         \centering
         \includegraphics[width=\columnwidth]{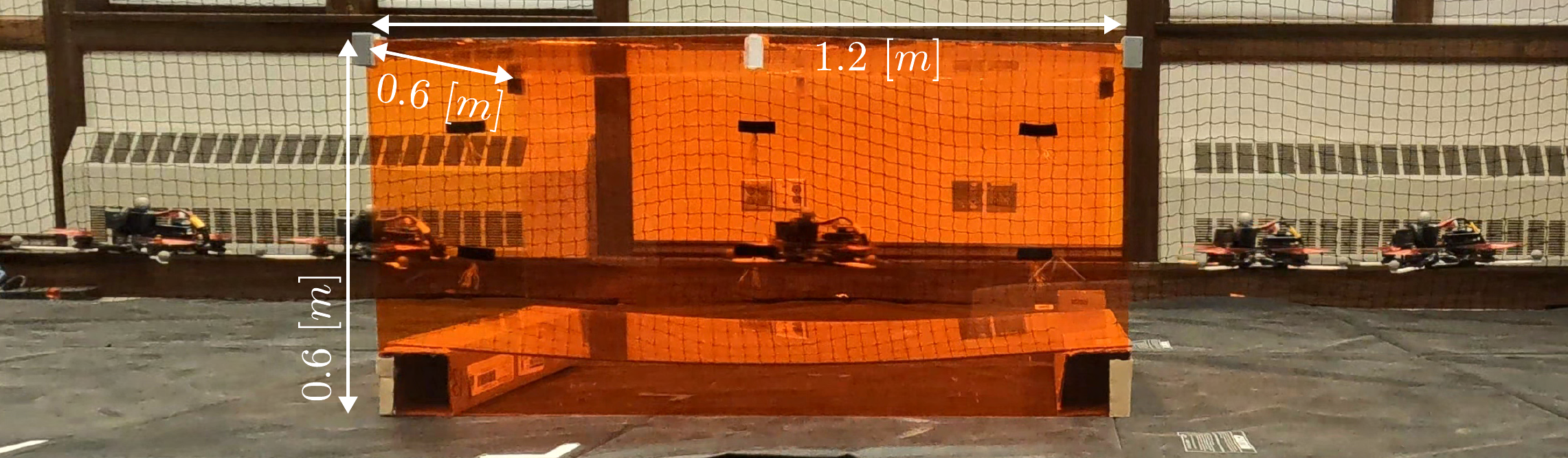}
         \caption{Snapshot of the tunnel experiment with $\mathcal{L}_1$ on.}
        \label{fig:tunnel snapshot} \vspace{3mm}
    \end{subfigure}
        \begin{subfigure}[b]{\columnwidth}
         \centering
         \includegraphics[width=\columnwidth]{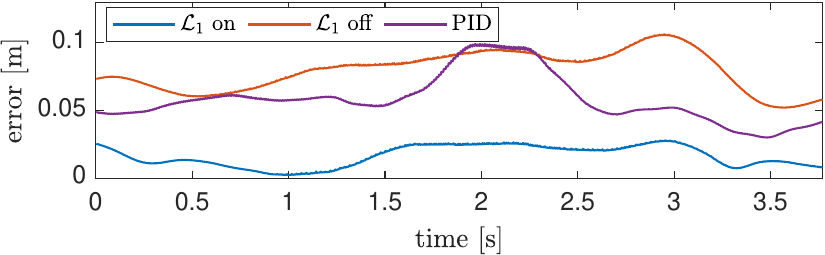}
         \caption{Tracking error of the compared controllers. The quadrotor enters and leaves the tunnel at around 1.26 s and 2.51 s, respectively.}
        \label{fig:tunnel tracking error}
    \end{subfigure}
    \caption{Results for the tunnel experiment.}\vspace{-2mm}
\end{figure}

\subsection{Chipped propeller}\label{subsec: chipped prop}
In this experiment, we chipped the tip of one propeller (shown in Figure~\ref{fig:chipped prop illustration}). We removed 1/4 of the propeller, which reduces the generated thrust from this propeller.
\begin{figure}[H]
    \centering
    \begin{subfigure}[b]{0.49\columnwidth}
        \includegraphics[width = 0.9\columnwidth]{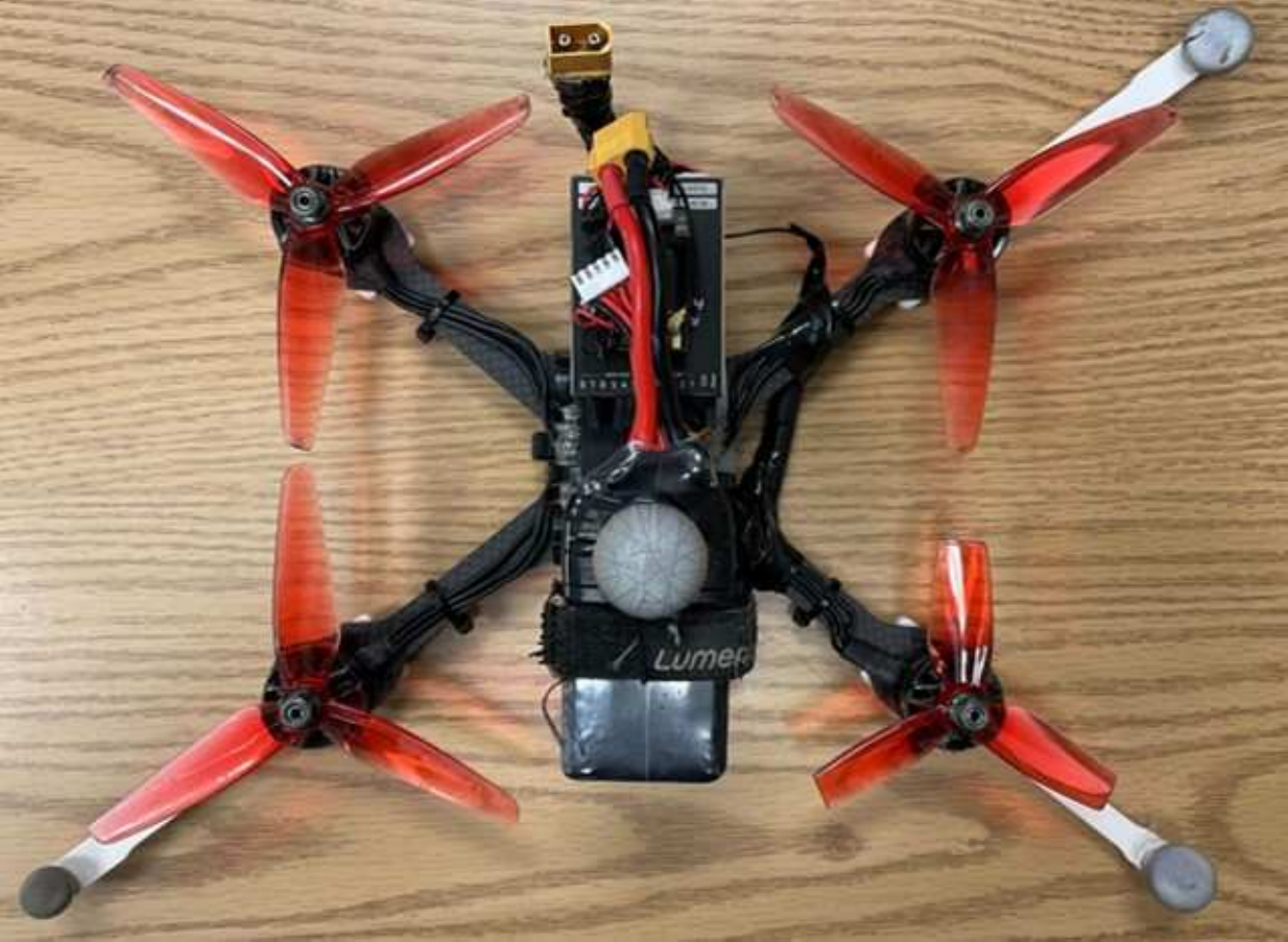}
        \caption{}
    \label{fig:chipped prop illustration}
    \end{subfigure}
    \begin{subfigure}[b]{0.49\columnwidth}
        \includegraphics[width=0.95\columnwidth]{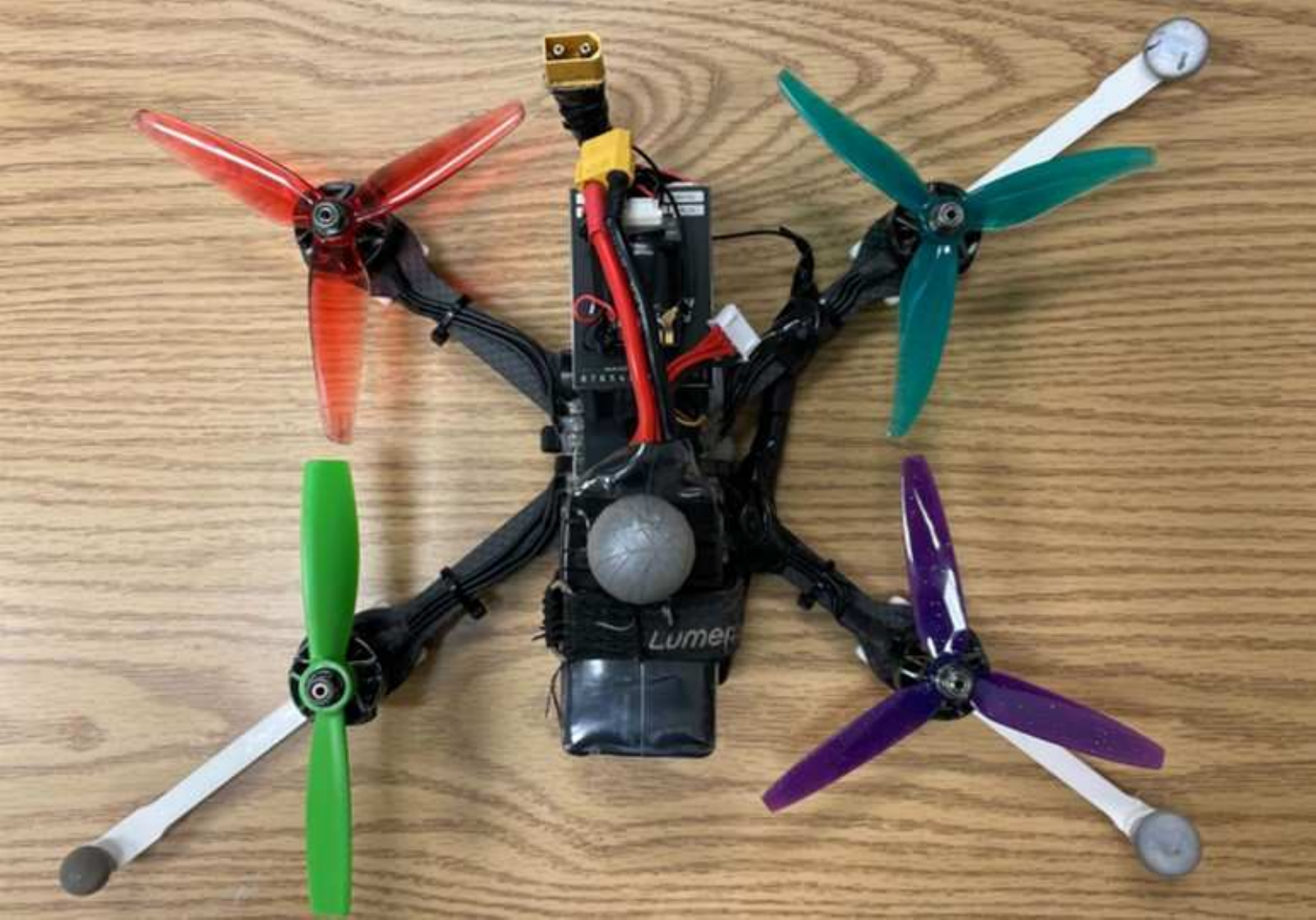}
        \caption{}
    \label{fig:mixed prop illustration}
    \end{subfigure}
    \caption{Quadrotors with (a) a chipped propeller and (b) mixed propellers. The mixed propellers are T-Motor T5150 Tri-Blade Propeller (upper left), HQProp 5$\times$4.5RG Bullnose Propeller (lower left), Gemfan 51466 V2 Tri-Blade Propeller (upper right), and Gemfan 513D Tri-Blade 3D Propeller (bottom right).}\vspace{-3mm}
\end{figure}

The quadrotor takes off to 1 m altitude from $-3.8$ to $-2$~s, during which the $\mathcal{L}_1$ adaptive control is turned off. The quadrotor is supposed to hover after $-2$ s. However, due to the thrust loss at the chipped propeller, the quadrotor cannot stabilize itself in the hovering position. 
This instability is reflected by the oscillations observed from $-2$ to $0$ s in Fig~\ref{fig:chipped prop L1 off to on}. At 0 s, we turn the $\mathcal{L}_1$ adaptive control on. Consequently, the quadrotor can stabilize itself to hover owing to $\mathcal{L}_1$Quad's compensation to the thrust generated by the chipped propeller. 
We further conduct the test that transitions the quadrotor from $\mathcal{L}_1$ on to $\mathcal{L}_1$ off, with results shown in Fig~\ref{fig:chipped prop L1 on to off}. When the $\mathcal{L}_1$ adaptive control is turned off at $0$~s, the quadrotor once again begins to oscillate.
\begin{figure}[H]
\vspace{-2mm}
     \begin{subfigure}[b]{\columnwidth}
         \centering         \includegraphics[width=\columnwidth]{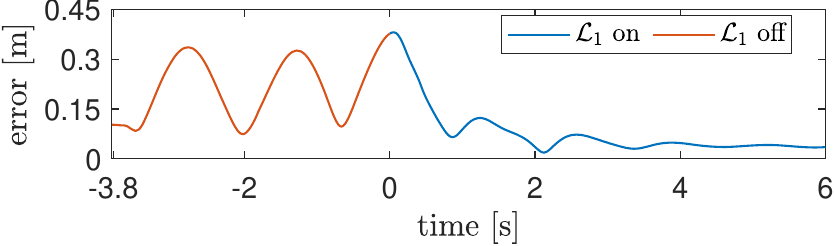}
         \caption{Switching $\mathcal{L}_1$ off to $\mathcal{L}_1$ on during hover.}
        \label{fig:chipped prop L1 off to on}
     \end{subfigure}
     \\
      \begin{subfigure}[b]{\columnwidth}
         \centering
         \includegraphics[width=\columnwidth]{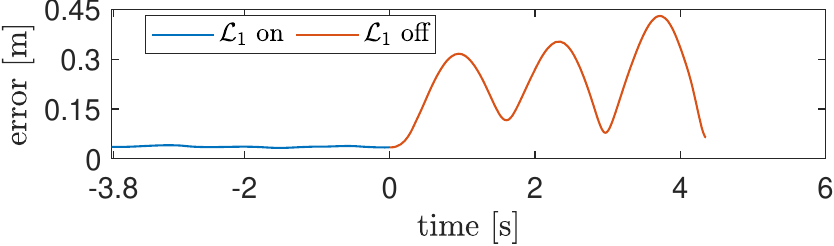}
         \caption{Switching $\mathcal{L}_1$ on to $\mathcal{L}_1$ off during hover.}
        \label{fig:chipped prop L1 on to off}
     \end{subfigure}
     \caption{Transient tracking performance when switching between $\mathcal{L}_1$ on and off with one chipped propeller.}
     \label{fig:chipped prop switching between L1 on and off}
     \vspace{-1mm}
\end{figure}

\setcounter{figure}{10}
\begin{figure*}[b]
     \begin{subfigure}[b]{\columnwidth}
         \centering
         \includegraphics[width=\columnwidth]{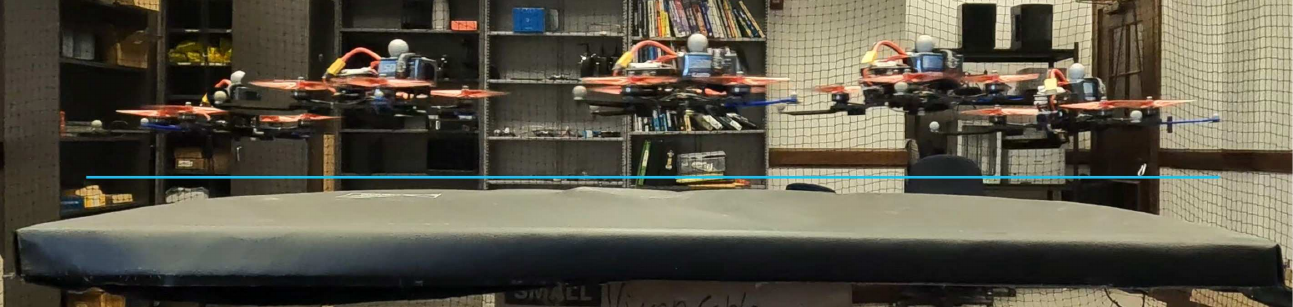}
         \caption{$\mathcal{L}_1$ off}
        \label{fig:ground effect L1 off illustration}
     \end{subfigure}
     \hfill
      \begin{subfigure}[b]{\columnwidth}
         \centering
         \includegraphics[width=\columnwidth]{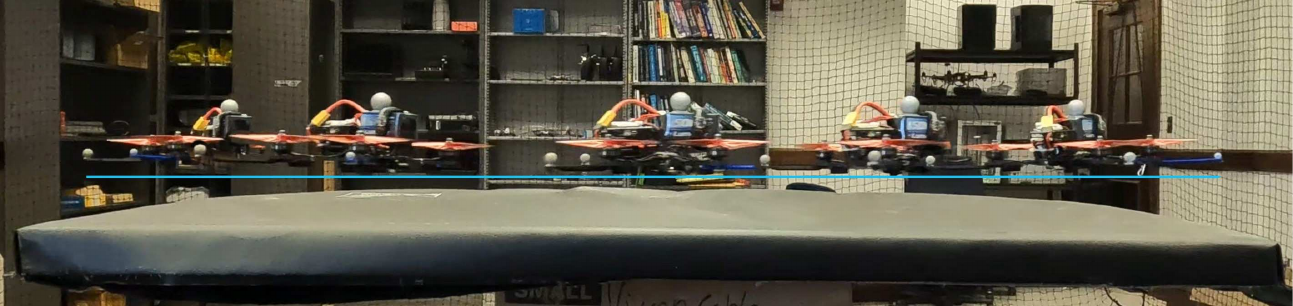}
         \caption{$\mathcal{L}_1$ on}
        \label{fig:ground effect L1 on illustration}
     \end{subfigure}
     \caption{The quadrotor flies over the ground effect zone. A blue line is supplied for comparing the altitude tracking performance.}
     \label{fig:ground effect illustration}\vspace{-4mm}
\end{figure*}

Moreover, we conduct the experiment to have the quadrotor (with the chipped propeller) follow a circular trajectory with a 1 m radius and 0.5 m/s linear speed. The actual trajectory is compared to the desired trajectory, as shown in Fig~\ref{fig:chipped prop 0.5 m/s tracking}. The RMSE is 0.045 m. We show the commanded pulse-width modulation (PWM) for each motor in Fig~\ref{fig:chipped prop motor PWM}, where motor 4 (with the chipped propeller) counteracts the thrust loss by commanding a higher PWM than the other motors. We also test the same flight with $\mathcal{L}_1$ off, in which the quadrotor crashed due to thrust loss from the chipped propeller.
\setcounter{figure}{7}
\begin{figure}[H]
\vspace{-4mm}
     \begin{subfigure}[b]{\columnwidth}
         \centering
         \includegraphics[width=\columnwidth]{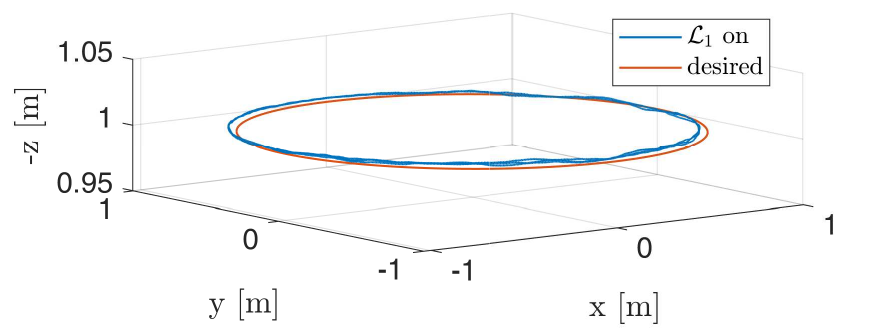}
         \caption{Comparison between the real trajectory and desired trajectory.}
        \label{fig:chipped prop 0.5 m/s tracking}
     \end{subfigure}

      \begin{subfigure}[b]{\columnwidth}
         \centering
         \includegraphics[width=\columnwidth]{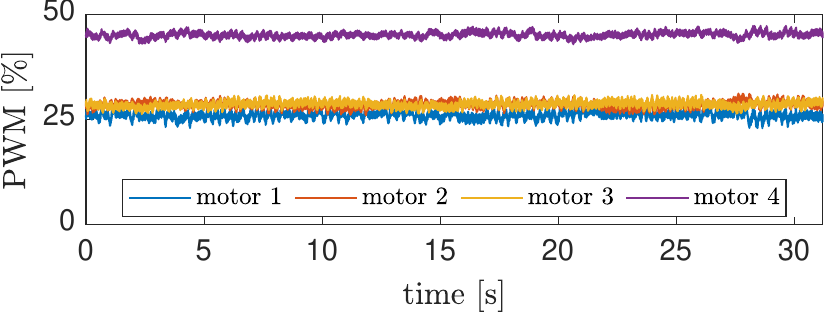}
         \caption{Pulse-width modulation (PWM) for the four motors.}
        \label{fig:chipped prop motor PWM}
     \end{subfigure}
     \caption{Results for the quadrotor with a chipped propeller circulating at 0.5 m/s with 1 m radius.}\vspace{-2mm}
     \label{fig:chipped prop 0.5 m/s results}
     \vspace{-1mm}
\end{figure}
\subsection{Mixed propellers}\label{subsec: mixed prop}
In this experiment, we apply four different propellers on the quadrotor (inspired by the experiment in \cite{saviolo2023active}). Figure~\ref{fig:mixed prop illustration} shows the propeller configuration. Since each propeller has its own thrust coefficient, the mixed propellers lead to unbalanced scaling of thrust from each propeller. We command the quadrotor to hover, and the tracking performance is shown in Fig~\ref{fig:mixed prop tracking}. We start with $\mathcal{L}_1$ on. When the $\mathcal{L}_1$ is turned off at 5 s, the tracking performance deteriorates instantly, and the quadrotor starts to oscillate with increasing magnitude.
We switch $\mathcal{L}_1$ back on at 26 s, and the $\mathcal{L}_1$Quad stabilizes the quadrotor with fast transient and small tracking error.

\begin{figure}[H]
    \centering
    \includegraphics[width=\columnwidth]{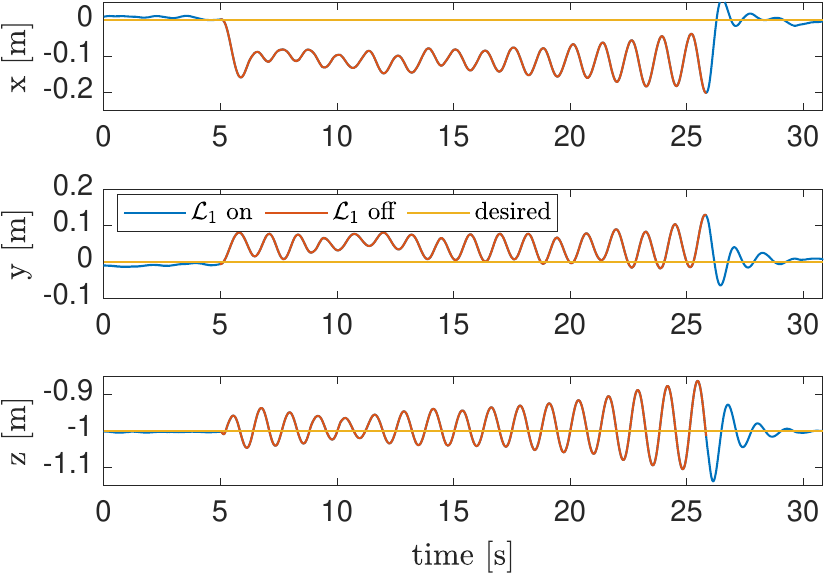}
    \caption{Tracking performance of the quadrotor with mixed propellers.}
    \label{fig:mixed prop tracking}
\end{figure}

\subsection{Voltage drop}
In this experiment, we test the influence of the battery's voltage drop on the tracking performance. Voltage drop is a common issue for quadrotors, as the ubiquitously used Lithium polymer (LiPo) battery's voltage drops from 4.2~V (fully charged) to 3.2 V (minimum voltage for healthy battery) for each cell when discharged. The varying voltage during the quadrotor's flight will affect the generated thrust on each rotor if not compensated. Specifically, the thrust reduces quadratically to the drop in battery voltage when the duty cycle command remains the same. We command the quadrotor to hover at 1 m while carrying a 0.5 kg payload. The extra payload is used to expedite the battery discharging process. For the altitude tracking, as shown in Fig~\ref{fig:voltage drop z tracking}, the altitude with $\mathcal{L}_1$ on stays steady on 1 m, whereas the altitude with $\mathcal{L}_1$ off drops as the voltage drops. It is worth noting that an initial altitude tracking error exists, which is caused by the extra payload and steady-state error of a geometric controller ($\mathcal{L}_1$ off). Figure~\ref{fig:voltage drop voltage} shows the voltage drop with $\mathcal{L}_1$ on and off. The voltage drop is almost identical in these two cases.
\begin{figure}[H]
      \begin{subfigure}[b]{\columnwidth}
         \centering
         \includegraphics[width=\columnwidth]{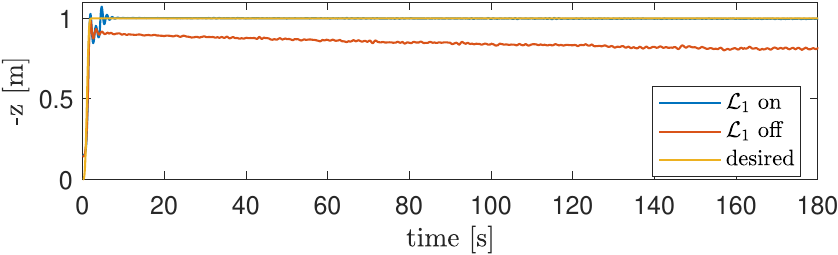}
         \caption{Altitude of the quadrotor. The altitude decreases with $\mathcal{L}_1$ off due to thrust loss from voltage drop.}
        \label{fig:voltage drop z tracking}
     \end{subfigure}
     \\
     \begin{subfigure}[b]{\columnwidth}
         \centering
         \includegraphics[width=\columnwidth]{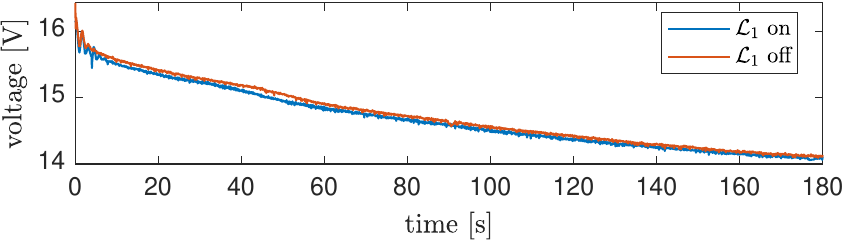}
         \caption{Battery voltage during hover.}
        \label{fig:voltage drop voltage}
     \end{subfigure}
     \caption{Results for the compensation of battery voltage drop by the $\mathcal{L}_1$ adaptive control.}
     \label{fig:voltage}
\end{figure}

\subsection{Ground effect}\label{subsec: ground effct}
The ground effect refers to the aerodynamic effect when the quadrotor flies close to the ground or a surface beneath it. In this experiment, we place a gym mat beneath the quadrotor's path. The top surface of the mat is at 0.95~m, whereas the altitude of the desired trajectory is 1 m (see Fig~\ref{fig:ground effect illustration} for the illustration of the setup). This mat induces the ground effect when the quadrotor flies on top of the mat: we name the area above the mat by ``ground effect zone.'' The quadrotor's altitude tracking is majorly impacted due to the reflected downwash when the quadrotor flies through the ground effect zone. Figure~\ref{fig:ground effect z tracking} shows the comparison of altitude tracking between $\mathcal{L}_1$ on and off, where the quadrotor enters and leaves the ground effect zone at 1.55 and 4.86~s, respectively.
\setcounter{figure}{11}
\begin{figure}
     \begin{subfigure}[b]{\columnwidth}
         \centering
         \includegraphics[width=\columnwidth]{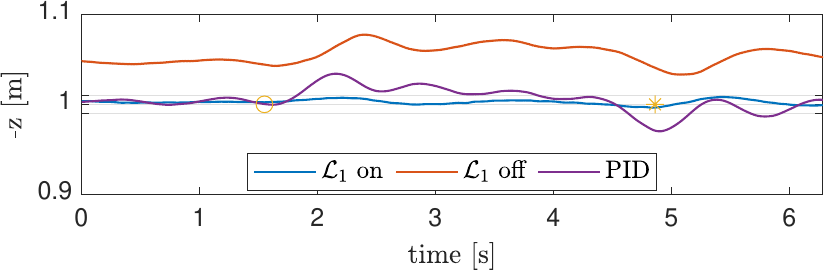}
         \caption{Altitude tracking in the ground effect zone. The desired altitude is 1 m. The circle and star indicate the time of entering and leaving the zone, respectively.}
        \label{fig:ground effect z tracking}
     \end{subfigure}
     \\
      \begin{subfigure}[b]{\columnwidth}
         \centering
         \includegraphics[width=\columnwidth]{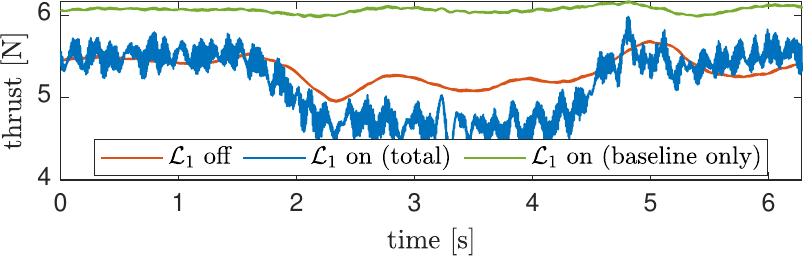}
         \caption{Thrust commands in the ground effect zone for $\mathcal{L}_1$ on and off.}
        \label{fig:ground effect thrust command}
     \end{subfigure}
     \caption{Results for the ground effect experiment}
     \label{fig:ground effect}
     \vspace{-3mm}
\end{figure}
\begin{figure*}
    \begin{subfigure}[b]{0.32\textwidth}
         \centering
         \includegraphics[width=\columnwidth]{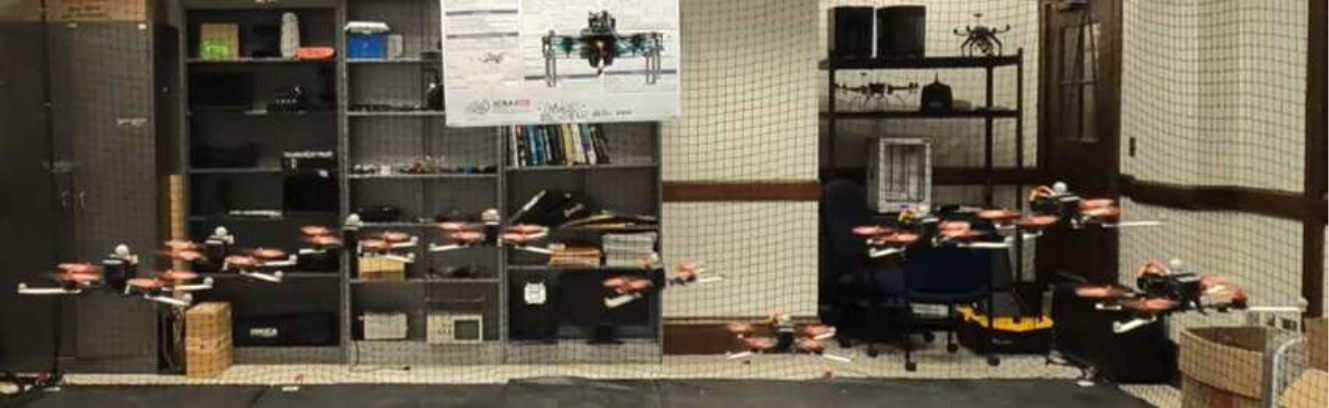}
         \caption{$\mathcal{L}_1$ off}
        \label{fig:downwash L1 off}
     \end{subfigure}
     \hfill
     \begin{subfigure}[b]{0.32\textwidth}
         \centering
         \includegraphics[width=\columnwidth]{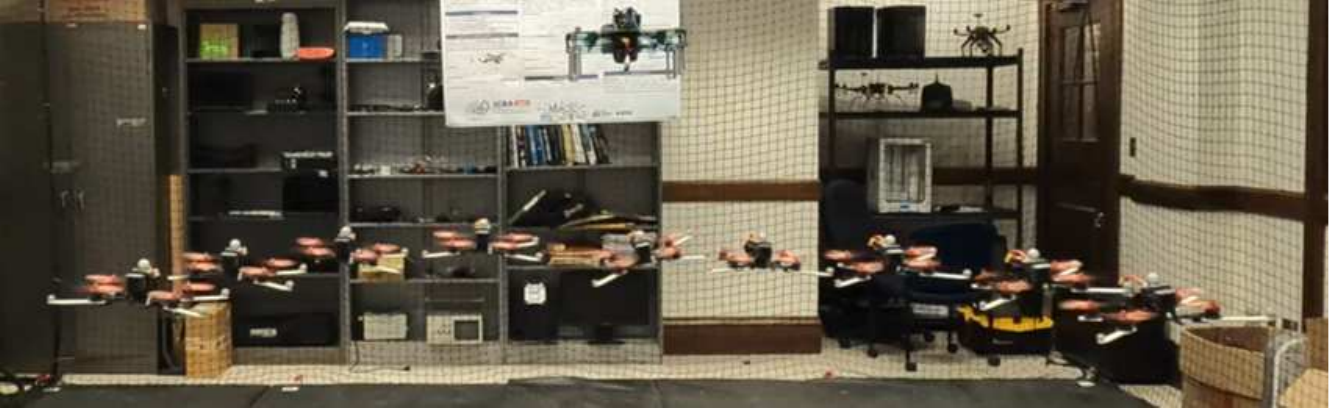}
         \caption{$\mathcal{L}_1$ on}
        \label{fig:downwash L1 on}
     \end{subfigure}
     \hfill
     \begin{subfigure}[b]{0.32\textwidth}
         \centering
         \includegraphics[width=\columnwidth]{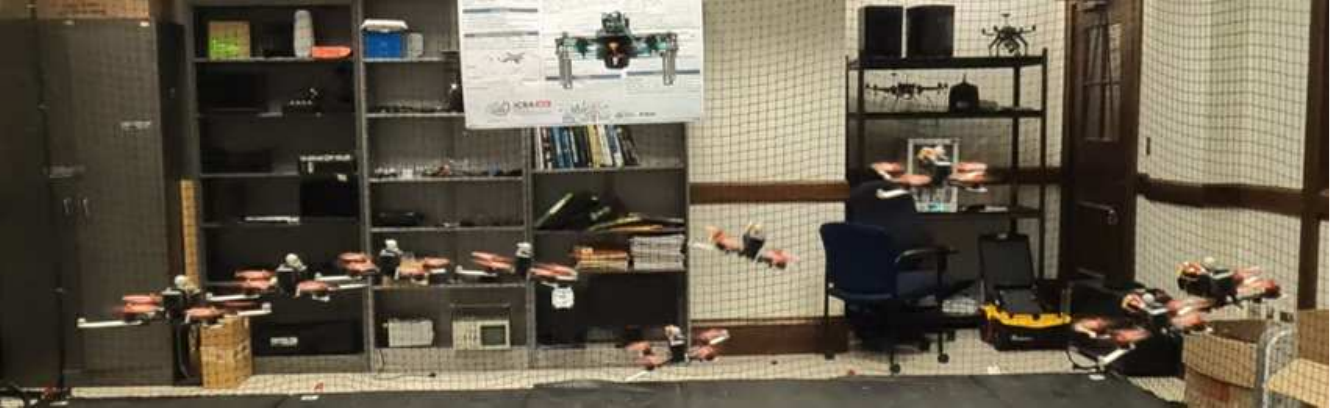}
         \caption{PID}
        \label{fig:downwash PID}
     \end{subfigure}
     \caption{Snapshots of the downwash experiment. The snapshots are taken from 0 s to 3.14 s corresponding to Figure~\ref{fig:downwash} with equal time intervals.}
     \label{fig:downwashsnapshot}
     \vspace{-5mm}
\end{figure*}

In the case of $\mathcal{L}_1$ on, the quadrotor closely tracks the desired altitude despite small perturbations when entering and leaving the ground effect region, with a maximum deviation within 1 cm.
Since the geometric controller ($\mathcal{L}_1$ off) admits a steady-state error and  results in the quadrotor flying at a higher altitude than the case with $\mathcal{L}_1$ on. This behavior leads to the quadrotor with $\mathcal{L}_1$ off suffering less from the ground effect than the one with $\mathcal{L}_1$ on. Despite this fact, the amplitude of perturbation with $\mathcal{L}_1$ on is kept smaller than in the case of $\mathcal{L}_1$ off.

We show the thrust commands in the cases of $\mathcal{L}_1$ on and off in Fig~\ref{fig:ground effect thrust command}. It can be seen that the total thrust command (baseline plus $\mathcal{L}_1$ adaptive control) is reduced in the ground effect zone when $\mathcal{L}_1$ is on, which compensates for the reflected downwash. Additionally, the $\mathcal{L}_1$ adaptive control applies a constant offset to compensate for the extra thrust (shown in green in Fig~\ref{fig:ground effect thrust command}) commanded by the geometric controller (which results in the steady-state error if not compensated for).
The thrust command in the case of $\mathcal{L}_1$ off is also reduced when the quadrotor flies through the ground effect zone. However, the reason for the thrust reduction with $\mathcal{L}_1$ off is fundamentally different from the one discussed above when $\mathcal{L}_1$ is on. 
The quadrotor was lifted by the ground effect, which causes a bigger altitude tracking error in the ground effect zone, as shown in Fig~\ref{fig:ground effect z tracking}. The lifted altitude leads to reductions in the commanded thrusts as the baseline controller reduces the thrusts to lower the altitude for smaller tracking errors. 
\subsection{Downwash}\label{subsec: downwash}
In this experiment, we test the trajectory tracking performance subject to downwash. Downwash is the complex disturbance generated by one rotorcraft on top of another vehicle, which severely degrades the attitude and position tracking of the latter. It is generally difficult to establish an accurate model based on the relative motion between two vehicles. Hence, adaptive control methods are preferred. We command a heavier quadrotor (1.4~kg) to hover at 1.6~m altitude, right on top of one point on the trajectory of the tested quadrotor, as shown in Fig~\ref{fig:downwashsnapshot}. The comparison of the altitude tracking performance between $\mathcal{L}_1$ on, $\mathcal{L}_1$ off, and PID is shown in Fig~\ref{fig:downwash}, with the downwash starting to affect the tested quadrotor around 1.5~s. The altitude tracking, displayed in Fig~\ref{fig:downwash}, indicates that the $\mathcal{L}_1$ adaptive control can quickly capture the effect of downwash and compensate for it, resulting in a fast transient response, whereas the quadrotor without compensation ($\mathcal{L}_1$ off) experienced instant deviation from the desired altitude by around 0.3 m. The PID controller performs poorly with a similar amount of deviation as in the case with $\mathcal{L}_1$ off. 
\setcounter{figure}{13}
\begin{figure}[ht]
\vspace{-1mm}
    \centering   \includegraphics[width=\columnwidth]{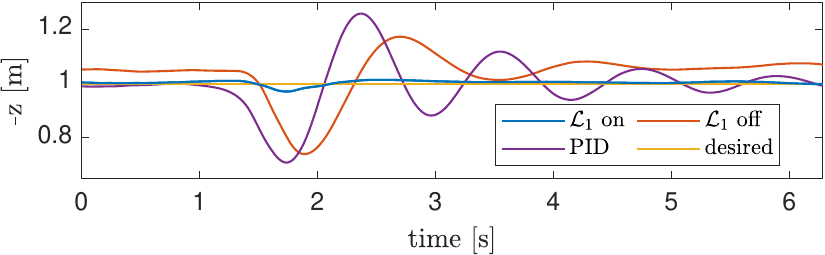}
     \caption{Altitude tracking in the experiments with downwash. The downwash starts to influence the quadrotor around 1.5 s. The heavier quadrotor that generates the downwash drifted around the desired hover position by 0.1~m, which leads to the different start time of downwash of the tested quadrotor.}
     \label{fig:downwash}\vspace{-3mm}
\end{figure}

\subsection{Hanging off-center weights}
In this experiment, we hang weights to the quadrotor while hovering. 
Specifically, the weights are hung right beneath the front-left motor, which is off the quadrotor's center of gravity. 
This experiment is a simplified scenario of delivery drones that pick up a package, albeit at an off-center location. 
The weights will introduce an unknown force and moment to the quadrotor in this setting, adding the complexity of stabilizing the quadrotor. 
We hang weights of 100, 200, and 300 g in the experiments. The tracking errors in these three cases are shown in Fig~\ref{fig: hang weight}. We compare three controllers: $\mathcal{L}_1$ off, $\mathcal{L}_1$ on, and PID. As the weight increases, the tracking error increases for all cases. $\mathcal{L}_1$ off shows the worst tracking performance and eventually crashes the quadrotor once the 300 g weight is hung. $\mathcal{L}_1$ on has advantages over PID in terms of faster transient and smaller steady-state error, as shown in Table~\ref{tb: hang weight comparison}, where the advantage is clearer as the weights increase. We compute the steady-state error as the average tracking error between 10 and 12 seconds. The settling time is computed as the time duration from when the weight is hung until the quadrotor begins to maintain a tracking error less than the steady-state error plus 0.015 m.
\begin{table}[H]
\vspace{-4mm}
	\centering
	\small
	\caption{Performance comparison of the hanging off-center weights experiments.}
	\begin{tabular}{c|ccc | c c c}
		\toprule[1pt]
		& \multicolumn{3}{c|}{Steady-state Error [m]} &\multicolumn{3}{c}{Settling Time [s]} \\  \hline
        Weights & \ellone on& \ellone off & PID & \ellone on& \ellone off & PID\\
        100g & 0.004 & 0.044 & \textbf{0.003} &\textbf{0.77}&1.18&1.56\\
        200g & \textbf{0.004} & 0.125 & 0.006 &\textbf{2.77}&3.81&3.84\\
        300g & \textbf{0.007} & Crash & 0.02 &\textbf{3.36}&Crash&6.6\\
        
		\bottomrule[1pt]
	\end{tabular}\label{tb: hang weight comparison}
    \vspace{-5mm}
\end{table}
\begin{figure}
    \centering
    \includegraphics[width = \columnwidth]{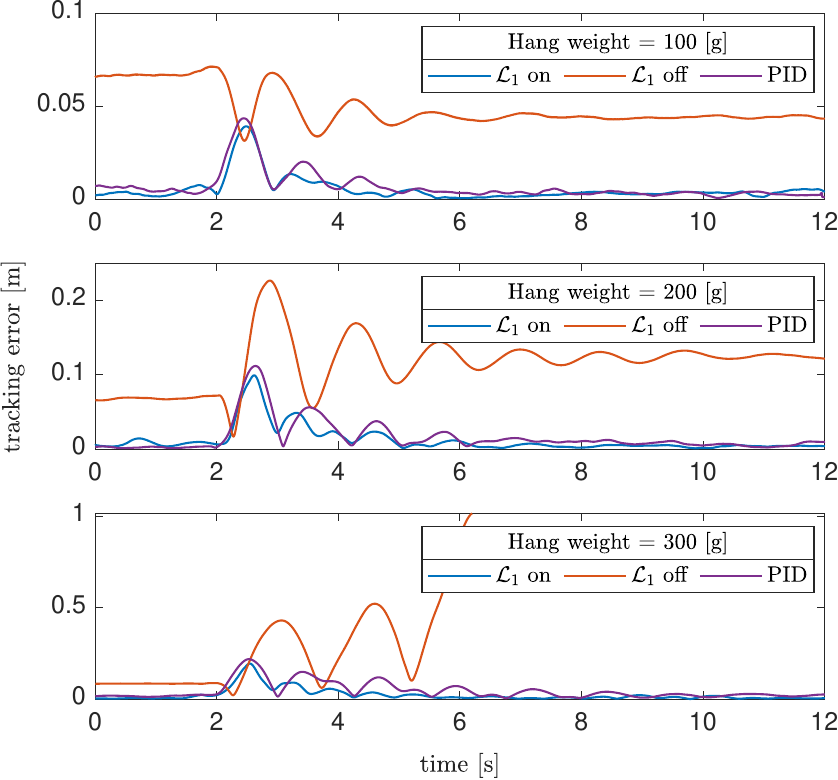} \caption{Tracking error of the quadrotor when hanging off-center weights in it starting at 2 s.}
    \label{fig: hang weight}
    \vspace{-5mm}
\end{figure}

\setcounter{figure}{15}
\begin{figure*}
\begin{subfigure}{\textwidth}
    \includegraphics[width = \textwidth]{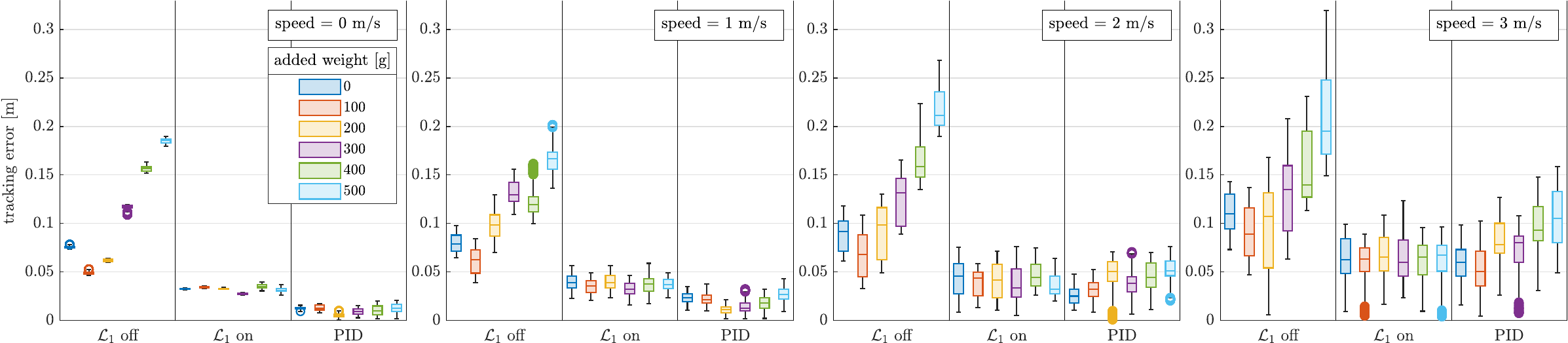}
        \caption{Added weights}
        \label{fig: figure 8 added Weight BenchMark}
\end{subfigure}
\begin{subfigure}[b]{\textwidth}
        \includegraphics[width = \textwidth]{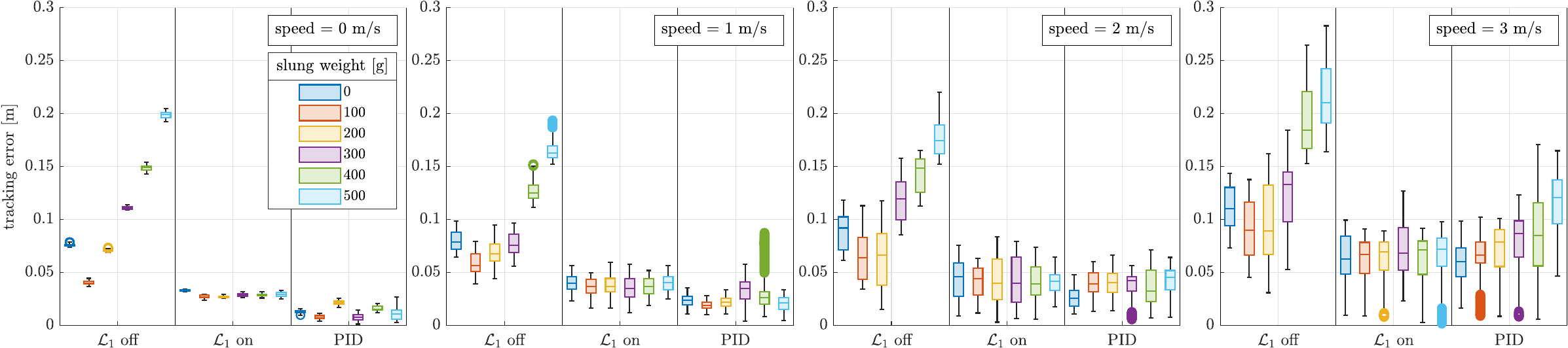}
        \caption{Slung weights}
        \label{fig: figure 8 slung Weight BenchMark}
    \end{subfigure} 
    \caption{Tracking errors of the quadrotor when following a figure 8 trajectory.}
    \label{fig: figure 8 benchmark}
\end{figure*}

\subsection{Benchmark experiments with added weights and slung weights}
We would like to understand the systematic performance improvement by the $\mathcal{L}_1$ adaptive control in addition to the case-by-case scenarios in the experimental results shown above. For the systematic experiment, we attach weights to the bottom of the quadrotor, where the weights range from 100 to 500~g with 100~g increments. 
We refer to this case as ``added weights.'' We also include the ``slung weights,'' where the weights are hung by a cord with 0.4 m~length.
For both types of weights, we command the quadrotor to fly two trajectories: 1) a 1~m radius circular trajectory with linear speed ranging from 0 (hover) to 2.5 m/s with a 0.5 m/s increment and 2) a figure 8 trajectory\footnote{The figure 8 trajectory is composed by sinusoidal functions $x(t) = 2\sin(vt/2.51)$, $y(t) = 1.5\sin(2vt/2.51)$, and $z(t) = 0.2\sin(vt/2.51)-1$ for $v$ being the maximum speed of the figure 8 trajectory at its center.} with the maximum speed ranging from 0 (hover) to 3 m/s with a 1 m/s increment. The comparisons on the tracking error among $\mathcal{L}_1$ off, $\mathcal{L}_1$ on, and PID
are shown in Fig~\ref{fig: figure 8 benchmark} (figure 8 trajectory) and Fig~\ref{fig: circular benchmark} (circular trajectory). For each case, data within 5 s (4000 data points in total) are used to make the box plots.

On the one hand, the tracking errors with $\mathcal{L}_1$ on are significantly smaller than with $\mathcal{L}_1$ off with the same trajectory, speed, and type of weights. On the other hand, with the same trajectory and speed, the tracking error with $\mathcal{L}_1$ on stays consistent despite the gradually increased weights. This consistency does not show up when $\mathcal{L}_1$ is off, since the tracking errors grow as the weights increase. 
PID performs considerably better than $\mathcal{L}_1$ off, with similar (and sometimes smaller) errors to that of $\mathcal{L}_1$ off. However, in extreme cases with relatively large weights and high speeds, PID cannot stabilize the quadrotor and eventually crashes the vehicle, as indicated by the red crosses in Fig~\ref{fig: figure 8 slung Weight BenchMark}. This happens again due to the incapability of PID in handling fast-changing uncertainties: in this case, the uncertainties come from the swing of the slung weights during the figure-8 maneuver. 
The RMSE comparisons are provided in Appendix~\ref{apdx: benchmark data}.
\setcounter{figure}{16}
\begin{figure*}[!h]
    \begin{subfigure}[b]{\textwidth}
        \includegraphics[width = \textwidth]{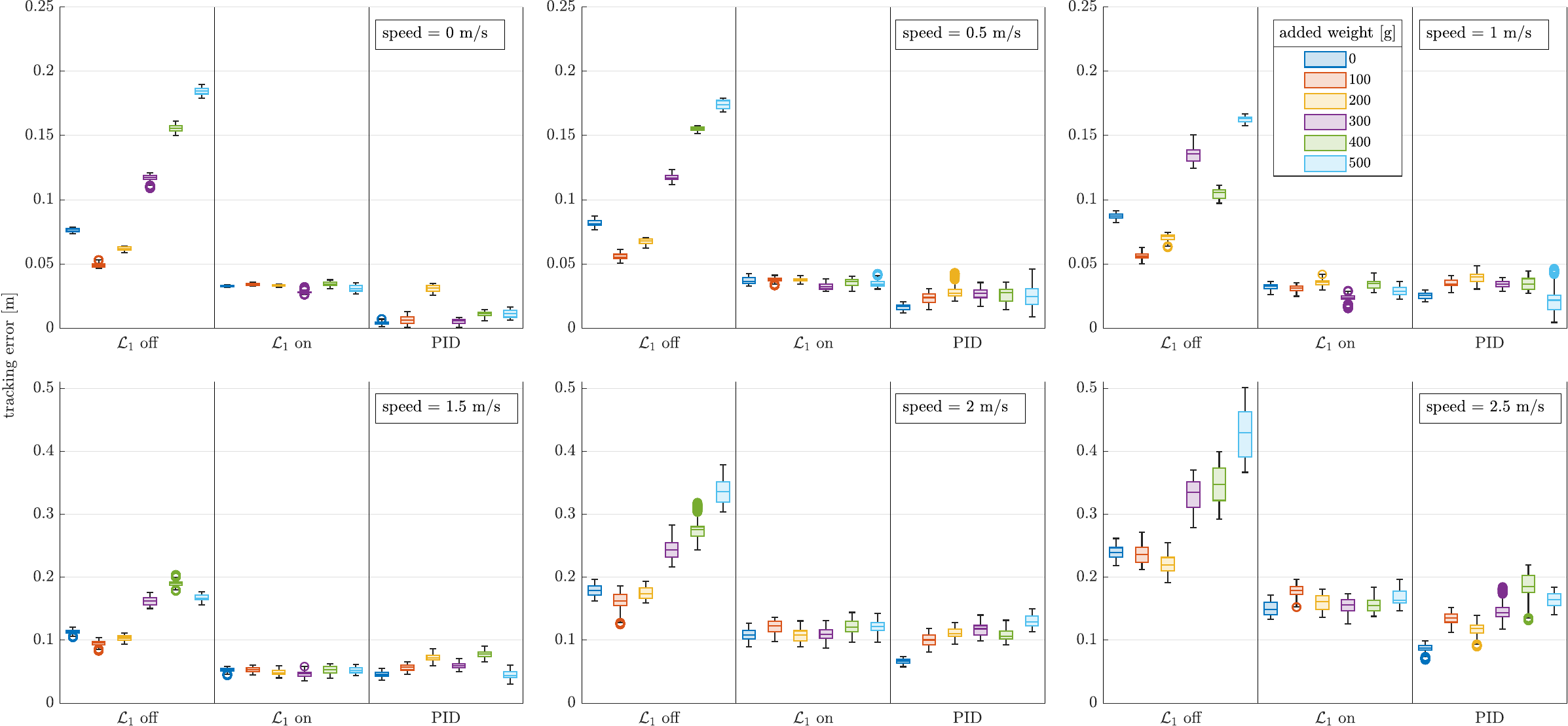}
        \caption{Added weights}
        \label{fig: circular added Weight BenchMark}
    \end{subfigure} 
    \\ 
    \begin{subfigure}[b]{\textwidth}
        \includegraphics[width = \textwidth]{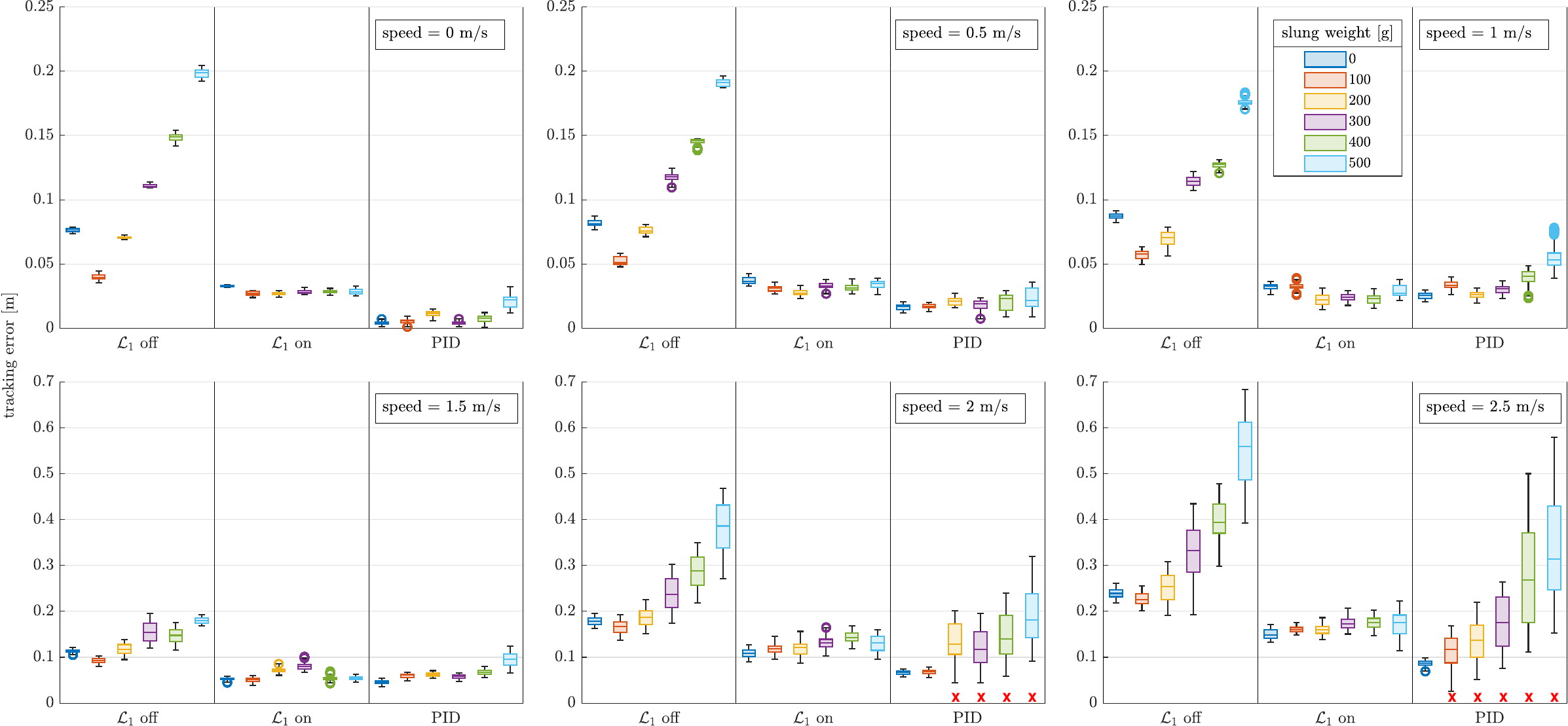}
        \caption{Slung weights}
        \label{fig: circular slung Weight BenchMark}
    \end{subfigure} 
    \caption{Tracking errors of the quadrotor when following a circular trajectory. The red crosses indicate that the trials crashed.}
    \label{fig: circular benchmark}
\end{figure*}
\subsection{Discussion}
We have validated the performance of the $\mathcal{L}_1$Quad in eleven types of uncertainties across various trajectories. 
In all experiments, the $\mathcal{L}_1$Quad  consistently achieves small tracking errors, which validates the uniform tracking error bound proved in Theorem~\ref{thm: main theorem}. By comparing $\mathcal{L}_1$Quad with other adaptive control approaches in Section~\ref{subsec: injected unc}, we demonstrate $\mathcal{L}_1$Quad's ability to handle uncertainties without the need for explicitly modeling them. 
Additionally, in the experiments where  $\mathcal{L}_1$ is turned on mid-flight (chipped propeller and mixed propeller), or when uncertainties suddenly impact the system (ground effect, downwash, hanging off-center weight, and tunnel), $\mathcal{L}_1$Quad quickly suppresses the impact and maintains a stable flight, which validates the unique advantage of $\mathcal{L}_1$Quad in providing transient guarantees in the presence of uncertainties. 
In harsh conditions (chipped propeller and hanging off-center weights), due to severe uncertainties, the quadrotor \textit{crashed} when flying without the $\mathcal{L}_1$Quad, as a contrast to stable flights with small tracking errors when flying with the $\mathcal{L}_1$Quad. These comparisons showcase our capability for handling uncertainties in harsh conditions.
Lastly, only \textit{one} set of controller parameters is used in $\mathcal{L}_1$Quad through all the experiments, highlighting its architectural benefit of reduced tuning efforts compared with the baseline PID approach that requires case-by-case tuning.

\section{Conclusion}\label{sec:conclusions}
In this paper, we present $\mathcal{L}_1$Quad, an \ellone adaptive control design for quadrotors. Our design
augments a geometric controller with an $\mathcal{L}_1$ adaptive controller for both rotational and translational dynamics. We lump the uncertainties as unknown forces and moments, which can be quickly estimated and compensated for by the \ellone adaptive control. No case-by-case modeling of the uncertainties is required, which makes the architecture applicable to uncertainties in a wide spectrum of applications.

Theoretically, we prove that the piecewise-constant adaptation law can estimate uncertainties with the error proportional to the sampling time, which can be made arbitrarily small subject only to hardware limitations. With the fast and accurate uncertainty estimation, the $\mathcal{L}_1$Quad enables computable uniform and uniform ultimate bounds, which guarantees the tracking performance of the quadrotor. Furthermore, the width of the ultimate bound can be adjusted by tuning the filter bandwidth of the low-pass filter and the sampling time.

Experimentally, we validate our approach on a custom-built quadrotor through extensive experiments with various types of uncertainties. The superior performance of the $\mathcal{L}_1$Quad is illustrated by its consistently smaller tracking error when compared with other controllers in all the experiments, where only \textit{one} set of controller parameters is used in the $\mathcal{L}_1$Quad without tuning case by case.

As for the limitations, the $\mathcal{L}_1$Quad only compensates for matched uncertainties. Active compensation for unmatched uncertainties using the unmatched uncertainty estimates requires further investigation. Additionally, the theoretical results are developed based on the assumption that the states of the quadrotor can be accurately measured. Ongoing investigation aims to address this issue following architectures of output feedback in the $\mathcal{L}_1$ adaptive control theory.

Future work includes incorporating the current bounds in the safe feedback motion planning architecture~\cite{lakshmanan2020safe} and applying learning techniques (e.g., \cite{gahlawat2021contraction}) to better characterize the uncertainties using data. Learned information could be incorporated into the proposed control framework, which alleviates the workload for the $\mathcal{L}_1$ adaptive controller and achieves better performance with improved robustness.

\bibliographystyle{IEEEtran}
\bibliography{ref}
\appendices
\section{Proof of Proposition~\ref{prop:geometricproposition}}\label{apdx: proof of geometric}
The original proof in~\cite[Appendix D, Section (c)]{lee2010arxiv} has a simplifying assumption that restricts the analysis to the domain defined in~\cite[Eq. (74)]{lee2010arxiv}. We rewrite the proof in~\cite[Appendix D Part (c) and Part (d)]{lee2010arxiv} (starting from~\cite[Eq. (79)]{lee2010arxiv}) and avoid this restriction. 

\begin{proof}
Consider the translational error dynamics~\cite[Eq. (75)]{lee2010arxiv} 
\[
m\dot{e}_v = mge_3 - fRe_3 - m\Ddot{x}_d,
\]
and its Lyapunov function candidate:
\begin{equation}\label{lyapunovtranslation}
    V_1 = \frac{1}{2}K_p \norm{e_p}^2 + \frac{1}{2}m \norm{e_v}^2 + c_1 e_p\cdot e_v,
\end{equation}
with $c_1$ being a positive constant.
The Lie derivative of $V_1$ is 
\begin{align*}
\dot{V}_1 =& K_p e_p \cdot e_v + e_v \cdot \left(  -K_p e_p - K_v e_v - X \right) + c_1e_v \cdot e_v \\ &+ \frac{c_1}{m}e_p \cdot \left(  -K_p e_p - K_v e_v - X \right),
\end{align*}
where we applied the equality $m \dot{e}_v=-K_p e_p -K_v e_v -X$~\cite[Eq. (78)]{lee2010arxiv} with 
\[
X=\frac{f}{e_3^{\top} R_d^{\top} R e_3} \left(  (e_3^{\top} R_d^{\top} R e_3)Re_3 -R_de_3 \right).
\]
Rearranging the equation above, we obtain
\begin{align*}
    \dot{V}_1 =& -(K_v -c_1) \norm{e_v}^2 -\frac{c_1 K_p}{m}\norm{e_p}^2 \\ & - \frac{c_1 K_v}{m}e_p \cdot e_v+ X \cdot \left(  -\frac{c_1}{m}e_p - e_v \right).
\end{align*}
Since $X$ is bounded such that $\norm{X} \leq \left(  K_p \norm{e_p} + K_v \norm{e_v} + H \right) \norm{e_R}$~\cite[Eq. (81)]{lee2010arxiv}, and $H > \norm{-mg + m \Ddot{x}_d}$~\cite[Eq. (25)]{lee2010arxiv}, using Cauchy-Schwarz inequality, we have
\begin{align}
    \dot{V}_1 \leq&  \left(  K_p \norm{e_p} + K_v \norm{e_v} + H \right) \norm{e_R} \left(  \frac{c_1}{m}\norm{e_p} + \norm{e_v} \right)\nonumber \\ &-(K_v -c_1) \norm{e_v}^2 -\frac{c_1 K_p}{m}\norm{e_p}^2 \nonumber \\ &- \frac{c_1 K_v}{m} e_p \cdot e_v .\label{equ: bound on v1 dot}
\end{align}
Our proof will deviate from the original proof now. Define $\alpha > 0$ such that $ \norm{e_R} \leq \alpha < 1$~\cite[Before eq. (81)]{lee2010arxiv}. Substituting $\alpha$ in the inequality above, we have
\begin{align}\label{equ: v1dotfinal}
    \dot{V}_1 \leq& -(K_v(-\alpha + 1) -c_1 ) \norm{e_v}^2 -\frac{c_1 K_p}{m}(1-\alpha)\norm{e_p}^2 \nonumber\\&+  \left(\frac{c_1 K_v}{m}(1+\alpha) +K_p\alpha\right)\norm{e_p}\norm{e_v} \nonumber\\ &+ \norm{e_R} \left(  \frac{c_1}{m}H\norm{e_p} + H\norm{e_v} \right).
\end{align}
Consider the rotational error dynamics~\cite[Eq. (56)]{lee2010arxiv} 
\[
J\dot{e}_\Omega = J\dot{\Omega} + J\left(\hat{\Omega} R^\top R_d \Omega_d - R^\top R_d \dot{\Omega}_d \right),
\]
and its Lyapunov function candidate:
\[
V_2 = \frac{1}{2}e_\Omega \cdot Je_\Omega + K_R \Psi (R,R_d) + c_2 e_R \cdot e_\Omega,
\]
with a positive constant $c_2$. The Lyapunov function candidate for the complete dynamics is $V = V_1 + V_2$, where $V$ is defined in~\eqref{equ:lyapunovforcomplete}. Using the bound $\frac{1}{2} \norm{e_R}^2\leq \Psi(R,R_d)\leq \frac{1}{2-\psi_1}\norm{e_R}^2$ in~\cite[Eq. (73)]{lee2010arxiv}, the Lyapunov function candidate is bounded by
\begin{equation}\label{equ:lyapunovbound}
    z_1^{\top}M_{11}z_1 + z_2^{\top}M_{21}z_2 \leq V \leq z_1^\top M_{12}z_1 + z_2^{\top}M_{22}z_2,
\end{equation}
where $z_1=[\norm{e_p},\norm{e_v}]^{\top}$, $z_2=[\norm{e_R},\norm{e_\Omega}]^{\top}$, and the matrices $ M_{11}$, $M_{12}$, $M_{21}$, and $M_{22}$ are defined in Section~\ref{sec: performance analysis}.
Furthermore,
\begin{align}\label{equ:lyapunovbound2}
    &\lambda_m(M_{11})\norm{z_1}^2 + \lambda_m(M_{21})\norm{z_2}^2 \leq V  \nonumber\\&\leq \lambda_M(M_{12})\norm{z_1}^2 + \lambda_M(M_{22})\norm{z_2}^2.
\end{align}
Next, we derive the bound on $\dot{V} = \dot{V}_1 + \dot{V}_2$. The bound on $\dot{V}_1$ is shown in~\eqref{equ: v1dotfinal}. For $\dot{V}_2$, we follow the derivation in~\cite[(58)]{lee2010arxiv}
\begin{align}
\dot{V}_2 =& -K_\Omega\norm{e_\Omega}^2-c_2K_Re_R \cdot J^{-1}e_R + c_2C(R_d^\top R)e_\Omega \cdot e_\Omega \nonumber \\ & -c_2 K_\Omega e_R \cdot J^{-1}e_\Omega, 
\end{align}
where $C(R_d^\top R) = \left( \text{tr}(R^\top R_d)I-R^\top R_d \right)/2$, and $\norm{C(R_d^\top R)}_2 \leq 1$~\cite[Eq. (54)]{lee2010arxiv}. The derivative $\dot{V}$ is bounded by
\begin{equation}\label{lyapunovcondition}
    \dot{V} \leq -z_1^{\top} W_1 z_1 + z_1^{\top} W_{12} z_2 -z_2^{\top} W_2 z_2,
\end{equation}
where $W_1$, $W_{12}$, $W_2$ are defined in Section~\ref{sec: performance analysis}.
We choose positive constants $K_p$, $K_v$, $K_{\Omega}$, $K_R$, $c_1$, and $c_2$ such that $M_{11}$, $M_{21}$, $M_{12}$, $M_{22}$, $W_1$, and $W_2$ are positive definite matrices, and $W \succ 0$ where $W$ is defined in~\eqref{equ:definition of W}.
Therefore, $\dot{V}$ is negative-definite, and the zero equilibrium of the closed-loop tracking errors $(\norm{e_p},\norm{e_v},\norm{e_R},\norm{e_\omega})$ is exponentially stable. Specifically, we have
\begin{equation}
    \dot{V} \leq -\beta V,
\end{equation}
where $\beta$ is defined in~\eqref{equ: beta condition}.
Also, using~\eqref{equ: tube definition} and~\eqref{equ:lyapunovbound}, we have
\begin{equation}
    \underline{\gamma}d(x,x_d)^2 \leq V \leq \overline{\gamma}d(x,x_d)^2,
\end{equation}
where $\underline{\gamma}$ and $\overline{\gamma}$ are defined in~\eqref{equ:definition of gamma lower bar} and~\eqref{equ:definition of gamma upper bar}, respectively.
Without loss of generality, we consider $K_p$, $K_v$, $K_R$, $K_\Omega$ to be scalars for simplicity. However, the extension to the diagonal 3-by-3 matrices is straightforward, where maximum or minimum eigenvalues of these matrices will be used.
\end{proof}

\section{Proof of Proposition~\ref{prop:bounds of Ps Ub and Uad}}\label{apdx: proof of prop 3}
\begin{proof}
Since $x(t) \in \mathcal{O}(x_d(t),\rho)$, we have $\norm{e_p(t)}$, $\norm{e_v(t)},$ $\norm{e_\Omega(t)}$, and $\norm{e_R(t)}$ upper bounded by $\rho$ (following the definition of $\mathcal{O}(x_d(t),\rho)$ in~\eqref{equ: tube definition}). Considering that $c_1$ and $m$ are constants and $\norm{Re_3}=1$ by the definition of the rotation matrix, we can conclude that all three bounds in~\eqref{bounds of P1 P2} hold. Furthermore, following the control law in~\eqref{eq:thrust control} and~\eqref{eq:torque control}, we conclude that $\norm{u_b(t)}$ is also bounded. Define $\Delta_{u_b}$ such that $\norm{u_b(t)} \leq \Delta_{u_b}$, for all $t\geq 0$. 

Since $C(s)=\omega/\left(s+\omega\right)$ with DC gain equal to 1, we have $\norm{C(s)}_{\mathcal{L}_1}=1$. Following~\cite[Lemma A.7.1]{hovakimyan2010L1}, we obtain 
$
\norm{u_{ad}(t)} \leq  \norm{C(s)}_{\mathcal{L}_1}\norm{\hat{\sigma}(t)}_{\mathcal{L}_\infty} \nonumber 
$. Therefore, in order to prove that $\norm{u_{ad}(t)}$ is bounded, we need to show $\norm{\hat{\sigma}(iT_s)}$ is bounded for all $i \in \mathbb{N}$. Using the initial condition $\tilde{z}(0)=0$ of the prediction error dynamics in~\eqref{equ:prediction error dynamics}, we conclude that for $i=0$, $\hat{\sigma}(iT_s)=0$ from the definition of $\hat{\sigma}(iT_s)$ introduced in~\eqref{equ: sigma hat}. For the case of $i \geq 1$, recall the representation of the prediction error from~\eqref{equ: prediction error final}
\begin{equation*}
\tilde{z} (iT_s) =-\textstyle\int_{(i-1)T_s}^{iT_s} \exp(A_s(iT_s-\tau))  \Bar{B}(\tau)\sigma(\tau)\dd\tau.
\end{equation*}
Since $\bar{B}(t)$ is continuous by design and $\sigma(t)$ is continuous by Assumption~\ref{aspt: main assumption} in the main manuscript, considering that every diagonal element in $\exp(A_s(iT_s-\tau))$ is positive for $i\geq 1$ and $\tau \in [(i-1)T_s, iT_s)$, we apply the mean value theorem~\cite{comenetz2002calculus} in an element-wise manner to the right-hand side of the above equation. As a result, there exists $(\bar{B}^{\ast}\sigma^{\ast})_k \triangleq (\bar{B}(\tau_k)\sigma(\tau_k))_k$, for $k=1,\dots,6$ and $\tau_1,\dots,\tau_6 \in [(i-1)T_s,iT_s)$, such that
\begin{align}\label{equ: mean value theorem}
\tilde{z}(iT_s) \nonumber 
=&-\textstyle\int_{(i-1)T_s}^{iT_s} \exp(A_s(iT_s-\tau)) \dd\tau \Bar{B}^\ast\sigma^\ast \nonumber \\ =&A_s^{-1}(I - \exp(A_sT_s))\Bar{B}^\ast \sigma^\ast.
\end{align}

Substituting $\tilde{z}(iT_s)$ from~\eqref{equ: mean value theorem} to~\eqref{equ: sigma hat}, we have
\begin{equation}\label{equ: estimated uncertainty i geq 1}
\hat{\sigma}(iT_s)=\bar{B}(iT_s)^{-1}\exp(A_sT_s)\bar{B}^\ast\sigma^\ast,
\end{equation}
for $i \geq 1$. Using the bounds of $\norm{\bar{B}}$, $\norm{\bar{B}^{-1}}$ in~\eqref{bounds on Bs}, and the bound of $\sigma(t,x(t))$ from~\eqref{equ: bounds of sigmas}, we have 
\begin{equation}\label{bound of sigma hat new}
    \norm{\hat{\sigma}(iT_s)} \leq \Delta_{\hat{\sigma}},
\end{equation}
where $\Delta_{\hat{\sigma}} \triangleq \Delta_{\bar{B}^{-1}} \Delta_{\bar{B}} \Delta_{\sigma}$ for all $i \in \mathbb{N}$. 

Finally, we conclude that
$\norm{u_{ad}(t)} \leq \Delta_{\hat{\sigma}},$ for all $t \geq 0$.
\end{proof}

\section{Proof of Proposition~\ref{prop: estimation error bound}}\label{apdx: proof of prop 4}
Before proving Proposition~\ref{prop: estimation error bound}, we first introduce the following proposition.
\setcounter{proposition}{4}
\begin{proposition}\label{proposition: bound on x dot}
If the state $x(t)$ of the uncertain system in~\eqref{equ:uncertain system with full state and with L1} satisfies $x(t) \in \mathcal{O}(x_d(t),\rho)$ for all $t \in [0,\tau]$ with some $\tau > 0$, then the following inequality holds
\begin{equation}
    \norm{\dot{x}}_{\mathcal{L}_\infty}^{[0,\tau]} \leq \Delta_f+ \Delta_{\bar{B}}\Delta_\sigma+\Delta_{B_F} (\Delta_{u_b}+\Delta_{\hat{\sigma}}).
\end{equation}
\end{proposition}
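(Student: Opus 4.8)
The plan is to bound $\norm{\dot x(t)}$ pointwise for each $t\in[0,\tau]$ by applying the triangle inequality to the closed-loop dynamics~\eqref{equ:uncertain system with full state and with L1}, and then to pass to the supremum over $[0,\tau]$. First I would split the right-hand side of~\eqref{equ:uncertain system with full state and with L1} into three groups: the drift $f_F(x)$, the commanded-input term $B_F(R)(u_b+u_{ad})$, and the uncertainty term $B_F(R)\sigma_m+B_F^\bot(R)\sigma_{um}$. The key simplification is that the last group equals $\bar B_F(R)\sigma$ with $\bar B_F(R)=[B_F(R)\ B_F^\bot(R)]$ and $\sigma=[\sigma_m^\top\ \sigma_{um}^\top]^\top$, so it can be controlled by a single sub-multiplicative estimate rather than two separate ones.

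Next I would bound each group using the results already established. For the uncertainty group, sub-multiplicativity gives $\norm{\bar B_F(R)\sigma}\le\norm{\bar B_F(R)}\norm{\sigma}\le\Delta_{\bar B}\Delta_\sigma$, where Assumption~\ref{aspt: main assumption} supplies $\norm{\sigma}\le\Delta_\sigma$ (valid since $x\in\mathcal{O}(x_d,\rho)\subset\mathcal{T}(\rho)$), and the matrix bound follows from Proposition~\ref{prop: bound of Bs.} together with the observation that $\bar B_F(R)$ is exactly $\bar B(R)$ padded with zero rows for the uncertainty-free position and attitude kinematics, so $\norm{\bar B_F(R)}=\norm{\bar B(R)}\le\Delta_{\bar B}$. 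For the commanded-input group, $\norm{B_F(R)(u_b+u_{ad})}\le\Delta_{B_F}\bigl(\norm{u_b}+\norm{u_{ad}}\bigr)\le\Delta_{B_F}(\Delta_{u_b}+\Delta_{\hat\sigma})$, combining $\norm{B_F(R)}\le\Delta_{B_F}$ from Proposition~\ref{prop: bound of Bs.}, the bound $\norm{u_b}\le\Delta_{u_b}$ from Proposition~\ref{prop:bounds of Ps Ub and Uad}, and the filter estimate $\norm{u_{ad}}\le\Delta_{\hat\sigma}$ obtained in the proof of Proposition~\ref{prop:bounds of Ps Ub and Uad}. Finally, the drift obeys $\norm{f_F(x)}\le\Delta_f$ on the tube: the entries $g e_3$ and $-J^{-1}\Omega\times J\Omega$ are the ones already bounded in Proposition~\ref{prop:bounds of Ps Ub and Uad}, while the kinematic entries $v$ and $\vect(R\Omega^\wedge)$ are bounded because $\norm{e_v},\norm{e_\Omega}\le\rho$ on $\mathcal{O}(x_d,\rho)$ and the desired trajectory is bounded, so $v$ and $\Omega$ (hence $\vect(R\Omega^\wedge)$, with $R$ a rotation) are bounded as well.

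Summing the three estimates yields $\norm{\dot x(t)}\le\Delta_f+\Delta_{\bar B}\Delta_\sigma+\Delta_{B_F}(\Delta_{u_b}+\Delta_{\hat\sigma})$ for every $t\in[0,\tau]$, and since the right-hand side does not depend on $t$, passing to the supremum over $[0,\tau]$ gives the claimed bound on $\norm{\dot x}_{\mathcal{L}_\infty}^{[0,\tau]}$. The argument is essentially a triangle-inequality computation; the only points requiring a little care are bookkeeping ones, namely recognizing that the matched and unmatched contributions collapse into the single factor $\bar B_F(R)\sigma$ so that the constant $\Delta_{\bar B}$ (and not separate matched/unmatched constants) appears, and checking that the zero-padding relating the full-state maps $B_F,\bar B_F$ to their partial-state counterparts $B,\bar B$ leaves the induced norms unchanged, so that the constants from Proposition~\ref{prop: bound of Bs.} transfer directly.
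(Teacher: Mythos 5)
Your proposal is correct and takes essentially the same route as the paper's proof: the paper likewise applies the triangle inequality directly to~\eqref{equ:uncertain system with full state and with L1}, bounding the drift, the lumped uncertainty term $\bar{B}_F(R)\sigma$, and the input term $B_F(R)(u_b+u_{ad})$ by $\Delta_f$, $\Delta_{\bar{B}}\Delta_\sigma$, and $\Delta_{B_F}(\Delta_{u_b}+\Delta_{\hat{\sigma}})$ using Assumption~\ref{aspt: main assumption},~\eqref{bounds on Bs},~\eqref{bounds of P1 P2}, and~\eqref{bounds on ub and sigmahat}. Your additional bookkeeping---that zero-row padding gives $\norm{\bar{B}_F(R)}=\norm{\bar{B}(R)}$ so the constant $\Delta_{\bar{B}}$ transfers, and that the kinematic entries of the full-state drift are bounded on the tube---merely makes explicit details the paper leaves implicit in its use of $\Delta_f$ and $\Delta_{\bar{B}}$ in the constant $\phi_1$ of~\eqref{def: phi1}.
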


\begin{proof}
Using the dynamics of the uncertain system in~\eqref{equ:uncertain system with full state and with L1}, we obtain
\begin{align}
\norm{\dot{x}(t)} \leq & \norm{f(x(t))}+ \norm{\bar{B}_F(R(t))}\norm{\sigma(t,x(t))} \nonumber\\ &+\norm{B_F(R(t))}\norm{  u_b(t) +u_{ad}(t)}.
\end{align}
With the help of the bounds in Assumption~\ref{aspt: main assumption},~\eqref{bounds on Bs},~\eqref{bounds of P1 P2}, and~\eqref{bounds on ub and sigmahat}, we can conclude that
\begin{equation*}
\norm{\dot{x}}_{\mathcal{L}_\infty}^{[0,\tau]} \leq \Delta_f+ \Delta_{\bar{B}}\Delta_\sigma+\Delta_{B_F} (\Delta_{u_b}+\Delta_{\hat{\sigma}})  = \phi_1,
\end{equation*}
where $\phi_1$ is defined in~\eqref{def: phi1}. 
\end{proof}

We now move to prove Proposition~\ref{prop: estimation error bound}.
\begin{proof}
We first derive the bound on $\norm{\hat{\sigma}(t,x(t)) - \sigma(t,x(t))}$ for any $t \in [ iT_s,(i+1)T_s )$, and $i \geq 1$. The bound on $\norm{\hat{\sigma}(t,x(t)) - \sigma(t,x(t))}$ for $t \in [0,T_s)$ will be discussed at the end of this section. 
Subtracting the uncertainty $\sigma(t)$ with the estimated uncertainty from~\eqref{equ: estimated uncertainty i geq 1}, we have
\begin{align*}
&\norm{\sigma(t)-\hat{\sigma}(t)} \nonumber 
\\
&= \norm{\sigma(t) - \Bar{B}(iT_s)^{-1} \exp(A_sT_s) \Bar{B}^\ast\sigma^\ast } \nonumber
\\
&= \norm{\Bar{B}(iT_s)^{-1}(\Bar{B}(iT_s))\sigma(t)-\exp(A_sT_s)\Bar{B}^\ast\sigma^\ast)} \nonumber 
\\
&\leq \norm{\Bar{B}(iT_s)^{-1}}\norm{\Bar{B}(iT_s))\sigma(t)-\exp(A_sT_s)\Bar{B}^\ast\sigma^\ast}.
\end{align*}
Since $\norm{q} \leq \sqrt{n} \max_{k \in \{1,\dots,n\} }\left|(q)_k\right| \leq \sqrt{n}\norm{p}$ for any $q, p \in \mathbb{R}^n$, as long as $(p)_k=(q)_k$, we rewrite the inequality above:
\begin{align}\label{eq: long equations 1}
&\norm{\sigma(t)-\hat{\sigma}(t)} \nonumber 
\\
&\leq \sqrt{6}\norm{\Bar{B}(iT_s)^{-1}} \left|( \Bar{B}(iT_s))\sigma(t)-\exp(A_sT_s)\Bar{B}^\ast\sigma^\ast)_j\right| \nonumber
\\
&\leq \sqrt{6}\left\| \Bar{B}(iT_s))\sigma(t) \right.  \left.-\exp(A_sT_s)\Bar{B}(\tau_j)\sigma(\tau_j)\right\| \nonumber \\
&\hspace{4mm} \times \norm{\Bar{B}(iT_s)^{-1}} ,
\end{align}
where $j\triangleq \arg \max _{j \in \{1,\dots,6\} } \Phi_0$, 
and 
$\Phi_0 = \left|( \Bar{B}(iT_s)\sigma(t)-\exp(A_sT_s)\Bar{B}^\ast\sigma^\ast)_j\right|.$

Adding and subtracting $\bar{B}(iT_s)\sigma(\tau_j)$ to the first norm on the right-hand side of the inequality above, we have
\begin{align}\label{estimation error bound derivation}
&\norm{\sigma(t) - \hat{\sigma}(t)} \nonumber 
\\ 
&\leq \sqrt{6}\left( \norm{\Bar{B}(iT_s)\sigma(\tau_j) -\exp(A_sT_s)\Bar{B}(\tau_j)\sigma(\tau_j)}\right. \nonumber 
\\ 
&\left.\hspace{11mm}+\norm{\Bar{B}(iT_s)}\norm{\sigma(t)-\sigma(\tau_j)}\right) \norm{\Bar{B}(iT_s)^{-1}}\nonumber 
\\
&\leq  
\sqrt{6}\norm{\Bar{B}(iT_s)^{-1}} \norm{\Bar{B}(iT_s)}\norm{\sigma(t)-\sigma(\tau_j)}
\nonumber 
\\
&\hspace{5mm}+ \left(  \norm{\Bar{B}(iT_s)-\bar{B}(\tau_j)} +\norm{(I-\exp(A_sT_s))\bar{B}(\tau_j)} \right) \nonumber  
\\ 
&\hspace{5mm} \times \sqrt{6}\norm{\sigma(\tau_j)}\norm{\Bar{B}(iT_s)^{-1}}\nonumber \\
&= \Phi_1(t) + \Phi_2(t), \vspace{-2mm}
\end{align}
where \vspace{-2mm}
\begin{equation*}
\Phi_1(t)  \triangleq \sqrt{6}\norm{\Bar{B}(iT_s)^{-1}}\norm{\Bar{B}(iT_s)} \norm{\sigma(t) - \sigma(\tau_j)},\vspace{-3mm}
\end{equation*}
and
\begin{align*}
\Phi_2(t) \triangleq  &\left(\norm{\Bar{B}(iT_s)-\bar{B}(\tau_j)} \right. \nonumber \left. +\norm{(I-\exp(A_sT_s))\bar{B}(\tau_j)} \right) \nonumber 
\\
&\times \sqrt{6}\norm{\sigma(\tau_j)}\norm{\Bar{B}(iT_s)^{-1}}\nonumber,
\end{align*}
with $\tau_j \in [(i-1)T_s,iT_s)$, and $i \geq 1$.

To compute the estimation error bound, we first compute the bound on $\Phi_1(t)$ in~\eqref{estimation error bound derivation}. The results from Proposition~\ref{proposition: bound on x dot} imply that for $\tau_j \in [(i-1)T_s,iT_s)$ and $t\in [iT_s,(i+1)T_s)$
\begin{equation*}
\norm{x(t)-x(\tau_j)} \leq \textstyle \int_{\tau^\ast}^t \norm{\dot{x}(\tau)} \dd\tau \leq 2T_s \phi_1.
\end{equation*}
Additionally, since $\sigma(t,x(t))$ is Lipschitz due to Assumption~\ref{aspt: main assumption}, using the bounds of $\norm{\Bar{B}}$ and $\norm{\Bar{B}^{-1}}$ in~\eqref{bounds on Bs}, we have
\begin{equation}\label{bound of sigma}
\begin{aligned}
    \norm{\Phi_1} \leq 2\sqrt{6}T_s\Delta_{\Bar{B}}\Delta_{\Bar{B}^{-1}}\left( \phi_1L_{\sigma_x} +  L_{\sigma_t}\right),
\end{aligned}
\end{equation}
where $L_{\sigma_x}$ and $L_{\sigma_t}$ are the Lipschitz constants of $\sigma(t,x(t))$ with respect to the state $x$ and time $t$, respectively. 
Next, we derive the bound on $\Phi_2$. Since we assumed $x(t) \in \mathcal{O}(x_d(t),\rho)$, $\norm{\Omega}$ is bounded. Thus, $\dot{R}$ is bounded because $\dot{R} = R \Omega^{\wedge}$ from~\eqref{eq:angle integral}. Notice that $\Bar{B}(t)=\bar{B}(R(t))=\left[ B(R(t)) \ B^{\bot}(R(t)) \right]=\left[  \begin{smallmatrix}
 -m^{-1}R(t)e_3 & 0_{3\times3} & m^{-1}R(t)e_1 &m^{-1}R(t)e_2\\
 0_{3\times1} & J^{-1} & 0_{3\times1} & 0_{3\times1}
\end{smallmatrix} \right]$. We then conclude that $\norm{\dot{\bar{B}}(t)}$ is bounded and define $L_B$ such that $\norm{\dot{\bar{B}}(t)} \leq L_B$. With the help of $L_B$, we have
\begin{equation}\label{bound of B bar}
    \norm{\Bar{B}(iT_s) - \Bar{B}(\tau_j)} \leq L_BT_s.
\end{equation}
Using the inequality in~\eqref{bound of B bar} and the bound of $\norm{\bar{B}^{-1}}$, $\norm{\bar{B}}$ in~\eqref{bounds on Bs}, we have
\begin{equation}\label{equ: bound of phi5}
    \begin{aligned}
       \norm{\Phi_2}  \leq&  \sqrt{6} \Delta_{\Bar{B}^{-1}}\left(  L_B T_s + \norm{I-\exp(A_sT_s)}  \Delta_{\Bar{B}}\right) \Delta_{\sigma} \\ \leq&  \sqrt{6} \Delta_{\Bar{B}^{-1}}T_s\left(  2L_B + \left| \lambda_{m}(A_s) \right|  \Delta_{\Bar{B}}\right) \Delta_\sigma.
    \end{aligned}
\end{equation}
Plugging in~\eqref{bound of sigma} and ~\eqref{equ: bound of phi5} into~\eqref{estimation error bound derivation}, we obtain
\begin{align}
     \norm{\sigma(t) - \hat{\sigma}(t)} \nonumber \leq& \sqrt{6} \Delta_{\Bar{B}^{-1}}T_s\left(  2L_B + \left| \lambda_{m}(A_s) \right|  \Delta_{\Bar{B}}\right) \Delta_\sigma \nonumber \\ &+2\sqrt{6}T_s\Delta_{\Bar{B}}\Delta_{\Bar{B}^{-1}}\left( \phi_1L_{\sigma_x} +  L_{\sigma_t}\right) \nonumber \\
    =& \zeta_2(A_s)T_s,
\end{align}
for $t \in [iT_s,(i+1)T_s)$, and $i \geq 1$ where $\zeta_2(A_s)$ is defined in~\eqref{equ: zeta 2}. Note that since $i \geq 1$, the bound $\norm{\sigma(t)-\hat{\sigma}(t)} \leq \zeta_2(A_s)T_s$ holds for any $t \geq T_s$. When $t = 0$, the prediction error $\tilde{z}(0)=0$ from~\eqref{equ:prediction error dynamics}, which leads to $\hat{\sigma}(t)=0$ for $t \in [0,T_s)$ according to~\eqref{equ: sigma hat}. Thus, the estimation error is bounded as $\norm{\sigma(t) - \hat{\sigma}(t)}=\norm{\sigma(t)}\leq \Delta_\sigma$ for $t \in [0,T_s)$ by Assumption~\ref{aspt: main assumption}.
\end{proof}
\section{Proof of Theorem~\ref{thm: main theorem}}\label{apdx: proof of thm}
We begin by introducing the error dynamics between the uncertain system in~\eqref{equ:quadrotor partial states uncertain dynamics} and the nominal dynamics in~\eqref{equ:quadrotor dynamics}. Then, we compute the Lie derivative of the Lyapunov function in~\eqref{equ:lyapunovforcomplete}. Finally, we prove this theorem by contradiction.

Decomposing the state-space representation of the uncertain partial dynamics in~\eqref{equ:quadrotor partial states uncertain dynamics}, we have
\begin{subequations}\label{equ: uncertain partial dynamics in differential equation}
\begin{align}\label{equ: uncertain partial dynamics}
    m \dot{v} &= mge_3 -fRe_3 - \sigma_f Re_3 + [Re_1 \ Re_2]\sigma_{um}, \\
    J \dot{\Omega} &= M + \sigma_M - \Omega \times J\Omega, \label{equ:uncertainrotationaldynamics}
\end{align}
\end{subequations}
where the matched uncertainties $\sigma_m = [\sigma_f \ \sigma_M^{\top}]^{\top}$ are partitioned into components on the thrust $\sigma_f \in \mathbb{R}$ and on the moment $\sigma_M \in \mathbb{R}^3$. The control inputs $f$ and $M$ consist of baseline control and adaptive control such that $f=f_b+f_{ad}$ and $M=M_b+M_{ad}$.

Now we compute $m\dot{e}_v$ and $J\dot{e}_\Omega$ between the nominal dynamics in~\eqref{equ:quadrotor dynamics} and the uncertain dynamics~\eqref{equ: uncertain partial dynamics in differential equation}. We begin by considering the  translational error dynamics:
\begin{align*}
m \dot{e}_v =& m\dot{v} - m\dot{v}_d \\ =& mge_3 -fRe_3 - \sigma_fRe_3 + [Re_1 \ Re_2]\sigma_{um}-m \Ddot{x}_d.
\end{align*}
Then we plug in $f=f_b+f_{ad}$, and add and subtract $\frac{f_b}{e_3^{\top}R_d^{\top}Re_3}R_de_3$ to get \vspace{-2mm}
\begin{align*}
m\dot{e}_v =& mge_3 - m\Ddot{x}_d-\frac{f_b}{e_3^{\top}R_d^{\top}Re_3}R_de_3-X \\ &- (f_{ad}+\sigma_f)Re_3 + [Re_1 \ Re_2]\sigma_{um},
\end{align*}
where 
$X=\left(  (e_3^{\top} R_d^{\top} R e_3)Re_3 -R_de_3 \right) \left(f_b / (e_3^{\top} R_d^{\top} R e_3)\right)$.
By~\eqref{equ:translational dynamics} and the following equality in~\cite[After eq. (77)]{lee2010arxiv}\vspace{-1mm}
\[
-\frac{f_b}{e_3^{\top}R_d^{\top}Re_3}R_de_3 = -K_x e_x -K_ve_v-mge_3 + m\Ddot{x}_d,
\]
we have the time-derivative of the velocity error:
\begin{align}\label{equ:timederivativeofvelocityerror}
    m\dot{e}_v =& -K_x e_x -K_ve_v-X -(f_{ad}+\sigma_f)Re_3 \nonumber \\&+ [Re_1 \ Re_2]\sigma_{um}.
\end{align}
The rotational error dynamics are given by~\cite[After eq. (55)]{lee2010arxiv} such that
$\dot{e}_\Omega = \dot{\Omega} + \hat{\Omega}R^{\top} R^{\top}_d \Omega_d - R^{\top}R_d \dot{\Omega}_d$.
Multiplying $J$ on both sides and substituting $J\dot{\Omega}$ from~\eqref{equ:uncertainrotationaldynamics}, 
we have\vspace{-1mm}
\begin{align}\label{equ:timederivativeofangularrateerror}
    J\dot{e}_\Omega =& -K_Re_R-K_\Omega e_\Omega +\Omega \times J\Omega \nonumber\\&-J(\hat{\Omega} R^{\top}R_d\Omega_d-R^{\top}R_d\dot{\Omega}_d) + M_{ad} + \sigma_M \nonumber\\&- \Omega \times J\Omega + J(\hat{\Omega} R^{\top}R_d\Omega_d-R^{\top}R_d\dot{\Omega}_d) \nonumber \\
    =& -K_Re_R-K_\Omega e_\Omega + M_{ad} +\sigma_M,
\end{align}
where we use the baseline control input $M_b$ from~\eqref{eq:torque control}.

We are now ready to compute the derivative of the Lyapunov function in~\eqref{equ:lyapunovforcomplete}:
\begin{align}\label{equ:timederivativeoflyapunovfunction}
    \dot{V} = &K_x e_x \cdot e_v  + c_1e_v \cdot e_v + \frac{c_1}{m}e_x \cdot m\dot{e}_v + e_\Omega \cdot J \dot{e}_\Omega\nonumber \\& + e_v \cdot m\dot{e}_v+ K_R e_R \cdot e_\Omega + c_2 \dot{e}_R \cdot e_\Omega +c_2 e_R \cdot \dot{e}_\Omega.
\end{align}
Substituting~\eqref{equ:timederivativeofvelocityerror} and~\eqref{equ:timederivativeofangularrateerror} into~\eqref{equ:timederivativeoflyapunovfunction}, we have
\begin{align*}
    \dot{V}=&-(K_v -c_1) \norm{e_v}^2 -\frac{c_1 K_x}{m}\norm{e_x}^2 - \frac{c_1 K_v}{m}e_x \cdot e_v \\&- X \cdot \left(  \frac{c_1}{m}e_x + e_v \right) -(f_{ad}+\sigma_f)Re_3 \cdot \left( \frac{c_1}{m}e_x + e_v  \right) \\&+ ([Re_1 \ Re_2]\sigma_{um})\cdot \left( \frac{c_1}{m}e_x+e_v  \right)- K_\Omega \norm{e_\Omega}^2 \\ & -c_2K_Re_R \cdot J^{-1}e_R + c_2C(R_d^{\top}R)e_\Omega \cdot e_\Omega\\& -c_2K_\Omega e_R\cdot J^{-1}e_\Omega+ (e_\Omega^\top+ c_2e_R^\top J^{-1})(M_{ad}+\sigma_M).
\end{align*}
By~\eqref{equ:lyapunovdotsmallerthanlyapunov}, we have 
\begin{align}\label{equ:lyapunovdotsmallthanlyapunov2}
   \dot{V} \leq& \begin{bmatrix}
-R(t)e_3 \cdot \left( \frac{c_1}{m}e_x(t) + e_v(t)  \right) \\ e_\Omega(t)+c_2 J^{-1} e_R(t) \end{bmatrix}^\top
   \begin{bmatrix}
   f_{ad}+\sigma_f \\ M_{ad}+\sigma_M\end{bmatrix} \nonumber \\ &-\beta V +\left([Re_1 \ Re_2]\sigma_{um}\right)\cdot \left( \frac{c_1}{m}e_x+e_v  \right).
\end{align}
Define $\eta_m(s) \triangleq C(s) \mathcal{L}\left[ \sigma_m(t,x(t)) \right]$, and $\hat{\eta}_m(s) \triangleq C(s) \mathcal{L}\left[ \hat{\sigma}_m(t,x(t)) \right]$. Adding and subtracting $P_1\eta_m$ in~\eqref{equ:lyapunovdotsmallthanlyapunov2}, using $u_{ad}=[f_{ad} \ M_{ad}^\top]^\top$ and~\eqref{equ:L1controlinput},  we have
\begin{align}\label{equ: v dot final}
    \dot{V} \leq& -\beta V + P_1(\sigma_m - \eta_m) + P_1(\eta_m - \hat{\eta}_m) \nonumber \\ &+ ([Re_1 \ Re_2]\sigma_{um}) \cdot P_2,
\end{align}
where $P_1$ and $P_2$ are defined in~\eqref{def: P1 P2 definition}. 

We are now ready to prove Theorem~\ref{thm: main theorem}.

\begin{proof}
We prove this Theorem by contradiction: assume that $d(x(t),x_d(t)) \geq \rho$ for some $t > 0$. Since $d(x(0),x_d(0)) < \rho$, there exists $\tau^{\ast} > 0$ such that
\begin{align*}
    d(x(t),x_d(t)) <& \rho, \quad t\in[0,\tau^{\ast}). \\
    d(x(\tau^\ast),x_d(\tau^\ast)) =& \rho. 
\end{align*}
Following the differential inequality in~\eqref{equ: v dot final}, we have
\begin{equation}\label{equ:finallyapunovinL1proof}
        V(t) \leq e^{-\beta t} V_0 + \Phi_3(t) + \Phi_4(t) + \Phi_5(t),
\end{equation}
where
\begin{align*}
\Phi_3(t) &\triangleq \textstyle \int ^{t}_0 e^{-\beta(t-\tau)}P_1(\tau)(\sigma_m(\tau)-\eta_m(\tau))\dd\tau, \\
\Phi_4(t) &\triangleq \textstyle \int ^{t}_0 e^{-\beta(t-\tau)}P_1(\tau)\tilde{\eta}_m(\tau)\dd\tau, \\
\Phi_5(t) &\triangleq \textstyle \int ^{t}_0 e^{-\beta(t-\tau)}([Re_1 \ Re_2]\sigma_{um}) \cdot P_2 \dd\tau,
\end{align*}
and $\tilde{\eta}_m(t)=\eta_m(t)-\hat{\eta}_m(t)$.
Using the bounds in~\eqref{bounds of P1 P2}, we first derive the bound on $\Phi_3$. The integral $\Phi_3$ can be expressed as the solution to the following virtual scalar system
\begin{subequations}\label{equ:scalarlinearsystem}
\begin{align}\label{equ:scalarlinearsystem1}
    \dot{y}(t) =& -2\beta y(t)+ P_1(t)\xi(t), \quad y(0) = 0, \\
\label{equ:scalarlinearsystem2}
    \xi(s)=&(1-C(s))\mathcal{L}\left[\sigma_m(t,x(t)\right].
\end{align}
\end{subequations}
From Assumption~\ref{aspt: main assumption} and Proposition~\ref{proposition: bound on x dot}, the following bound holds for all $x(t) \in \mathcal{O}(x_d(t),\rho)$ and all $t \in [0,\tau^\ast]$: 
\begin{align*}
\norm{\dot{\sigma}_m(t,x(t))}=&\norm{\frac{\partial \sigma_m(t,x(t))}{\partial t} + \frac{\partial \sigma_m(t,x(t))}{\partial x}\dot{x}(t)} \\ \leq& L_{\sigma_{m_t}}+L_{\sigma_{m_x}}\phi_1.
\end{align*}
By~\cite[ Lemma A.1]{lakshmanan2020safe}, the solution of the linear system~\eqref{equ:scalarlinearsystem} satisfies the following norm bound:
\[
\norm{y(t)} \leq \norm{P_1} \left(\frac{\norm{\sigma_m(0)}}{\abs{\beta - \omega}} + \frac{\norm{\dot{\sigma}_m(t,x(t))}}{\beta\omega}\right).
\]
Since $y(t)=\Phi_3(t)$, we have
\begin{align}\label{equ:bound on phi1}
\norm{\Phi_3} \leq&  c_3 \rho \left( \frac{\Delta_{\sigma_m}}{\lvert \beta - \omega \rvert} + \frac{L_{\sigma_{m_t}} + L_{\sigma_{m_x}}\phi_1}{\beta \omega} \right)= c_3 \rho \zeta_1(\omega).
\end{align}
Note that we consider a first-order low-pass filter of the form $C(s)=\omega/ \left(s+\omega\right)$. The results can be generalized to higher-order low-pass filters. 
We now derive the bound on $\Phi_4$. As an intermediate step, we will derive $\norm{\tilde{\eta}}^{[0,T_s)}_{\mathcal{L}_{\infty}}$ and $\norm{\tilde{\eta}}^{[T_s,\tau^\ast]}_{\mathcal{L}_{\infty}}$ individually, where $\tilde{\eta}(s)=C(s)\left( \mathcal{L}[\sigma(t,x(t))]-\mathcal{L}[\hat{\sigma}(t,x(t))] \right)$.
We first show the bound of $\norm{\tilde{\eta}(t)}$ for $t \in [0,T_s)$. Since the prediction error at $t=0$ is $0$, the uncertainty estimate $\hat{\sigma}(t)$ for $t \in [0,T_s)$ is $0$. As a result, $\tilde{\eta}(s) = (\sigma(s)-0) \omega / \left(s+\omega\right)$. Take the inverse Laplace transform, and we get
\[
\dot{\tilde{\eta}}(t)=-\omega \tilde{\eta}(t) + \omega \sigma(t).
\]
Solving the above differential equation, we have
\begin{equation}
\begin{aligned}
   \tilde{\eta}(t) & =\textstyle \int_0^{t} \exp(-w(t-\tau))\omega\sigma(\tau)\dd\tau 
   \\&\leq \textstyle \int_0^{t} \exp(-w(t-\tau))\dd\tau \omega\Delta_{\sigma} \\
   & = \Delta_{\sigma}(1-\exp(-\omega t))  \leq \Delta_\sigma \omega t = \zeta_3(\omega)t.
\end{aligned}
\end{equation}
As a result, we get\begin{equation}\label{equ: eta tilde bound1}
    \norm{\tilde{\eta}}_{\mathcal{L}_\infty}^{[0,T_s]} \leq \zeta_3(\omega)T_s,
\end{equation}
and
\begin{equation}\label{equ: bound on eta tilde 1}
    \norm{\tilde{\eta}_m}_{\mathcal{L}_\infty}^{[0,T_s]} \leq \norm{\tilde{\eta}}_{\mathcal{L}_\infty}^{[0,T_s]} \leq   \zeta_3(\omega)T_s.
\end{equation}
Next we show the bound for $\norm{\tilde{\eta}(t)}$ for $t \geq T_s$. Following the same idea as above, we obtain
\[
\dot{\tilde{\eta}}(t) = -\omega \tilde{\eta}(t) + \omega (\sigma(t) - \hat{\sigma}(t)),
\]
for $t \geq T_s$. Solving the above differential equation, we have
\begin{align*}
  \tilde{\eta}(t)=&\exp(-\omega(t-T_s))\tilde{\eta}(T_s)  \\ &+\textstyle \int_{T_s}^{t} \exp(-\omega(t-\tau))\omega (\sigma(\tau)-\hat{\sigma}(\tau)) \dd \tau.
\end{align*}
Using the result from Proposition~\ref{prop: estimation error bound}, where $\norm{\sigma(t)-\hat{\sigma}(t)}_{\mathcal{L}_\infty}^{[T_s,t]} \leq \zeta_2(A_s)T_s$ and~\eqref{equ: eta tilde bound1}, we have
\begin{align*}
\norm{\tilde{\eta}(t)}_{\mathcal{L}_{\infty}}^{[T_s,t]} \leq& \norm{\exp(-\omega(t-T_s))}\zeta_3(\omega)T_s \\ &+ \norm{(1-\exp(-\omega(t-T_s)))}\zeta_2(A_s)T_s.
\end{align*}
Since $\omega$ is positive, it is straightforward to conclude $\norm{\tilde{\eta}}_{\mathcal{L}_{\infty}}^{[T_s,t]} \leq (\zeta_2(A_s) + \zeta_3(\omega))T_s$. In addition, since $\norm{\tilde{\eta}_m}^{[T_s,t]}_{\mathcal{L}_{\infty}} \leq \norm{\tilde{\eta}}^{[T_s,t]}_{\mathcal{L}_{\infty}}$, we obtain
\begin{equation}\label{equ: bound on eta tilde 2}
\norm{\tilde{\eta}_m}^{[T_s,t]}_{\mathcal{L}_{\infty}} \leq (\zeta_2(A_s) + \zeta_3(\omega))T_s.
\end{equation}
Without loss of generality, considering the case of $t > T_s$, with the bound on $\norm{\Tilde{\eta}_m(t)}$ from~\eqref{equ: bound on eta tilde 1},~\eqref{equ: bound on eta tilde 2} and the fact that $\norm{P_1} \leq c_3 \rho$, we can show that
\begin{align}\label{equ:bound on phi2}
 \norm{\Phi_4} \leq& c_3 \rho\left(\textstyle\int_0^{T_s}e^{-\beta(T_s-\tau)}\norm{\tilde{\eta}_m(\tau)}\dd\tau \right. \nonumber \\ &+ \left. \textstyle\int_{T_s}^{t}e^{-\beta(t-\tau)}\norm{\tilde{\eta}_m(\tau)}\dd\tau\right) \nonumber \\
  \leq& \frac{c_3 \rho}{\beta} (\zeta_3(\omega)+\zeta_2(A_s))T_s=\zeta_4(A_s,w)T_s.
\end{align}
We now derive the bound on $\Phi_5$. Since $\norm{Re_1}=\norm{Re_2}=1$ by the definition of the rotation matrix and $\norm{P_2} \leq  c_4 \rho$, we have
\begin{equation}\label{equ:bound on Phi 3}
    \norm{\Phi_5} \leq  c_4 \rho \Delta_{\sigma_{um}}.
\end{equation}
Finally, substituting in the results from~\eqref{equ:bound on phi1},~\eqref{equ:bound on phi2}, and~\eqref{equ:bound on Phi 3} into~\eqref{equ:finallyapunovinL1proof}, we obtain
\begin{equation}\label{equ:lyapunov for ultimate bound}
V \leq e^{-\beta t}V_0 + c_3 \rho \zeta_1(\omega) + \zeta_4(A_s,\omega)T_s + c_4\rho\Delta_{\sigma_{um}}
\end{equation}
for all $t \in [0,\tau^{\ast}]$. Recall from the assumption that $d\left(x(\tau^{\ast}),x_d(\tau^{\ast})\right) = \rho$. Using~\eqref{equ: lyapunovboundfinal}, we have
\[
\underline{\gamma}\rho^2 < V_0 +  c_3 \rho \zeta_1(\omega) + \zeta_4(A_s,\omega)T_s + c_4\rho\Delta_{\sigma_{um}}.
\]
Moving $T_s$ to the left side, we have
\[
T_s > \frac{\underline{\gamma}\rho^2- c_3 \rho \zeta_1(\omega) - c_4\rho\Delta_{\sigma_{um}}-V_0}{\zeta_4(A_s,\omega)},
\]
which contradicts~\eqref{equ: condition on sampling rate}. Therefore, we conclude $d(x(t),x_d(t))<\rho$ for all $t$. Moreover, since $d(x(t),x_d(t))<\rho$, using~\eqref{equ:lyapunov for ultimate bound} and~\eqref{equ: lyapunovboundfinal}, we can obtain the following uniform ultimate bound, i.e., for any $t_1 > 0$
\begin{equation}
\begin{aligned}
d&(x(t),x_d(t)) < \mu(\omega,T_s,t_1) \triangleq \\
&\sqrt{\frac{e^{-\beta t_1}V_0 + c_3 \rho \zeta_1(\omega) + \zeta_4(A_s,\omega)T_s + c_4\rho\Delta_{\sigma_{um}}}{\underline{\gamma}}}
\end{aligned}
\end{equation}
holds for all $t\geq t_1$.
\end{proof}

\section{RMSEs in the benchmark experiments}\label{apdx: benchmark data}
We supplement the RMSEs in the benchmark experiments in Tables~\ref{tb: added weight benchmark RMSE circle}--\ref{tb: slung weight benchmark RMSE figure 8} (on the next page).
\setlength{\tabcolsep}{2pt} 
\renewcommand{\arraystretch}{1} 
  \captionsetup{
	skip=5pt, position = bottom}
\begin{table*}[h]
	\centering
	\small
	\caption{RMSEs in the added weight benchmark with the circular trajectory.}
\begin{tabular}{ccccccccccccccccccccccccc}
		\toprule[1pt]
		 Speed [m/s] && \multicolumn{3}{c}{0} && \multicolumn{3}{c}{0.5} && \multicolumn{3}{c}{1} && \multicolumn{3}{c}{1.5} && \multicolumn{3}{c}{2} && \multicolumn{3}{c}{2.5}\\
		\midrule
		Weight [g] &&  $\mathcal{L}_1$off &  $\mathcal{L}_1$on & PID &&  $\mathcal{L}_1$off &  $\mathcal{L}_1$on & PID &&  $\mathcal{L}_1$off &  $\mathcal{L}_1$on & PID &&  $\mathcal{L}_1$off &  $\mathcal{L}_1$on & PID &&  $\mathcal{L}_1$off &  $\mathcal{L}_1$on & PID && $\mathcal{L}_1$off &  $\mathcal{L}_1$on & PID \\
		\cmidrule{1-1} \cmidrule{3-5} \cmidrule{7-9} \cmidrule{11-13} \cmidrule{15-17} \cmidrule{19-21} \cmidrule{23-25}
        0 && 0.08 & 0.03 & 0.01 && 0.08 & 0.04 & 0.02 &&0.09 & 0.03 & 0.03 &&0.11 & 0.05 & 0.05 &&0.18 & 0.11 & 0.07 &&0.24 & 0.15 & 0.09 \\ 
100 && 0.05 & 0.03 & 0.01 && 0.06 & 0.04 & 0.02 &&0.06 & 0.03 & 0.03 &&0.09 & 0.05 & 0.06 &&0.16 & 0.12 & 0.10 &&0.24 & 0.18 & 0.13 \\ 
200 && 0.06 & 0.03 & 0.03 && 0.07 & 0.04 & 0.03 &&0.07 & 0.04 & 0.04 &&0.10 & 0.05 & 0.07 &&0.17 & 0.11 & 0.11 &&0.22 & 0.16 & 0.12 \\ 
300 && 0.12 & 0.03 & 0.01 && 0.12 & 0.03 & 0.03 &&0.14 & 0.02 & 0.03 &&0.16 & 0.05 & 0.06 &&0.24 & 0.11 & 0.12 &&0.33 & 0.15 & 0.15 \\ 
400 && 0.16 & 0.03 & 0.01 && 0.16 & 0.04 & 0.03 &&0.10 & 0.03 & 0.03 &&0.19 & 0.05 & 0.08 &&0.28 & 0.12 & 0.11 &&0.35 & 0.16 & 0.19 \\ 
500 && 0.18 & 0.03 & 0.01 && 0.17 & 0.04 & 0.03 &&0.16 & 0.03 & 0.02 &&0.17 & 0.05 & 0.05 &&0.34 & 0.12 & 0.13 &&0.43 & 0.17 & 0.16 \\ 
		\bottomrule[1pt]
	\end{tabular}\label{tb: added weight benchmark RMSE circle}
\end{table*}
\normalsize

\setlength{\tabcolsep}{2pt} 
\renewcommand{\arraystretch}{1} 
  \captionsetup{
	skip=5pt, position = bottom}
\begin{table*}[h]
	\centering
	\small
	\caption{RMSEs in the slung weight benchmark with the circular trajectory. The strikethrough data indicate crash in the end.}
\begin{tabular}{ccccccccccccccccccccccccc}
		\toprule[1pt]
		 Speed [m/s] && \multicolumn{3}{c}{0} && \multicolumn{3}{c}{0.5} && \multicolumn{3}{c}{1} && \multicolumn{3}{c}{1.5} && \multicolumn{3}{c}{2} && \multicolumn{3}{c}{2.5}\\
		\midrule
		Weight [g] &&  $\mathcal{L}_1$off &  $\mathcal{L}_1$on & PID &&  $\mathcal{L}_1$off &  $\mathcal{L}_1$on & PID &&  $\mathcal{L}_1$off &  $\mathcal{L}_1$on & PID &&  $\mathcal{L}_1$off &  $\mathcal{L}_1$on & PID &&  $\mathcal{L}_1$off &  $\mathcal{L}_1$on & PID && $\mathcal{L}_1$off &  $\mathcal{L}_1$on & PID \\
		\cmidrule{1-1} \cmidrule{3-5} \cmidrule{7-9} \cmidrule{11-13} \cmidrule{15-17} \cmidrule{19-21} \cmidrule{23-25}
        0 && 0.08 & 0.03 & 0.01 && 0.08 & 0.04 & 0.02 &&0.09 & 0.03 & 0.03 &&0.11 & 0.05 & 0.05 &&0.18 & 0.11 & 0.07 &&0.24 & 0.15 & 0.09 \\ 
100 && 0.04 & 0.03 & 0.01 && 0.05 & 0.03 & 0.02 &&0.06 & 0.03 & 0.03 &&0.09 & 0.05 & 0.06 &&0.17 & 0.12 & 0.07 &&0.23 & 0.16 & \st{0.12} \\ 
200 && 0.07 & 0.03 & 0.01 && 0.08 & 0.03 & 0.02 &&0.07 & 0.02 & 0.03 &&0.12 & 0.07 & 0.06 &&0.19 & 0.12 & \st{0.14} &&0.25 & 0.16 & \st{0.14} \\ 
300 && 0.11 & 0.03 & 0.01 && 0.12 & 0.03 & 0.02 &&0.11 & 0.02 & 0.03 &&0.16 & 0.08 & 0.06 &&0.24 & 0.13 & \st{0.13} &&0.33 & 0.17 & \st{0.18} \\ 
400 && 0.15 & 0.03 & 0.01 && 0.15 & 0.03 & 0.02 &&0.13 & 0.02 & 0.04 &&0.15 & 0.05 & 0.07 &&0.29 & 0.14 & \st{0.16} &&0.40 & 0.18 & \st{0.30} \\ 
500 && 0.20 & 0.03 & 0.02 && 0.19 & 0.03 & 0.02 &&0.18 & 0.03 & 0.06 &&0.18 & 0.05 & 0.10 &&0.38 & 0.13 & \st{0.20} &&0.56 & 0.17 & \st{0.36} \\ 
		\bottomrule[1pt]
	\end{tabular}\label{tb: slung weight benchmark RMSE circle}
\end{table*}
\normalsize

\setlength{\tabcolsep}{2pt} 
\renewcommand{\arraystretch}{1} 
  \captionsetup{
	skip=5pt, position = bottom}
\begin{table*}[h]
	\centering
	\small
	\caption{RMSEs in the added weight benchmark with figure 8 trajectory.}
\begin{tabular}{ccccccccccccccccc}
		\toprule[1pt]
		 Speed [m/s] && \multicolumn{3}{c}{0} && \multicolumn{3}{c}{1} && \multicolumn{3}{c}{2} && \multicolumn{3}{c}{3} \\
		\midrule
		Weight [g] &&  $\mathcal{L}_1$off &  $\mathcal{L}_1$on & PID &&  $\mathcal{L}_1$off &  $\mathcal{L}_1$on & PID &&  $\mathcal{L}_1$off &  $\mathcal{L}_1$on & PID &&  $\mathcal{L}_1$off &  $\mathcal{L}_1$on & PID  \\
		\cmidrule{1-1} \cmidrule{3-5} \cmidrule{7-9} \cmidrule{11-13} \cmidrule{15-17} 
        0 && 0.08 & 0.03 & 0.01 && 0.08 & 0.04 & 0.02 &&0.09 & 0.05 & 0.03 &&0.11 & 0.07 & 0.06 \\ 
100 && 0.05 & 0.03 & 0.01 && 0.06 & 0.04 & 0.02 &&0.07 & 0.04 & 0.03 &&0.09 & 0.06 & 0.06 \\ 
200 && 0.06 & 0.03 & 0.01 && 0.10 & 0.04 & 0.01 &&0.10 & 0.05 & 0.05 &&0.10 & 0.07 & 0.08 \\ 
300 && 0.12 & 0.03 & 0.01 && 0.13 & 0.03 & 0.02 &&0.13 & 0.04 & 0.04 &&0.13 & 0.07 & 0.08 \\ 
400 && 0.16 & 0.04 & 0.01 && 0.12 & 0.04 & 0.02 &&0.17 & 0.05 & 0.05 &&0.16 & 0.07 & 0.10 \\ 
500 && 0.19 & 0.03 & 0.01 && 0.17 & 0.04 & 0.03 &&0.22 & 0.04 & 0.05 &&0.22 & 0.07 & 0.11 \\  
		\bottomrule[1pt]
	\end{tabular}\label{tb: added weight benchmark RMSE figure 8}
\end{table*}
\normalsize

\setlength{\tabcolsep}{2pt} 
\renewcommand{\arraystretch}{1} 
  \captionsetup{
	skip=5pt, position = bottom}
\begin{table*}[h]
	\centering
	\small
	\caption{RMSEs in the slung weight benchmark with figure 8 trajectory.}
\begin{tabular}{ccccccccccccccccc}
		\toprule[1pt]
		 Speed [m/s] && \multicolumn{3}{c}{0} && \multicolumn{3}{c}{1} && \multicolumn{3}{c}{2} && \multicolumn{3}{c}{3} \\
		\midrule
		Weight [g] &&  $\mathcal{L}_1$off &  $\mathcal{L}_1$on & PID &&  $\mathcal{L}_1$off &  $\mathcal{L}_1$on & PID &&  $\mathcal{L}_1$off &  $\mathcal{L}_1$on & PID &&  $\mathcal{L}_1$off &  $\mathcal{L}_1$on & PID  \\
		\cmidrule{1-1} \cmidrule{3-5} \cmidrule{7-9} \cmidrule{11-13} \cmidrule{15-17} 
        0 && 0.08 & 0.03 & 0.01 && 0.08 & 0.04 & 0.02 &&0.09 & 0.05 & 0.03 &&0.11 & 0.07 & 0.06 \\ 
100 && 0.04 & 0.03 & 0.01 && 0.06 & 0.04 & 0.02 &&0.07 & 0.04 & 0.04 &&0.09 & 0.07 & 0.07 \\ 
200 && 0.07 & 0.03 & 0.02 && 0.07 & 0.04 & 0.02 &&0.07 & 0.05 & 0.04 &&0.10 & 0.07 & 0.08 \\ 
300 && 0.11 & 0.03 & 0.01 && 0.08 & 0.04 & 0.04 &&0.12 & 0.05 & 0.04 &&0.13 & 0.08 & 0.09 \\ 
400 && 0.15 & 0.03 & 0.02 && 0.13 & 0.04 & 0.03 &&0.14 & 0.04 & 0.04 &&0.19 & 0.07 & 0.09 \\ 
500 && 0.20 & 0.03 & 0.01 && 0.17 & 0.04 & 0.02 &&0.18 & 0.04 & 0.04 &&0.22 & 0.07 & 0.12 \\ 
		\bottomrule[1pt]
	\end{tabular}\label{tb: slung weight benchmark RMSE figure 8}
\end{table*}
\normalsize
\end{document}